\documentclass[11pt,letterpaper]{article}
\usepackage[T1]{fontenc}
\usepackage{lmodern}
\usepackage[margin=1in]{geometry}
\usepackage{amsmath,amssymb,amsthm,mathtools}
\usepackage{microtype}
\usepackage{xcolor}
\usepackage{graphicx}
\usepackage{tikz}
\usetikzlibrary{arrows.meta,positioning,calc,fit,backgrounds}
\usepackage{booktabs,array}
\usepackage[numbers,sort&compress]{natbib}
\usepackage[colorlinks=true,linkcolor=blue!45!black,citecolor=blue!45!black,urlcolor=blue!45!black]{hyperref}
\hypersetup{
 pdftitle={Geometric R\'enyi divergence and RLD Fisher information under quantum processes without definite causal order},
 pdfauthor={Zixin Huang and Mark M. Wilde}
}
\usepackage{doi}
\usepackage{bookmark}
\usepackage{placeins}
\usepackage{needspace}
\numberwithin{equation}{section}
\newtheorem{theorem}{Theorem}[section]
\newtheorem{lemma}[theorem]{Lemma}
\newtheorem{proposition}[theorem]{Proposition}
\newtheorem{corollary}[theorem]{Corollary}
\theoremstyle{definition}
\newtheorem{definition}[theorem]{Definition}
\newtheorem{example}[theorem]{Example}

\DeclareMathOperator{\Tr}{Tr}
\DeclareMathOperator{\id}{id}
\DeclareMathOperator{\supp}{supp}

\newcommand{\cN}{\mathcal N}
\newcommand{\cM}{\mathcal M}
\newcommand{\cK}{\mathcal K}
\newcommand{\cC}{\mathcal C}
\newcommand{\cW}{\mathcal W}
\newcommand{\cT}{\mathcal T}
\newcommand{\cF}{\mathfrak F}
\newcommand{\cL}{\mathcal L}
\newcommand{\cD}{\mathcal D}
\newcommand{\Dh}{\widehat D}
\newcommand{\Qh}{\widehat Q}
\newcommand{\IR}{\widehat I_F}

\newcommand{\ket}[1]{|#1\rangle}
\newcommand{\bra}[1]{\langle#1|}

\newcommand{\norm}[1]{\left\|#1\right\|}
\newcommand{\normket}[1]{\left\|#1\right\|}

\newcommand{\Ch}{\operatorname{Ch}}
\newcommand{\ind}{\mathrm{gen}}
\newcommand{\err}{\mathrm{err}}
\newcommand{\transpose}{\mathsf T}
\newcommand{\Rains}{\operatorname{Rains}}
\allowdisplaybreaks[1]
\title{Geometric R\'enyi divergence and RLD Fisher information under quantum processes
without definite causal order}
\author{Zixin Huang$^{1}$ \and Mark M. Wilde$^{2}$}
\date{}
\begin{document}
\maketitle
\vspace{-1.6em}
\begin{center}
\small
\textit{$^{1}$School of Science, STEM College, RMIT University,\\
Melbourne, VIC 3000, Australia}\\[0.4em]
\textit{$^{2}$School of Electrical and Computer Engineering, Cornell University,\\
Ithaca, New York 14850, USA}
\end{center}
\vspace{0.3em}
\begin{abstract}
Quantum experiments without a definite causal order offer new ways to
distinguish devices, estimate physical parameters, and communicate. We
prove a common limit on these tasks: the geometric R\'enyi divergence
at the output of a deterministic completely positive process cannot
exceed the sum of the divergences of its individual channel uses.
The inequality holds for every R\'enyi order in $(0,1)\cup(1,2]$ and
any number of independently replaceable channels. Parallel experiments
with separate reference-assisted inputs approach the bound, giving
an exact equality after optimization over processes. We also prove
an additive bound on the right logarithmic derivative Fisher
information of parameter-dependent channels under every fixed
process in this class. A resulting Cram\'er--Rao bound limits locally unbiased estimation
precision without assuming a definite causal order, and an
erasure-probability example attains the bound. The proofs for the geometric
R\'enyi divergence and RLD Fisher information follow
from the requirement that every independent choice of channels
produce a normalized output. We show how this requirement controls
positive local operations, and how small channel perturbations
convert the divergence inequality into a Fisher-information bound,
including for singular channels. Further applications give finite-use
bounds on channel discrimination and exclusion, together with
communication and private-state-generation converses when the
processes preserve the relevant comparison conditions.
\end{abstract}

\begingroup
\small
\setlength{\baselineskip}{11.4pt}
\tableofcontents
\endgroup

\section{Introduction}\label{sec:introduction}

\subsection{Background and motivation}
How much distinguishability can a quantum experiment extract from several
channel uses when their causal order is not specified in advance?
This question appears in channel discrimination, where the experimenter
tries to identify a device, and in communication converses, where a
successful code must distinguish an actual channel from a suitably chosen
comparison channel. A bound that holds for a large class of experiments
can therefore constrain several operational tasks at once. A related
local question asks how accurately the experiment can estimate a
continuous channel parameter. Because Fisher information describes
the distinguishability of nearby parameter values, a suitable
divergence inequality can also bound estimation precision.

For experiments with a definite causal order, a useful approach is to
bound the increase of a divergence after each channel use. Fang and
Fawzi proved a chain rule for the geometric R\'enyi divergence for
$1<\alpha\leq2$ and used it to obtain channel-capacity
converses~\cite[Lemmas 7--8]{FangFawzi2021}. Katariya and Wilde established the
corresponding chain rule for $0<\alpha<1$ and developed applications to
channel estimation and discrimination
\cite[Propositions 45 and 47]{KatariyaWilde2021}. To state the
sequential bound, let $\cN_1,\cM_1\colon A\to B$ and
$\cN_2,\cM_2\colon B\to C$ be quantum channels. Under the first
hypothesis, the input is sent through $\cN_1$ and then $\cN_2$;
under the second, it is sent through $\cM_1$ and then $\cM_2$.
The symbol $\circ$ denotes this sequential composition, with the
rightmost channel acting first. For every
$\alpha\in(0,1)\cup(1,2]$, the bound is
\begin{equation}\label{eq:intro-sequential}
\Dh_\alpha(\cN_2\circ\cN_1\|\cM_2\circ\cM_1)
\leq
\Dh_\alpha(\cN_1\|\cM_1)+\Dh_\alpha(\cN_2\|\cM_2).
\end{equation}
Here $\Dh_\alpha(\cN\|\cM)$ compares the two channels using the
same input state, which may be entangled with a retained reference
system, and optimizes over that input; see
\eqref{eq:channel-divergence}. This is the geometric R\'enyi
instance of the generalized channel divergence introduced by
Leditzky, Kaur, Datta, and Wilde
\cite[Definition II.2]{LeditzkyKaurDattaWilde2018}.
The divergence is especially
useful because it has an explicit expression in terms of Choi operators
and is additive under tensor products.

A sequence of intermediate states is unavailable in a general process
without definite causal order. Supermaps on product channels and the
process-matrix framework provide mathematical descriptions of such
processes~\cite{ChiribellaDArianoPerinotti2008,ChiribellaEtAl2013,OreshkovCostaBrukner2012}. They include
the quantum switch, in which a control system coherently determines the
order of two channel uses. General strategies can improve finite-use
channel-discrimination performance~\cite{BavarescoMuraoQuintino2021}. Therefore a
sequential chain rule cannot simply be applied after choosing an order
that the process does not possess.

Our approach replaces a sequence of intermediate states by a single
normalization requirement: every independent choice of channels for
the slots must produce a quantum channel, or a normalized state when
the experiment has no external input. We represent the process by a
completely positive linear action on the slots' Choi operators and
call it deterministic when this normalization holds.
Section~\ref{sec:processes} gives the precise definition and examples.
The results below show that these two properties already control the
total distinguishability and the RLD Fisher information available
from the channel uses.

\subsection{Summary of results}
We prove in Theorem~\ref{thm:main} that every deterministic completely
positive $n$-slot process $\cW$ satisfies
\begin{equation}\label{eq:intro-main}
\Dh_\alpha\!\left(
 \cW(\cN_1,\ldots,\cN_n)\middle\|
 \cW(\cM_1,\ldots,\cM_n)\right)
\leq\sum_{i=1}^n\Dh_\alpha(\cN_i\|\cM_i),
\qquad
\alpha\in(0,1)\cup(1,2].
\end{equation}
The notation $\cW(\cN_1,\ldots,\cN_n)$ means that each local channel
is inserted into its designated slot. Equivalently, its Choi operator
is obtained by applying one completely positive linear transformation
to $\bigotimes_iJ_{\cN_i}$. The output of $\cW$ can be a channel or a state.
There is no assumption about the order in which the slots are used.

The proof of \eqref{eq:intro-main} rests on a normalization statement that does not mention
divergences. Let $f$ be a positive linear functional normalized to one
on every product of channel Choi operators. For arbitrary positive
semidefinite $X_i$ on $A_iB_i$, Theorem~\ref{thm:normalization} states that
\begin{equation}\label{eq:intro-normalization}
\prod_i\lambda_{\min}(\Tr_{B_i}X_i)
\leq f\!\left(\bigotimes_iX_i\right)
\leq\prod_i\norm{\Tr_{B_i}X_i}.
\end{equation}
The upper bound follows by completing a trace-nonincreasing operation
to a channel. The lower bound requires an additional idea. We normalize
each $X_i$ by its input effect, construct an operation with the inverse
input effect, and use Cauchy--Schwarz to relate their process weights.
This converts the lower bound into the already proved upper bound.
Section~\ref{sec:normalization} gives the construction and its motivation
in detail.

Equation~\eqref{eq:intro-main} implies that the supremum of the output
geometric R\'enyi divergence over all such processes is exactly the sum
of the individual channel divergences
(Corollary~\ref{cor:collapse}). Parallel product probes approach this
value. This equality concerns a divergence, rather than the complete
finite-use discrimination region. We derive binary error constraints,
a strong-converse bound, a symmetric-error bound, and a channel-exclusion
converse in Section~\ref{sec:discrimination}.

For communication tasks, an additional preservation condition is
essential: replacing the resource channels by comparison channels must
produce an output on which the target test has small acceptance
probability. Section~\ref{sec:capacities} makes this condition explicit
and derives entanglement-generation, private-state-generation, and
classical-communication converses. The result does not assume that
every process preserves a selected free class.

A second principal result of our paper concerns quantum estimation.
For channel families $\cN_{i,\theta}$ depending on a single real
parameter and a process $\cW$ independent of that parameter,
Theorem~\ref{thm:rld-process} establishes
\begin{equation}\label{eq:intro-rld}
\IR\!\left(\theta_0;
 \cW(\cN_{1,\theta},\ldots,\cN_{n,\theta})\right)
\leq\sum_{i=1}^n\IR(\theta_0;\cN_{i,\theta}),
\end{equation}
where $\IR$ is the channel right logarithmic derivative (RLD)
Fisher information defined in
\eqref{eq:rld-channel}. The channel families need only be
differentiable at the parameter value $\theta_0$.
The proof applies the order-two divergence inequality to affine
channel paths that preserve each channel's value and derivative.
This gives a pointwise result even for singular Choi operators,
without exchanging an optimization and a local limit.
Corollary~\ref{cor:rld-cramer-rao} then bounds the mean-square error
of every locally unbiased estimator from below by the reciprocal of the
sum in \eqref{eq:intro-rld}. This extends the sequential RLD
estimation converse of Katariya and Wilde
\cite[Conclusion 41]{KatariyaWilde2021} to the process class
considered here. The erasure-probability example in
\eqref{eq:rld-erasure-error}--\eqref{eq:rld-erasure-attainment}
attains the bound using independent parallel measurements.
Our finite-use RLD result permits nonidentical slots and channel
outputs. Section~\ref{sec:estimation} compares it with existing
optimizations and symmetric-logarithmic-derivative bounds for
estimation with general processes~\cite{LiuHuYuanYang2023,WeiLiPang2026}.

The scope of these contributions differs from recent asymptotic
channel-discrimination results. Zhu and Wang give a channel Stein
theorem and a strong-converse-exponent characterization for general
positive testers, with support hypotheses for the exponent
characterization~\cite[Definition 2.3, Theorem 2.4, and
Corollary 6.1]{ZhuWang2026}. Their finite-use comparison uses
regularized sandwiched R\'enyi divergences and a polynomial factor.
Our result is an exact finite-use inequality for geometric divergences,
allows nonidentical slots and channel outputs, and contains no
additional factor. It also supplies the process inequality needed to
extend the geometric channel-exclusion converse of
Ji, Mishra, Mosonyi, and Wilde~\cite{JiEtAl2026} to the positive
process-tester class.

\subsection{Paper organization}
Section~\ref{sec:preliminaries} fixes the notation and presents the
geometric perspective and channel formula. Section~\ref{sec:processes}
defines the process model and explains its normalization.
Section~\ref{sec:normalization} proves the normalization bound in
Theorem~\ref{thm:normalization}.
Section~\ref{sec:main} establishes the main theorem and its extensions.
Sections~\ref{sec:discrimination} and~\ref{sec:capacities} develop
applications. Section~\ref{sec:examples} gives exact examples and
explains limitations of the assumptions. Section~\ref{sec:estimation}
proves the RLD Fisher-information inequality and its Cram\'er--Rao
application. Section~\ref{sec:conclusion} concludes. Appendix~\ref{app:choi} provides details of the Choi formula
and the approximation of optimizing probes. Readers interested
in estimation can proceed from Section~\ref{sec:main} directly
to Section~\ref{sec:estimation}, which introduces its own
estimation preliminaries.

\section{Notation and geometric R\'enyi divergence}\label{sec:preliminaries}
The proof of \eqref{eq:intro-main} needs two descriptions of a channel: its Choi operator and
the distinguishability of its output states. We first fix the operator
notation and identify the input effect that controls the weight of a
positive operation. We then introduce the geometric perspective and
its behavior under completely positive transformations. These facts
will connect the normalization bound in
Theorem~\ref{thm:normalization} to channel distinguishability.

\subsection{States, channels, and input effects}
Throughout we take all Hilbert spaces to be finite-dimensional, and $d_A\coloneqq\dim A$.
For a system~$A$,
let $\cL(A)$ denote the vector space of linear operators on its Hilbert space,
and let $\cD(A)$ denote its set of density operators. The identity is denoted
by $I$, with a system subscript when useful. We write
$\norm{X}$ for the operator norm and $\norm{X}_1$ for the trace norm.
For a vector, $\normket{\ket{v}}$ denotes its Hilbert-space norm.
A projector is a self-adjoint idempotent.
All information quantities and rates use base-two logarithms.
The notation $X\ll Y$ means $\supp X\subseteq\supp Y$.

A quantum channel $\cN\colon A\to B$ is a completely positive,
trace-preserving linear transformation. More generally, a completely
positive transformation $\mathcal X\colon A\to B$ need not preserve
trace. Its adjoint is specified by
$\Tr[Y\mathcal X(X)]=\Tr[\mathcal X^\dagger(Y)X]$.
The positive operator $\mathcal X^\dagger(I_B)$ is its \emph{input
effect}: for an input state $\rho$,
\begin{equation}\label{eq:input-effect}
\Tr[\mathcal X(\rho)]
=\Tr[\rho\,\mathcal X^\dagger(I_B)].
\end{equation}
For a trace-nonincreasing operation, this is its occurrence probability.
For a general completely positive transformation, it is a nonnegative
weight.

Fix an orthonormal basis $\{\ket{j}_A\}_{j=1}^{d_A}$ and a copy
$A'\simeq A$ with the matching basis $\{\ket{j}_{A'}\}_{j=1}^{d_A}$.
We identify the input of $\mathcal X$ with $A'$ when defining the
unnormalized Choi operator
\begin{equation}\label{eq:choi}
J_{\mathcal X}
\coloneqq
(\id_A\otimes\mathcal X_{A'\to B})
(\ket{\Gamma}\bra{\Gamma}_{AA'}),
\qquad
\ket{\Gamma}_{AA'}\coloneqq\sum_{j=1}^{d_A}\ket{j}_A\ket{j}_{A'}.
\end{equation}
Thus $J_{\mathcal X}\geq0$ exactly when $\mathcal X$ is completely
positive, and $\mathcal X$ is a channel exactly when
$\Tr_BJ_{\mathcal X}=I_A$ also holds. We denote the channel Choi set by
\begin{equation}\label{eq:channel-choi-set}
\Ch(A\to B)
\coloneqq\{J\in\cL(AB):J\geq0,\ \Tr_BJ=I_A\}.
\end{equation}
The input effect satisfies
\begin{equation}\label{eq:effect-transpose}
\mathcal X^\dagger(I_B)=(\Tr_BJ_{\mathcal X})^{\transpose},
\end{equation}
where the transpose is taken in the fixed input basis
$\{\ket{j}_A\}_{j=1}^{d_A}$ used in \eqref{eq:choi}; the input
copy $A'$ is identified with $A$ by matching these basis vectors. Consequently, the minimum
and maximum eigenvalues of $\Tr_BJ_{\mathcal X}$ bound the weight in
\eqref{eq:input-effect} for every input state.

\subsection{The operator perspective}
For classical probability distributions, the quantity inside a
R\'enyi divergence is equal to a moment of a likelihood ratio:
$\sum_jy_j(x_j/y_j)^\alpha$ when $y_j>0$.
The following operator expression $P_\alpha(X,Y)$ replaces the scalar ratio $x_j/y_j$ by
$Y^{-1/2}XY^{-1/2}$ and places the result back on the original
space with two factors of $Y^{1/2}$.
For positive definite $Y$ and positive semidefinite $X$, define
\begin{equation}\label{eq:perspective}
P_\alpha(X,Y)
\coloneqq
Y^{1/2}(Y^{-1/2}XY^{-1/2})^\alpha Y^{1/2}.
\end{equation}
If $X$ and $Y$ commute, a common eigenbasis gives
$P_\alpha(X,Y)=X^\alpha Y^{1-\alpha}$ and
$\Tr P_\alpha(X,Y)=\sum_jx_j^\alpha y_j^{1-\alpha}$.
Thus its trace agrees with the classical R\'enyi moment.
Its useful noncommutative property is its behavior under
completely positive transformations in Lemma~\ref{lem:perspective}.
For $0<\alpha<1$, this is the weighted operator geometric mean with
the order of the arguments chosen to match relative entropy. At
singular arguments, we use its continuous extension from above:
\begin{equation}\label{eq:mean-extension}
P_\alpha(X,Y)
\coloneqq
\lim_{\delta\downarrow0}
P_\alpha(X+\delta I,Y+\delta I),
\qquad 0<\alpha<1.
\end{equation}
This extension exists in finite dimensions~\cite{KuboAndo1980}.
Continuity from above refers to these decreasing positive
regularizations; it does not assert continuity along every sequence
approaching a singular pair.
It is important not to
replace the inverse in \eqref{eq:perspective} by a generalized inverse
without checking support inclusion. When $X\not\ll Y$, that
substitution need not agree with \eqref{eq:mean-extension};
see~\cite[Proposition 44]{KatariyaWilde2021}.
For $1<\alpha\leq2$ and $X\ll Y$, Eq.~\eqref{eq:perspective} is evaluated
on $\supp Y$.

For a state $\rho$ and a positive semidefinite operator $\sigma$,
the geometric R\'enyi divergence is defined as
\begin{equation}\label{eq:state-divergence}
\Dh_\alpha(\rho\|\sigma)
\coloneqq\frac{1}{\alpha-1}
\log\Tr[P_\alpha(\rho,\sigma)].
\end{equation}
For $\alpha>1$, its value is $+\infty$ when $\rho\not\ll\sigma$.
For $\alpha<1$, a zero trace in \eqref{eq:state-divergence} also
gives $+\infty$. The same definition applies to nonnormalized
positive comparison operators, without renormalization. This is
the maximal quantum R\'enyi divergence
associated with $t^\alpha$~\cite{Matsumoto2018}.
We call $\Tr P_\alpha(\rho,\sigma)$ the geometric R\'enyi moment.

The following perspective properties connect the preceding definition
to quantum processes.  Their essential feature is that the completely
positive transformation need not preserve either trace or the identity.
We will use the operator Jensen inequality: if $\mathcal U$ is unital
and completely positive, and $g$ is operator convex on an interval
containing the spectrum of a Hermitian operator $T$, then
\begin{equation}\label{eq:operator-jensen}
g(\mathcal U(T))\leq\mathcal U(g(T)).
\end{equation}
For operator-concave $g$, the inequality reverses
\cite[Theorem~2.1]{HansenPedersen2003}.
In finite dimensions, such a transformation has a Kraus representation
$\mathcal U(Z)=\sum_j V_j^\dagger ZV_j$ satisfying
$\sum_jV_j^\dagger V_j=I$. The latter condition is exactly unitality.
Here operator convexity means that the scalar convexity inequality
continues to hold in positive-semidefinite order for matrices of every
finite dimension.  The function $t\mapsto t^\alpha$ is operator concave
on $[0,\infty)$ for $0<\alpha<1$ and operator convex there for
$1<\alpha\leq2$.  These are the two ranges used in the Jensen proof of state data
processing in \cite[Theorem~73]{KatariyaWilde2021}.

\begin{lemma}[Perspective under completely positive transformations]
\label{lem:perspective}
Let $\Psi\colon\cL(A)\to\cL(B)$ be a completely positive linear
transformation between finite-dimensional matrix algebras, and let
$X,Y\geq0$ act on $A$.  Use the perspective $P_\alpha$ in
\eqref{eq:perspective}, with the extension \eqref{eq:mean-extension}
when $0<\alpha<1$.  For $1<\alpha\leq2$ and $X\ll Y$, evaluate
\eqref{eq:perspective} on $\supp Y$ and extend the result by zero on
its orthogonal complement; in particular, $P_\alpha(0,0)=0$.
For $0<\alpha<1$,
\begin{equation}\label{eq:jensen-lower}
P_\alpha(\Psi(X),\Psi(Y))
\geq\Psi(P_\alpha(X,Y)).
\end{equation}
For $1<\alpha\leq2$ and $X\ll Y$, the output pair also satisfies
$\Psi(X)\ll\Psi(Y)$, and
\begin{equation}\label{eq:jensen-upper}
P_\alpha(\Psi(X),\Psi(Y))
\leq\Psi(P_\alpha(X,Y)).
\end{equation}
For every integer $n\geq1$, the perspective also satisfies
\begin{equation}\label{eq:perspective-tensor}
P_\alpha\!\left(\bigotimes_{i=1}^nX_i,
                    \bigotimes_{i=1}^nY_i\right)
=\bigotimes_{i=1}^nP_\alpha(X_i,Y_i).
\end{equation}
Here each $X_i,Y_i$ is positive semidefinite on a finite-dimensional
space.  The range $0<\alpha<1$ permits arbitrary such pairs; the
range $1<\alpha\leq2$ requires $X_i\ll Y_i$ for every $i$.
\end{lemma}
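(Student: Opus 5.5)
We first treat the case where $Y$ and $\Psi(Y)$ are positive definite, and reduce the transformer inequality to the operator Jensen inequality \eqref{eq:operator-jensen}. Set $T\coloneqq Y^{-1/2}XY^{-1/2}\geq0$ and define the completely positive map
\begin{equation*}
\mathcal U(Z)\coloneqq \Psi(Y)^{-1/2}\,\Psi\!\left(Y^{1/2}ZY^{1/2}\right)\Psi(Y)^{-1/2}.
\end{equation*}
It is unital, because $\mathcal U(I)=\Psi(Y)^{-1/2}\Psi(Y)\Psi(Y)^{-1/2}=I$. It also satisfies $\mathcal U(T)=\Psi(Y)^{-1/2}\Psi(X)\Psi(Y)^{-1/2}$. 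Applying \eqref{eq:operator-jensen} with $g(t)=t^\alpha$ gives $\mathcal U(T)^\alpha\geq\mathcal U(T^\alpha)$ for $0<\alpha<1$ (operator concavity) and $\mathcal U(T)^\alpha\leq\mathcal U(T^\alpha)$ for $1<\alpha\leq2$ (operator convexity). We then conjugate both sides by $\Psi(Y)^{1/2}$, which preserves the order. The left side becomes $P_\alpha(\Psi(X),\Psi(Y))$ and the right side becomes $\Psi(Y^{1/2}T^\alpha Y^{1/2})=\Psi(P_\alpha(X,Y))$. This yields \eqref{eq:jensen-lower} and \eqref{eq:jensen-upper} in the definite case.

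The main obstacle is removing the definiteness assumptions.

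For $0<\alpha<1$, we would apply the definite case to $X+\delta I$ and $Y+\delta I$, with $\Psi$ replaced by $\Psi_\epsilon\coloneqq\Psi+\epsilon\Tr(\cdot)I_B$. The map $\Psi_\epsilon$ is completely positive, and it sends positive definite inputs to positive definite outputs. Monotonicity of the Kubo--Ando mean in each argument then lets us compare $\Psi_\epsilon(X+\delta I)$ with decreasing positive definite regularizations of $\Psi(X)$. Taking $\epsilon,\delta\downarrow0$ through the decreasing limits in \eqref{eq:mean-extension}, and using that $\Psi$ is continuous, gives \eqref{eq:jensen-lower}. Some care is needed to organize the limits monotonically. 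Alternatively, one could invoke the Kubo--Ando transformer inequality directly, which covers this case.

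For $1<\alpha\leq2$, we first show the support claim. Since $X\ll Y$, we have $X\leq cY$ for some $c>0$, hence $\Psi(X)\leq c\Psi(Y)$, and therefore $\Psi(X)\ll\Psi(Y)$. Next, we restrict everything to the supports. Let $\Pi$ be the projector onto $\supp Y$. Then $X$, $Y$ and $T$ live on $\Pi$, and $\Psi$ may be precomposed with the compression $Z\mapsto\Pi Z\Pi$. On the output side, we compress to $\supp\Psi(Y)$, using the projector $Q$. Defining $\mathcal U$ with the pseudo-inverse of $\Psi(Y)$, it maps the algebra on $\Pi$ into the algebra on $Q$. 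It is unital there, with $\mathcal U(\Pi)=Q$. The range of $\Psi(Y^{1/2}ZY^{1/2})$ lies in $\supp\Psi(Y)$, because $Y^{1/2}ZY^{1/2}\leq\|Z\|Y$. The Jensen argument then runs within these corners. Extending by zero on the complements gives \eqref{eq:jensen-upper}, which completes the convention stated in the lemma.

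Finally, we prove the tensor identity \eqref{eq:perspective-tensor}. For positive definite $Y_i$, we have $(\bigotimes Y_i)^{\pm1/2}=\bigotimes Y_i^{\pm1/2}$. It follows that the ratio $\bigotimes_i Y_i^{-1/2}X_iY_i^{-1/2}$ is a tensor product of commuting-factor operators, and $(\bigotimes T_i)^\alpha=\bigotimes T_i^\alpha$ by the spectral decomposition. Multiplying back by $\bigotimes_i Y_i^{1/2}$ then gives the identity. The remaining cases are handled as follows.
\begin{itemize}
\item For $1<\alpha\leq2$, the support of $\bigotimes Y_i$ is the tensor product of the supports, and $\bigotimes X_i\ll\bigotimes Y_i$. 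So the same computation holds on the supports, and both sides vanish on the complement.
\item For $0<\alpha<1$, the regularization $\bigotimes(X_i+\delta I)$ is not of the form $\bigotimes X_i+\delta' I$. We would therefore use a sandwich argument. We have $\bigotimes X_i+\delta^nI\leq\bigotimes(X_i+\delta I)$, and similarly for $Y$. Joint monotonicity of $P_\alpha$ gives an upper bound by $\bigotimes P_\alpha(X_i+\delta I,Y_i+\delta I)$. For the matching lower bound, we use $\bigotimes(X_i+\delta I)\leq\bigotimes X_i+c\delta I$. Both bounds converge, by the decreasing-limit definition \eqref{eq:mean-extension}, to the two sides of \eqref{eq:perspective-tensor}.
\end{itemize}
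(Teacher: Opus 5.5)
Your proof is correct and follows the paper's argument. It uses the same unital normalization $\mathcal U$ with operator Jensen, restriction to $\supp Y$ and $\supp\Psi(Y)$ above order one, $\delta$-regularization below order one, and the sandwich between $\delta^n I$ and $C\delta I$ regularizations for the tensor identity. The only difference is below order one: you perturb the map to $\Psi+\epsilon\Tr(\cdot)I_B$ so that the output is positive definite, and then an upper comparison with $\Psi(X)+\gamma I,\Psi(Y)+\gamma I$ via monotonicity of the mean suffices. The paper instead restricts to $\supp\Psi(Y)$ and squeezes $\delta\Psi(I_A)$ between $c\delta I_S$ and $C\delta I_S$ on $S=\supp\Psi(I_A)$; both routes work.
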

\begin{proof}
Jensen's inequality requires a unital transformation, whereas $\Psi$
need not be unital.  We obtain the required normalization by conjugating
the input with $Y^{1/2}$ and the output with $\Psi(Y)^{-1/2}$.
After applying Jensen, we undo the output conjugation.  Support
restriction handles singular second arguments above one; below one,
we must instead check the prescribed limit from positive operators.

First suppose that $Y>0$.  Every positive operator $Z$ obeys
$Z\leq c_ZY$ for some finite $c_Z$, and positivity of $\Psi$ gives
$\Psi(Z)\leq c_Z\Psi(Y)$.  Thus every such $\Psi(Z)$ is supported
on $\supp\Psi(Y)$.  If $\Psi(Y)=0$, both sides of the claimed
inequalities vanish.  Otherwise, restrict the output space to
$\supp\Psi(Y)$.  This agrees with the full-space perspective in
both ranges: below one, identity regularization splits over this
support and the common zero complement, whose contribution is
$P_\alpha(\delta I,\delta I)=\delta I$, and the positive second
argument on the support converges by functional calculus.  Define
\begin{equation}\label{eq:normalized-jensen}
\mathcal U(Z)
\coloneqq
\Psi(Y)^{-1/2}\Psi(Y^{1/2}ZY^{1/2})\Psi(Y)^{-1/2}.
\end{equation}
The inverse is taken on this output support.  Conjugation by a fixed
operator is completely positive, so $\mathcal U$ is completely
positive as a composition of completely positive transformations.
Inputting the identity operator $I_A$ gives
\begin{equation}\label{eq:normalized-jensen-unital}
\mathcal U(I_A)
=\Psi(Y)^{-1/2}\Psi(Y)\Psi(Y)^{-1/2}
=I_{\supp\Psi(Y)}.
\end{equation}
Consequently, $\mathcal U$ is unital as a transformation into the
matrix algebra on the restricted output space.

Set $T=Y^{-1/2}XY^{-1/2}\geq0$.  The relative operator at the
output is exactly
\begin{equation}\label{eq:relative-operator-image}
\mathcal U(T)=\Psi(Y)^{-1/2}\Psi(X)\Psi(Y)^{-1/2}.
\end{equation}
Jensen's inequality therefore implies that
$\mathcal U(T)^\alpha\geq\mathcal U(T^\alpha)$ for
$0<\alpha<1$ and the reverse inequality for
$1<\alpha\leq2$.  Conjugation by $\Psi(Y)^{1/2}$ preserves
positive-semidefinite order.  The two conjugated expressions are
\begin{align}
\Psi(Y)^{1/2}\mathcal U(T)^\alpha\Psi(Y)^{1/2}
&=P_\alpha(\Psi(X),\Psi(Y)),
\label{eq:jensen-output-identification}\\
\Psi(Y)^{1/2}\mathcal U(T^\alpha)\Psi(Y)^{1/2}
&=\Psi\!\left(Y^{1/2}T^\alpha Y^{1/2}\right)
=\Psi(P_\alpha(X,Y)).
\label{eq:jensen-input-identification}
\end{align}
In \eqref{eq:jensen-input-identification}, the outer support projectors
can be removed because the positive operator on the right is supported
on $\supp\Psi(Y)$.
This proves \eqref{eq:jensen-lower} and \eqref{eq:jensen-upper} when
$Y>0$, without requiring $X$ to be invertible.

Now let $1<\alpha\leq2$ and $X\ll Y$.  If $Y=0$, then $X=0$
and the assertion follows from the stated convention.  Otherwise,
regard $X$ and $Y$ as operators on $\supp Y$, where $Y$ is positive
definite.  Restricting the domain of $\Psi$ to this matrix algebra
means first extending an operator by zero and then applying $\Psi$;
this remains a completely positive transformation.  The argument just
given applies to that restriction.  Moreover, $X\leq cY$ for some
finite $c$, so $\Psi(X)\leq c\Psi(Y)$ and
$\Psi(X)\ll\Psi(Y)$.  Thus the output perspective is precisely
the supported perspective appearing in \eqref{eq:jensen-upper}.

For $0<\alpha<1$, allow arbitrary singular $X,Y\geq0$ and apply
the positive-definite argument to $X+\delta I_A,Y+\delta I_A$.
The right-hand side converges to $\Psi(P_\alpha(X,Y))$, because
\eqref{eq:mean-extension} defines the input limit and a linear
transformation between finite-dimensional spaces is continuous.
The output regularization adds $\delta\Psi(I_A)$, rather than
$\delta I_B$, so its limit needs a separate check.  All positive
images of $\Psi$ are supported on
$S\coloneqq\supp\Psi(I_A)$: indeed,
$0\leq\Psi(Z)\leq\norm{Z}\Psi(I_A)$ for $Z\geq0$.
If $S$ is zero, the assertion follows at once.  Otherwise,
there exist $0<c\leq C<\infty$ such that, on $S$,
\begin{equation}\label{eq:output-regularization-bounds}
cI_S\leq\Psi(I_A)\leq CI_S.
\end{equation}
The weighted geometric mean is increasing in both arguments
\cite{KuboAndo1980}.  Hence, on $S$,
\begin{align}
P_\alpha(\Psi(X)+c\delta I_S,\Psi(Y)+c\delta I_S)
&\leq
P_\alpha(\Psi(X)+\delta\Psi(I_A),
          \Psi(Y)+\delta\Psi(I_A))
\label{eq:output-regularization-squeeze:line-1}\\
&\leq
P_\alpha(\Psi(X)+C\delta I_S,\Psi(Y)+C\delta I_S).
\label{eq:output-regularization-squeeze}
\end{align}
Both outer expressions tend to
$P_\alpha(\Psi(X),\Psi(Y))$ by \eqref{eq:mean-extension}.
Removing the common zero subspace causes no change: regularization
on that subspace contributes $P_\alpha(\delta I,\delta I)=\delta I$,
which tends to zero.  This proves the required output limit and
establishes \eqref{eq:jensen-lower} for all singular pairs.

For the tensor identity in \eqref{eq:perspective-tensor}, first suppose that each $Y_i>0$ and set
$T_i=Y_i^{-1/2}X_iY_i^{-1/2}$.  Functional calculus in product
eigenbases gives
\begin{equation}\label{eq:product-functional-calculus}
\left(\bigotimes_iY_i\right)^{-1/2}
 =\bigotimes_iY_i^{-1/2},
\qquad
\left(\bigotimes_iT_i\right)^\alpha
 =\bigotimes_iT_i^\alpha.
\end{equation}
Substitution into \eqref{eq:perspective} proves
\eqref{eq:perspective-tensor}; zero eigenvalues of the $T_i$ are
allowed because $\alpha>0$.  Above one, the same calculation on
$\bigotimes_i\supp Y_i$ proves the supported singular case.  If
some $Y_i=0$, then $X_i=0$ and both sides vanish.

Below one, write $X=\bigotimes_iX_i$, $Y=\bigotimes_iY_i$ and
\begin{equation}\label{eq:regularized-tensor-pairs}
X_\delta=\bigotimes_i(X_i+\delta I),
\qquad
Y_\delta=\bigotimes_i(Y_i+\delta I).
\end{equation}
For $0<\delta\leq1$, expanding these products gives a finite
constant $C>0$, independent of $\delta$, such that
\begin{equation}\label{eq:tensor-regularization}
X+\delta^nI\leq X_\delta\leq X+C\delta I,
\qquad
Y+\delta^nI\leq Y_\delta\leq Y+C\delta I.
\end{equation}
The lower bounds retain the term in which every factor is
$\delta I$; all remaining terms are positive.  The upper bounds
use $\delta^k\leq\delta$ for $1\leq k\leq n$ and bound each
fixed positive coefficient by its operator norm times the identity.
Monotonicity now squeezes $P_\alpha(X_\delta,Y_\delta)$ between
$P_\alpha(X+\delta^nI,Y+\delta^nI)$ and
$P_\alpha(X+C\delta I,Y+C\delta I)$.  Both converge to
$P_\alpha(X,Y)$.  On the other side of the tensor identity for
the positive-definite local pairs, each local factor converges by
\eqref{eq:mean-extension}.  Taking these finite-dimensional limits
proves \eqref{eq:perspective-tensor} for arbitrary singular pairs.
\end{proof}

The following data-processing inequality for states is known. It follows
from Matsumoto's maximal $f$-divergence framework
\cite[Lemma 3.1 and Theorem 6.1]{Matsumoto2018}, using the
operator-convex generator $f(t)=t^\alpha$ above one and
$f(t)=-t^\alpha$ below one. It is reproved in
\cite[Theorem 73]{KatariyaWilde2021}. We include its short deduction
from Lemma~\ref{lem:perspective} because both data processing and
the resulting normalization bound are used throughout the paper.

\begin{corollary}[Data processing and normalization]
\label{cor:state-dpi}
Let $\rho\in\cD(A)$, let $\sigma\geq0$ act on $A$, and let
$\mathcal T\colon A\to B$ be a quantum channel between
finite-dimensional systems.  For every
$\alpha\in(0,1)\cup(1,2]$, the geometric R\'enyi divergence in
\eqref{eq:state-divergence} satisfies
\begin{equation}\label{eq:state-data-processing}
\Dh_\alpha(\mathcal T(\rho)\|\mathcal T(\sigma))
\leq\Dh_\alpha(\rho\|\sigma).
\end{equation}
The comparison operator $\sigma$ need not be normalized.  In
particular,
\begin{equation}\label{eq:state-normalization-bound}
\Dh_\alpha(\rho\|\sigma)\geq-\log\Tr\sigma,
\end{equation}
where $-\log0=+\infty$.  Thus the divergence is nonnegative
whenever $\Tr\sigma\leq1$.
\end{corollary}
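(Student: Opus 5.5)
The plan is to apply Lemma~\ref{lem:perspective} with $\Psi=\mathcal T$, take traces, and use trace preservation to move $\mathcal T$ off the input side. This should handle both ranges of $\alpha$ uniformly, up to the sign of $1/(\alpha-1)$.

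\textbf{Case $0<\alpha<1$.} Here \eqref{eq:jensen-lower} holds for arbitrary positive semidefinite pairs, so no support condition is needed. It gives $P_\alpha(\mathcal T(\rho),\mathcal T(\sigma))\geq\mathcal T(P_\alpha(\rho,\sigma))$. Taking the trace, which is monotone in positive-semidefinite order, and using that $\mathcal T$ preserves trace, we obtain $\Tr P_\alpha(\mathcal T(\rho),\mathcal T(\sigma))\geq\Tr P_\alpha(\rho,\sigma)$. Since $1/(\alpha-1)<0$ and $\log$ is increasing, this is exactly \eqref{eq:state-data-processing}. The convention that a zero trace gives $+\infty$ is consistent with this: if the input moment vanishes, the right-hand side is already $+\infty$.

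\textbf{Case $1<\alpha\leq2$.}
\begin{itemize}
\item If $\rho\not\ll\sigma$, the right-hand side is $+\infty$ and there is nothing to prove.
\item Otherwise, Lemma~\ref{lem:perspective} gives $\mathcal T(\rho)\ll\mathcal T(\sigma)$, so the left-hand side is finite-defined. It also gives \eqref{eq:jensen-upper}, and taking the trace yields $\Tr P_\alpha(\mathcal T(\rho),\mathcal T(\sigma))\leq\Tr P_\alpha(\rho,\sigma)$.
\item The prefactor $1/(\alpha-1)$ is now positive, so the inequality survives the logarithm. The only care needed is that both moments are positive. This holds because $\rho\neq0$ forces $P_\alpha(\rho,\sigma)\neq0$ on $\supp\sigma$ whenever $\rho\ll\sigma$.
\end{itemize}

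\textbf{Normalization bound.} For \eqref{eq:state-normalization-bound}, specialize to the trace channel $\mathcal T=\Tr\colon A\to\mathbb C$ (one-dimensional output). This sends $\rho\mapsto1$ and $\sigma\mapsto\Tr\sigma$, and for scalars $P_\alpha(1,s)=s^{1-\alpha}$. Hence the output divergence equals $\frac{1}{\alpha-1}\log(\Tr\sigma)^{1-\alpha}=-\log\Tr\sigma$, and data processing gives the bound.

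\textbf{Edge case $\Tr\sigma=0$.} This is the main obstacle, since it must be checked directly rather than read off from data processing. In this case $\sigma=0$.
\begin{itemize}
\item For $\alpha>1$, we have $\rho\not\ll0$, so the divergence is $+\infty$.
\item For $\alpha<1$, the regularization gives $P_\alpha(\rho,0)=\lim_{\delta\downarrow0}(\rho+\delta I)^{1/2}\bigl((\rho+\delta I)^{-1/2}\delta(\rho+\delta I)^{-1/2}\bigr)^{1-\alpha}(\rho+\delta I)^{1/2}$ by symmetry of the mean. Commutation reduces this to $\lim(\rho+\delta I)^{\alpha}\delta^{1-\alpha}=0$, so the trace is zero and the divergence is $+\infty$.
\end{itemize}
Finally, nonnegativity when $\Tr\sigma\leq1$ is immediate, since then $-\log\Tr\sigma\geq0$.
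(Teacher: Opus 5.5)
Your proof is correct and takes essentially the same route as the paper. Both apply Lemma~\ref{lem:perspective} with $\Psi=\mathcal T$, use trace preservation, track the sign of $1/(\alpha-1)$, and then specialize to the trace channel for \eqref{eq:state-normalization-bound}, settling the case $\Tr\sigma=0$ via $\sigma=0$ and the stated conventions. In that edge case you do not need symmetry of the mean: the regularized second argument is $\delta I$, so directly $P_\alpha(\rho+\delta I,\delta I)=\delta^{1-\alpha}(\rho+\delta I)^\alpha\to0$.
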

\begin{proof}
If the input divergence is infinite, \eqref{eq:state-data-processing}
holds in the extended-real sense.  Otherwise the input moment is
strictly positive, and above one the required support inclusion holds.
Lemma~\ref{lem:perspective} and trace preservation give
\begin{align}
\Tr P_\alpha(\mathcal T(\rho),\mathcal T(\sigma))
&\geq\Tr P_\alpha(\rho,\sigma),&&0<\alpha<1,
\label{eq:state-moment-data-processing-lower}\\
\Tr P_\alpha(\mathcal T(\rho),\mathcal T(\sigma))
&\leq\Tr P_\alpha(\rho,\sigma),&&1<\alpha\leq2.
\label{eq:state-moment-data-processing-upper}
\end{align}
The output moment is also strictly positive: below one this follows
from the first inequality, and above one it follows because
$\mathcal T(\rho)$ is a nonzero state supported on
$\mathcal T(\sigma)$.  Taking logarithms and dividing by
$\alpha-1$ gives \eqref{eq:state-data-processing}; the negative
denominator below one reverses the moment inequality.

Apply \eqref{eq:state-data-processing} to the trace channel, whose
one-dimensional outputs are $1$ and $s=\Tr\sigma$.
For $s>0$ their divergence is
$(\alpha-1)^{-1}\log s^{1-\alpha}=-\log s$.
For $s=0$, positivity gives $\sigma=0$, and the divergence is
$+\infty$ under the stated conventions.  This proves
\eqref{eq:state-normalization-bound} in both cases.
\end{proof}

\subsection{Channel divergence and its Choi formula}
The generalized channel divergence compares two transformations on
the same input, allowing an arbitrary retained reference system
\cite[Definition~II.2]{LeditzkyKaurDattaWilde2018}.
Let $\cN\colon A\to B$ be a quantum channel and
$\cM\colon A\to B$ a completely positive transformation, with
$A,B$ finite-dimensional.  For
$\alpha\in(0,1)\cup(1,2]$, define
\begin{equation}\label{eq:channel-divergence}
\Dh_\alpha(\cN\|\cM)
\coloneqq
\sup\left\{
\begin{aligned}
&\Dh_\alpha\!\left(
 (\id_R\otimes\cN)(\rho_{RA})
 \middle\|
 (\id_R\otimes\cM)(\rho_{RA})\right):\\
&\rho_{RA}\in\cD(RA),\ R\text{ finite-dimensional}
\end{aligned}
\right\}.
\end{equation}
Most results use channels in both arguments.  If $\cM$ is trace
nonincreasing, the second output is subnormalized; for a general
completely positive $\cM$, it is simply positive.  Both cases are
covered by Corollary~\ref{cor:state-dpi}.

The definition has the equivalent pure-state form
\begin{equation}\label{eq:channel-pure-reduction}
\Dh_\alpha(\cN\|\cM)
=\sup\left\{
\begin{aligned}
&\Dh_\alpha\!\left(
 (\id_R\otimes\cN)(\psi_{RA})
 \middle\|
 (\id_R\otimes\cM)(\psi_{RA})\right):\\
&\psi_{RA}\in\cD(RA)\text{ pure},\ R\simeq A
\end{aligned}
\right\}.
\end{equation}
To prove this reduction, take an arbitrary mixed probe $\rho_{RA}$
from \eqref{eq:channel-divergence} and purify it to
$\ket\varphi_{REA}$.  Across the partition $RE:A$, its Schmidt
rank is at most $d_A$.  Thus there are a system $R_0\simeq A$,
a pure state $\psi_{R_0A}$, and an isometry $V\colon R_0\to RE$
such that
$\ket\varphi_{REA}=(V\otimes I_A)\ket\psi_{R_0A}$.
Here the purifying system $E$ can be enlarged by an unused factor
so that the isometry extends from the Schmidt support to all of
$R_0$.  The channel
$\mathcal T_{R_0\to R}(Z)=\Tr_E[VZV^\dagger]$ consequently satisfies
\begin{equation}\label{eq:mixed-probe-from-pure}
\rho_{RA}=(\mathcal T_{R_0\to R}\otimes\id_A)(\psi_{R_0A}).
\end{equation}
Because this channel acts only on the reference, it commutes with
either transformation on $A$.  In particular, the two outputs
generated by $\rho_{RA}$ arise from the two outputs generated by
$\psi_{R_0A}$ through the same channel
$\mathcal T_{R_0\to R}\otimes\id_B$.  Applying
\eqref{eq:state-data-processing} gives
\begin{align}
\Dh_\alpha\!\left((\id_R\otimes\cN)(\rho_{RA})
 \middle\|(\id_R\otimes\cM)(\rho_{RA})\right)
\leq
\Dh_\alpha\!\left((\id_{R_0}\otimes\cN)(\psi_{R_0A})
 \middle\|(\id_{R_0}\otimes\cM)(\psi_{R_0A})\right).
\label{eq:pure-probe-dominates-mixed}
\end{align}
This proves one inequality in \eqref{eq:channel-pure-reduction}.
The other follows because every pure probe with $R\simeq A$ is
already allowed in \eqref{eq:channel-divergence}.  Neither argument
assumes that the supremum is attained.

Set $H_\alpha\coloneqq P_\alpha(J_{\cN},J_{\cM})$.
The Choi formula for $\Dh_\alpha(\cN\|\cM)$ is
\begin{equation}\label{eq:channel-choi-formula}
\Dh_\alpha(\cN\|\cM)=
\begin{cases}
\displaystyle\frac{1}{\alpha-1}
\log\lambda_{\min}(\Tr_BH_\alpha),&0<\alpha<1,\\[1.1em]
\displaystyle\frac{1}{\alpha-1}
\log\norm{\Tr_BH_\alpha},&1<\alpha\leq2.
\end{cases}
\end{equation}
The second line assumes $J_{\cN}\ll J_{\cM}$; otherwise the divergence
is infinite. Fang and Fawzi prove the formula above
one~\cite[Theorem 3 and Lemma 5]{FangFawzi2021}, and Katariya and Wilde
prove the extension below one~\cite[Proposition 44]{KatariyaWilde2021}.
Appendix~\ref{app:choi} recalls the proof, including the possibility
that the supremum is approached rather than attained.

To see why an eigenvalue appears, parametrize a pure probe $\ket{\psi_Z}_{AA'}$ of full
Schmidt rank by an invertible matrix $Z$:
\begin{equation}\label{eq:probe-parameter-main}
\ket{\psi_Z}_{AA'}=(Z_A\otimes I_{A'})\ket{\Gamma}_{AA'},
\qquad \Tr(Z^\dagger Z)=1,
\qquad A'\simeq A.
\end{equation}
Here $A$ is the retained reference and $A'$ enters the channel,
as in the Choi convention \eqref{eq:choi}. The matrix $Z$ is
a parametrization of the prepared probe, often called an input
filter; no physical filtering operation is assumed.
Applying a channel to $A'$ conjugates its Choi operator by
$Z\otimes I_B$. Covariance of the perspective $P_\alpha$ under this
invertible conjugation gives
\begin{equation}\label{eq:choi-moment-main}
\Tr P_\alpha\!\left(
(\id\otimes\cN)(\psi_Z),(\id\otimes\cM)(\psi_Z)\right)
=\Tr\!\left[(Z^\dagger Z)\Tr_BH_\alpha\right],
\end{equation}
where $\psi_Z=\ket{\psi_Z}\bra{\psi_Z}$.
Thus the moment is an expectation of the positive operator
$\Tr_BH_\alpha$ in the density operator $Z^\dagger Z$.
Above one the logarithmic prefactor is positive, so maximizing
the divergence selects the largest eigenvalue. Below one the
prefactor is negative, so it selects the smallest eigenvalue.
Appendix~\ref{app:choi} proves the conjugation identity and treats
singular probes and approximating eigenvectors. This sign change
is why both sides of \eqref{eq:intro-normalization} are needed.

\paragraph{The order-one limit.}
The limit as the R\'enyi order decreases to one will provide the
strong-converse thresholds used in Section~\ref{sec:discrimination}.
For a state $\rho$ and a positive semidefinite operator $\sigma$ on
the same finite-dimensional space, the Belavkin--Staszewski divergence
is defined as
\begin{equation}\label{eq:bs-state}
\Dh(\rho\|\sigma)
\coloneqq
\Tr\!\left[\rho\log\!\left(
 \rho^{1/2}\sigma^{-1}\rho^{1/2}\right)\right]
\end{equation}
when $\rho\ll\sigma$, and is $+\infty$ otherwise
\cite{BelavkinStaszewski1982}.  The inverse of $\sigma$ is taken
on its support and extended by zero, and the logarithm is evaluated
on $\supp\rho$.  The operator inside that logarithm is positive
definite on $\supp\rho$ under the stated support condition.
For commuting states this is the classical relative entropy; for
noncommuting states it generally differs from the Umegaki relative
entropy $\Tr[\rho(\log\rho-\log\sigma)]$
\cite[Section 4, Definition 1 and Eq.~(4.1)]{Umegaki1962}.

This divergence is the order-one endpoint of the geometric family:
\begin{equation}\label{eq:bs-state-limit}
\lim_{\alpha\downarrow1}\Dh_\alpha(\rho\|\sigma)
=\Dh(\rho\|\sigma).
\end{equation}
This holds also for singular arguments
\cite[Proposition~79]{KatariyaWilde2021}.
For completeness, if $\rho\ll\sigma$, set
$L=\sigma^{-1/2}\rho^{1/2}$.  The singular-value decomposition
gives $(LL^\dagger)^\alpha=L(L^\dagger L)^{\alpha-1}L^\dagger$
on the nonzero supports.  Therefore, for $\alpha>1$,
\begin{equation}\label{eq:state-moment-order-one}
\Tr P_\alpha(\rho,\sigma)
=\Tr\!\left[\rho
 (\rho^{1/2}\sigma^{-1}\rho^{1/2})^{\alpha-1}\right].
\end{equation}
The right-hand side equals one at $\alpha=1$, and its derivative
there is
$\Tr[\rho\ln(\rho^{1/2}\sigma^{-1}\rho^{1/2})]$.
Taking the logarithm and dividing by $\alpha-1$ proves
\eqref{eq:bs-state-limit}, with the factor $1/\ln2$ converting to
base-two logarithms.  If support inclusion fails, both sides are
$+\infty$, since every divergence with $\alpha>1$ is already
infinite.

Taking this fixed-state limit in
Corollary~\ref{cor:state-dpi} gives
\begin{equation}\label{eq:bs-state-data-processing}
\Dh(\mathcal T(\rho)\|\mathcal T(\sigma))
\leq\Dh(\rho\|\sigma)
\end{equation}
for every quantum channel $\mathcal T$ and every positive
semidefinite comparison $\sigma$.  The same limit in
\eqref{eq:state-normalization-bound} yields
$\Dh(\rho\|\sigma)\geq-\log\Tr\sigma$; hence this divergence
is nonnegative for a subnormalized comparison.

For channels $\cN,\cM\colon A\to B$ between finite-dimensional
systems, define its channel version by
\begin{equation}\label{eq:bs-channel}
\Dh(\cN\|\cM)
\coloneqq
\sup\left\{
\begin{aligned}
&\Dh\!\left((\id_R\otimes\cN)(\rho_{RA})
 \middle\|(\id_R\otimes\cM)(\rho_{RA})\right):\\
&\rho_{RA}\in\cD(RA),\ R\text{ finite-dimensional}
\end{aligned}
\right\}.
\end{equation}
The reference-channel construction in
\eqref{eq:mixed-probe-from-pure}, now using
\eqref{eq:bs-state-data-processing}, proves the equivalent reduction
\begin{equation}\label{eq:bs-channel-pure-reduction}
\Dh(\cN\|\cM)
=\sup\left\{
\begin{aligned}
&\Dh\!\left((\id_R\otimes\cN)(\psi_{RA})
 \middle\|(\id_R\otimes\cM)(\psi_{RA})\right):\\
&\psi_{RA}\in\cD(RA)\text{ pure},\ R\simeq A
\end{aligned}
\right\}.
\end{equation}
The channel endpoint is
\begin{equation}\label{eq:bs-limit}
\lim_{\alpha\downarrow1}\Dh_\alpha(\cN\|\cM)
=\Dh(\cN\|\cM).
\end{equation}
This established fixed-pair limit is stated in
\cite[Eq.~(6.30)]{KatariyaWilde2021}; see also the channel formulas
in \cite[Lemma~5]{FangFawzi2021}.
Appendix~\ref{app:choi} also proves this limit by expanding the
fixed-pair Choi expression. Neither that proof nor the application
in Corollary~\ref{cor:bs} moves a probe or process supremum through
a state limit.

\section{Deterministic processes with channel slots}\label{sec:processes}
We now specify how several independently chosen channels can be
inserted into one experiment. The model isolates the two properties
needed for the proof: complete positivity of the action on Choi
operators and normalization whenever every slot contains a channel.
It specifies the resulting channel or state without assuming that
the individual channel uses can be arranged in one fixed temporal
order. Thus the proof cannot rely on a state evolving through a
prescribed sequence of channel uses.

\Needspace{15\baselineskip}
\subsection{Definition and examples}
Each slot has its own input and output systems. Replacing the channel
in one slot leaves the surrounding process unchanged. The following
definition expresses this requirement on Choi operators;
Figure~\ref{fig:slots} depicts three such insertion slots.
\begin{definition}[Deterministic completely positive process]
\label{def:process}
Let $n\geq1$ be an integer. Fix finite-dimensional slot interfaces
$A_i\to B_i$, $i\in\{1,\ldots,n\}$, and a finite-dimensional
external interface $C\to D$. A deterministic completely positive
process is a completely positive linear transformation
\begin{equation}\label{eq:process-domain}
\Phi\colon
\cL\!\left(\bigotimes_{i=1}^nA_iB_i\right)\to\cL(CD)
\end{equation}
such that
\begin{equation}\label{eq:process-normalization}
\Tr_D\Phi\!\left(\bigotimes_iJ_i\right)=I_C
\quad\text{for every }J_i\in\Ch(A_i\to B_i).
\end{equation}
Here $\Ch(A_i\to B_i)$ is the channel Choi set in
\eqref{eq:channel-choi-set}, with Choi convention \eqref{eq:choi}.
For channels $\cC_i\colon A_i\to B_i$, the output channel
$\cW(\cC_1,\ldots,\cC_n)$ is defined by
\begin{equation}\label{eq:process-output}
J_{\cW(\cC_1,\ldots,\cC_n)}
\coloneqq\Phi\!\left(\bigotimes_iJ_{\cC_i}\right).
\end{equation}
When $C$ is one-dimensional,
we call this a channel-to-state process.
\end{definition}

A simple one-slot example is $\Phi(J)=J/d_A$, obtained by
feeding one share of a normalized maximally entangled state
through the slot and retaining the other share.
It produces a normalized state from every channel Choi operator,
whose trace is $d_A$. Yet $\Phi$ scales the trace on arbitrary
matrices, illustrating why the Choi action need not be trace
preserving on its full domain.

The slots are independently replaceable: the channel in each slot
can be changed without changing the process. The same $\Phi$ is
used under different hypotheses. Initial states, memories, control systems,
and final processing can all be included in $\Phi$, provided that
\eqref{eq:process-normalization} holds. We do not assume that every
process in Definition~\ref{def:process} has a laboratory realization;
a converse for the full mathematical class also holds for a
physically realizable subclass.

Fixed-order circuits, circuits with classical control of order,
and the quantum switch are examples. To make the switch concrete,
suppose two channels $\mathcal A,\mathcal B\colon H\to H$ act on
the same target system and have Kraus operators $\{A_j\}_j$ and
$\{B_k\}_k$, respectively. Its Kraus operators on a
control qubit and the target are
\begin{equation}\label{eq:switch-kraus}
L_{jk}
=\ket{0}\bra{0}\otimes B_kA_j
+\ket{1}\bra{1}\otimes A_jB_k.
\end{equation}
The identity $\sum_{j,k}L_{jk}^\dagger L_{jk}=I$ verifies
normalization for every pair of channels. To check complete
positivity of the Choi action, define the vectorization
$\ket{A_j}\!\rangle=(I\otimes A_j)\ket{\Gamma}$, and similarly
for the other Kraus operators, using the appropriate input space
in \eqref{eq:choi}. Bilinearity of \eqref{eq:switch-kraus}
defines a fixed linear operator $V$ satisfying
\begin{equation}\label{eq:switch-vectorization}
V\bigl(\ket{A_j}\!\rangle\otimes\ket{B_k}\!\rangle\bigr)
=\ket{L_{jk}}\!\rangle.
\end{equation}
A channel's Choi operator is the sum of the outer products of its
vectorized Kraus operators with their adjoints. Summing those outer
products in
\eqref{eq:switch-vectorization} therefore gives the switch Choi
operator $V(J_{\mathcal A}\otimes J_{\mathcal B})V^\dagger$.
The action $Z\mapsto VZV^\dagger$ is completely positive and
depends only on the input Choi operators, proving independence
of the chosen Kraus representations as well. This is the usual
quantum switch supermap~\cite{ChiribellaEtAl2013}.

Figure~\ref{fig:slots} distinguishes the insertion slots from their
internal connections. Its three holes can hold three independently
chosen channels. Local inputs enter from the left and local outputs
leave to the right, but no connection between different slots is
drawn: Definition~\ref{def:process} does not select their order.
\begin{figure}[!t]
\centering
\begin{tikzpicture}[font=\small,>=Latex,
 channel/.style={draw=blue!60!black,fill=blue!6,rounded corners=2pt,
   minimum width=1.35cm,minimum height=.60cm,
   inner xsep=9pt,inner ysep=6pt},
 wire/.style={->,line width=.65pt},
 wirelabel/.style={inner sep=3pt}]
\path[fill=blue!4,even odd rule,rounded corners=4pt]
 (-5.1,-2.55) rectangle (5.1,2.8)
 (-1.35,.67) rectangle (1.35,1.63)
 (-1.35,-.63) rectangle (1.35,.33)
 (-1.35,-1.93) rectangle (1.35,-.97);
\draw[blue!40!black,line width=.65pt,rounded corners=4pt]
 (-5.1,-2.55) rectangle (5.1,2.8);
\foreach \y in {1.15,-.15,-1.45}{
 \draw[black!50,densely dashed,line width=.55pt,rounded corners=3pt]
   (-1.35,\y-.48) rectangle (1.35,\y+.48);
}
\node[inner sep=0pt] at (0,2.35)
 {deterministic process $\mathcal W$};
\node[channel] (n1) at (0,1.15) {$\mathcal N_1$};
\node[channel] (n2) at (0,-.15) {$\mathcal N_2$};
\node[channel] (n3) at (0,-1.45) {$\mathcal N_3$};
\draw[wire] (-3.9,1.15)--node[above=4pt,wirelabel]{$A_1$}(n1.west);
\draw[wire] (n1.east)--node[above=4pt,wirelabel]{$B_1$}(3.9,1.15);
\draw[wire] (-3.9,-.15)--node[above=4pt,wirelabel]{$A_2$}(n2.west);
\draw[wire] (n2.east)--node[above=4pt,wirelabel]{$B_2$}(3.9,-.15);
\draw[wire] (-3.9,-1.45)--node[above=4pt,wirelabel]{$A_3$}(n3.west);
\draw[wire] (n3.east)--node[above=4pt,wirelabel]{$B_3$}(3.9,-1.45);
\draw[wire] (-6.5,0)--node[above=4pt,wirelabel]{$C$}(-5.1,0);
\draw[wire] (5.1,0)--node[above=4pt,wirelabel]{$D$}(6.5,0);
\node[font=\footnotesize,text=black!75,inner sep=0pt] at (0,-2.25)
 {insertion slots; internal connections unspecified};
\node[inner sep=5pt] (input) at (-3.75,-3.48)
 {$J_{\mathcal N_1}\otimes J_{\mathcal N_2}\otimes J_{\mathcal N_3}$};
\node[draw=blue!55!black,fill=blue!4,rounded corners=2pt,
 minimum width=1.1cm,minimum height=.72cm,
 inner xsep=9pt,inner ysep=6pt] (phi) at (0,-3.48) {$\Phi$};
\node[inner sep=5pt] (output) at (3.75,-3.48)
 {$J_{\mathcal W(\mathcal N_1,\mathcal N_2,\mathcal N_3)}$};
\draw[wire,shorten <=3pt] (input.east)--(phi.west);
\draw[wire,shorten >=3pt] (phi.east)--(output.west);
\node[font=\footnotesize,inner sep=0pt] at (0,-4.25)
 {$\operatorname{Tr}_D\Phi(J_{\mathcal C_1}\otimes
 J_{\mathcal C_2}\otimes J_{\mathcal C_3})=I_C$
 \quad for every channel triple};
\end{tikzpicture}

\caption{Channel slots and their Choi representation. The shaded region
is a deterministic process with external input $C$ and output $D$.
The three unshaded holes are insertion slots, each containing a
replaceable channel and its own labeled input and output wires.
Each local input-to-output direction runs from left to right.
The internal connections are unspecified; the drawing does not impose
a relative temporal order on the three slots.
The lower line is an algebraic representation: the tensor product
records the independently replaceable slots, and the completely
positive transformation $\Phi$ contains their wiring. Normalization
must hold for every independent triple of inserted channels.}
\label{fig:slots}
\end{figure}

\subsection{Positive testers}
To analyze outcome probabilities, we first specialize the Choi action
to scalar-valued functionals. These are the objects bounded in
Theorem~\ref{thm:normalization}; the same bound will later control
channel outputs through Corollary~\ref{cor:effect-bounds}.
A \emph{positive functional} is a complex-linear scalar
functional $f$ satisfying $f(Z)\geq0$ for every positive semidefinite
$Z$ on the entire joint slot space. In finite dimensions it has
the form $f_T(Z)=\Tr(TZ)$ for a positive operator $T$.
Normalization on products of channels is a separate condition;
we do not require $f(I)=1$.
The operator $T$ is a deterministic tester normalizer if
\begin{equation}\label{eq:tester-normalization}
\Tr\!\left[T\bigotimes_iJ_i\right]=1
\quad\text{for every }J_i\in\Ch(A_i\to B_i).
\end{equation}
A measurement consists of positive semidefinite operators $T_x$ satisfying
$\sum_xT_x=T$. The outcome probability on a channel tuple is
$\Tr[T_x\bigotimes_iJ_i]$.
This is the positive process-tester model. The two-slot formulation
appears in the general testers of Bavaresco, Murao, and
Quintino~\cite[Definition 3 and Appendix A]{BavarescoMuraoQuintino2021};
the normalization on arbitrary products of local channels is stated
for an arbitrary number of slots by Zhu and
Wang~\cite[Definition 2.3]{ZhuWang2026}.
These references concern the tester class used here, without an
assertion that every member has a realization by a circuit with
definite causal order.
In a convention that contracts fully transposed Choi operators with a
process matrix $W$, the transpose is taken in fixed product bases of
all slot input and output systems. For a joint slot operator $Z$, the
identity $\Tr[WZ^{\transpose}]=\Tr[W^{\transpose}Z]$ allows the
transpose to be absorbed into the tester $T=W^{\transpose}$;
full transpose preserves positivity.
Equation~\eqref{eq:tester-normalization} fixes our contraction
convention.

For one slot and an input state $\eta$, the familiar example is
\begin{equation}\label{eq:one-slot-functional}
f_\eta(X)=\Tr[(\eta^{\transpose}\otimes I_B)X]
=\Tr[\eta(\Tr_BX)^{\transpose}].
\end{equation}
It equals the output weight of the operation with Choi operator
$X$ on input $\eta$ and equals one on every channel.
Its weight lies between the minimum and maximum eigenvalues of
$\Tr_BX$. Theorem~\ref{thm:normalization} extends these bounds
to a product of local positive operators. Its proof uses only
normalization on independent channel replacements, so it does not
require a state specified before each channel use in a fixed-order
circuit.

For a process with an external input, fixing a unit vector
$\ket{v}_C$ gives a positive functional
\begin{equation}\label{eq:fv}
f_v(Z)\coloneqq
\bra{v}\Tr_D\Phi(Z)\ket{v}.
\end{equation}
Equation~\eqref{eq:process-normalization} makes this functional
normalized on every product of channel Choi operators.
Bounding $f_v$ for every $\ket{v}_C$ is therefore enough to
bound the input effect of an output operation.

\subsection{What is and is not required}\label{subsec:process-assumptions}
For a single slot, a deterministic process is an ordinary superchannel,
which admits a realization by preprocessing and postprocessing around
that slot~\cite{ChiribellaDArianoPerinotti2008}. For several slots,
the normalization condition applies to independently inserted channels.
Definition~\ref{def:process} requires complete positivity of the Choi
action, not trace preservation on the full Choi matrix algebra.
It requires normalization on every product of channel Choi operators,
not on every joint channel Choi operator from $\bigotimes_i A_i$
to $\bigotimes_iB_i$. This distinction allows sequential and
indefinite-order connections of separate slots.
Supermaps on product channels are discussed in
\cite[Sections III E--III F, especially Definition 6 and
Theorem 2]{ChiribellaEtAl2013}.
For a concrete reason, consider connecting the first slot's output
to the second slot's input. An arbitrary joint channel on the
aggregate interface could itself send information from the second
input back to the first output. Closing the intended sequential
connection would then form a feedback loop. Normalization on
independently inserted channels avoids imposing a condition on
such arbitrary aggregate replacements.

Normalization only on the two tested hypotheses, or only on tuples
with the same channel in all slots, is a different assumption.
The proof uses independent replacement by auxiliary channels in
each slot. Section~\ref{subsec:normalization-counterexample}
gives a counterexample when normalization is imposed only on the
two tested hypotheses.

\section{Normalization of positive local operations}\label{sec:normalization}
Section~\ref{sec:processes} requires an experiment to have total
weight one whenever all inserted operations are channels. The
perspective operators that enter the divergence proof, however,
need not have the normalization of channel Choi operators. We
therefore need to determine how much weight a process can assign
to a product of arbitrary positive local operators.

For one slot, \eqref{eq:one-slot-functional} bounds that weight
by the extremal eigenvalues of the input marginal. Our goal in
Theorem~\ref{thm:normalization} is to obtain the products of these
local bounds for an arbitrary number of slots. A sequential
argument would follow the input state before each use, but such
a sequence is not part of Definition~\ref{def:process}.
We instead use the product-channel normalization identity
directly. The proof has two ingredients: a linear extension of
that identity and a pair of positive operators whose process
weights are related by Cauchy--Schwarz.

\subsection{Extending the normalization identity}
The normalization hypothesis initially applies only to positive
operators having output partial trace equal to the identity.
The lower-bound proof will also produce a cross operator with
that same partial trace but without positivity or Hermiticity;
see \eqref{eq:cross-marginal}. To evaluate its process weight,
we first show that linearity extends the normalization identity
to every operator with the prescribed partial trace. This step
is purely algebraic and does not identify the cross operator
with a physical operation.
A complex affine combination is a finite sum $\sum_j z_jJ_j$
with complex coefficients satisfying $\sum_jz_j=1$.
Unlike a probabilistic mixture, it need not preserve positivity
or Hermiticity. It does preserve the linear constraint
$\Tr_BJ_j=I_A$, which is the property used here.

\begin{lemma}[Complex affine extension]\label{lem:affine}
Let $A,B$ be finite-dimensional systems. The complex affine hull
of the channel Choi set $\Ch(A\to B)$ in
\eqref{eq:channel-choi-set}, consisting of finite complex
combinations of its elements whose coefficients sum to one, is
\begin{equation}\label{eq:complex-affine}
\{Z\in\cL(AB):\Tr_BZ=I_A\}.
\end{equation}
Consequently, for an integer $n\geq1$ and finite-dimensional
systems $A_i,B_i$, if a complex-linear functional
$f\colon\cL(\bigotimes_{i=1}^n A_iB_i)\to\mathbb C$ satisfies
$f(\bigotimes_iJ_i)=1$ for every $J_i\in\Ch(A_i\to B_i)$, then
\begin{equation}\label{eq:affine-normalization}
f\!\left(\bigotimes_iZ_i\right)=1
\quad\text{whenever }\Tr_{B_i}Z_i=I_{A_i}\text{ for all }i.
\end{equation}
Here each $Z_i\in\cL(A_iB_i)$ need not be Hermitian or positive.
\end{lemma}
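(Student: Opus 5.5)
The proof has two parts: identify the complex affine hull of $\Ch(A\to B)$, then extend the normalization identity one slot at a time by multilinearity.

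\textbf{The affine hull.} The inclusion of the complex affine hull in $\{Z:\Tr_BZ=I_A\}$ is immediate. If $Z=\sum_jz_jJ_j$ with $\sum_jz_j=1$, then $\Tr_BZ=\sum_jz_jI_A=I_A$.

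For the converse, fix the interior point $J_0\coloneqq I_{AB}/d_B\in\Ch(A\to B)$. Let $Z$ satisfy $\Tr_BZ=I_A$ and set $K\coloneqq Z-J_0$, so that $\Tr_BK=0$. I would split $K$ into Hermitian and anti-Hermitian parts, $K=H_1+iH_2$ with $H_1=(K+K^\dagger)/2$ and $H_2=(K-K^\dagger)/(2i)$. Since partial trace commutes with the adjoint, both $H_1$ and $H_2$ are Hermitian with $\Tr_BH_k=0$.

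Because $J_0$ is positive definite, for small $\epsilon>0$ the operators $J_0\pm\epsilon H_k$ are positive semidefinite. They also have partial trace $I_A$, so they lie in $\Ch(A\to B)$. This gives
\[
H_k=\frac{1}{2\epsilon}\bigl[(J_0+\epsilon H_k)-(J_0-\epsilon H_k)\bigr].
\]
Hence
\[
Z=J_0+\frac{1}{2\epsilon}\bigl[(J_0+\epsilon H_1)-(J_0-\epsilon H_1)\bigr]+\frac{i}{2\epsilon}\bigl[(J_0+\epsilon H_2)-(J_0-\epsilon H_2)\bigr].
\]
This is a complex combination of elements of $\Ch(A\to B)$. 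Its coefficients sum to $1+0+0=1$, so $Z$ lies in the complex affine hull.

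\textbf{Extension of normalization.} Next I would prove \eqref{eq:affine-normalization} by induction on the number of slots replaced. Suppose $f(\bigotimes_iJ_i)=1$ whenever the first $k-1$ factors are arbitrary $Z_i$ with $\Tr_{B_i}Z_i=I_{A_i}$ and the remaining factors are channel Choi operators. The base case $k=1$ is the hypothesis.

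For the inductive step, write $Z_k=\sum_jz_jJ_k^{(j)}$ with $J_k^{(j)}\in\Ch(A_k\to B_k)$ and $\sum_jz_j=1$, using the first part. The map $X\mapsto f(Z_1\otimes\cdots\otimes X\otimes\cdots)$ is complex-linear, because the tensor product is multilinear and $f$ is linear. Therefore the value at $Z_k$ equals $\sum_jz_j\cdot 1=1$. After $n$ steps every factor has been replaced, which gives the claim.

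\textbf{Main obstacle.} The only delicate point is the converse inclusion: showing that an arbitrary non-Hermitian operator with the right partial trace is a complex affine combination of channel Choi operators. This is handled by the interior point $I/d_B$ together with the Hermitian/anti-Hermitian decomposition.
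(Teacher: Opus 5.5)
Your proof is correct and follows the paper's argument: the positive definite interior point $I_{AB}/d_B$ plus a Hermitian/anti-Hermitian split identifies the affine hull, and multilinearity of the tensor product together with linearity of $f$ gives the normalization identity. The differences are cosmetic: you use the symmetric difference $J_0\pm\epsilon H_k$ where the paper writes $H=t^{-1}[(J_0+tH)-J_0]$, and you replace one slot at a time by induction where the paper expands $\bigotimes_iZ_i$ over all slots at once.
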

\begin{proof}
Let $J_0=I_A\otimes I_B/d_B$, a positive definite channel Choi
operator. A Hermitian $H$ satisfying $\Tr_BH=0$ is a real multiple
of a difference of channel Choi operators: choose $t>0$ small
enough that $J_0+tH\geq0$, and write
$H=t^{-1}[(J_0+tH)-J_0]$.

If $\Tr_BZ=I_A$, write
$Z=J_0+H+iK$, where $H,K$ are Hermitian and have zero output
partial trace. The preceding argument expresses $H$ and $K$
as linear combinations of channel Choi operators with coefficient
sum zero. Hence $Z$ is a complex affine combination of channel
Choi operators, with coefficient sum one. Conversely, every such
affine combination has output partial trace $I_A$.

For each $i$, write
$Z_i=\sum_{j=1}^{q_i}z_{ij}J_{ij}$, where the coefficients satisfy
$\sum_{j=1}^{q_i}z_{ij}=1$ and each
$J_{ij}\in\Ch(A_i\to B_i)$. Multilinearity of the tensor
product and complex linearity of $f$ give
\begin{align}
f\!\left(\bigotimes_{i=1}^nZ_i\right)
&=f\!\left(
\sum_{j_1=1}^{q_1}\cdots\sum_{j_n=1}^{q_n}
\left(\prod_{i=1}^n z_{ij_i}\right)
\bigotimes_{i=1}^nJ_{ij_i}\right)
\label{eq:affine-expansion:line-1}\\
&=\sum_{j_1=1}^{q_1}\cdots\sum_{j_n=1}^{q_n}
\left(\prod_{i=1}^n z_{ij_i}\right)
f\!\left(\bigotimes_{i=1}^nJ_{ij_i}\right)
\label{eq:affine-expansion:line-2}\\
&=\sum_{j_1=1}^{q_1}\cdots\sum_{j_n=1}^{q_n}
\prod_{i=1}^n z_{ij_i}
=\prod_{i=1}^n\left(\sum_{j=1}^{q_i}z_{ij}\right)
=1.\label{eq:affine-expansion}
\end{align}
The third expression uses normalization on every independent tuple
$(J_{1j_1},\ldots,J_{nj_n})$ of channel Choi operators. The next
equality factors the finite sums, and the last uses the coefficient
sum in each slot. Positivity of $f$ is not needed for this affine
extension. The real affine statement is also
used in the supermap characterization of
\cite[Lemma 2]{ChiribellaEtAl2013}.
\end{proof}

\subsection{The product bound and its proof}
Lemma~\ref{lem:affine} supplies the identity needed for the cross
operator. We now combine that identity with positivity to bound
the weight of a product of local operations. The upper bound
comes from completing each operation to a channel. The lower
bound requires a reciprocal construction, whose role is
explained in the proof strategy following the statement.

\begin{theorem}[Normalization of positive local operations]
\label{thm:normalization}
Let $n\geq1$ be an integer and let $A_i,B_i$ be finite-dimensional
systems. Let $f\colon\cL(\bigotimes_{i=1}^n A_iB_i)\to\mathbb C$
be complex-linear and positive, meaning $f(Z)\geq0$ whenever
$Z\geq0$. Suppose $f(\bigotimes_iJ_i)=1$ for every independent
choice $J_i\in\Ch(A_i\to B_i)$, with the channel Choi sets
defined in \eqref{eq:channel-choi-set}.
For positive semidefinite $X_i\in\cL(A_iB_i)$, set
$K_i\coloneqq\Tr_{B_i}X_i$,
$m_i\coloneqq\lambda_{\min}(K_i)$, and $r_i\coloneqq\norm{K_i}$.
Then
\begin{equation}\label{eq:normalization-bound}
\prod_i m_i
\leq f\!\left(\bigotimes_iX_i\right)
\leq\prod_i r_i.
\end{equation}
For every fixed tuple $(X_i)_i$, both bounds are attained by
positive functionals satisfying the stated product-channel
normalization.
\end{theorem}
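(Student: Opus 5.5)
The plan is to treat the two inequalities in \eqref{eq:normalization-bound} separately. The upper bound comes from a channel-completion argument. The lower bound is reduced to the upper bound by a Cauchy--Schwarz argument whose cross term is evaluated with Lemma~\ref{lem:affine}. Attainment is handled last, with product functionals of the form \eqref{eq:one-slot-functional}.

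\emph{Upper bound.} If some $r_i=0$, then $K_i=0$. Since $X_i\geq0$, this forces $X_i=0$, and both sides vanish. Otherwise, set $\widetilde X_i=X_i/r_i$, so that $\Tr_{B_i}\widetilde X_i=K_i/r_i\leq I_{A_i}$. Complete it to a channel Choi operator
\begin{equation*}
Y_i=\widetilde X_i+(I_{A_i}-K_i/r_i)\otimes\ket{0}\bra{0}_{B_i}\in\Ch(A_i\to B_i),
\end{equation*}
where the added term is positive and has exactly the missing output partial trace. Normalization gives $f(\bigotimes_iY_i)=1$. Expanding the tensor product writes $\bigotimes_iY_i$ as $\bigotimes_i\widetilde X_i$ plus a sum of products of positive operators. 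Positivity of $f$ therefore gives $f(\bigotimes_i\widetilde X_i)\leq1$, which is the upper bound after multiplying by $\prod_ir_i$.

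\emph{Lower bound.} If some $m_i=0$, the bound follows from positivity of $f$, so assume every $K_i>0$. Define the normalized channel Choi operator
\begin{equation*}
N_i=(K_i^{-1/2}\otimes I)X_i(K_i^{-1/2}\otimes I)\in\Ch(A_i\to B_i),
\end{equation*}
which satisfies $X_i=S_iS_i^\dagger$ with $S_i=(K_i^{1/2}\otimes I)N_i^{1/2}$. Also define the reciprocal operator $R_i=T_iT_i^\dagger$ with $T_i=(K_i^{-1/2}\otimes I)N_i^{1/2}$. Its input marginal is $\Tr_{B_i}R_i=K_i^{-1}$, whose operator norm is $1/m_i$. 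The cross operator $C_i=S_iT_i^\dagger=(K_i^{1/2}\otimes I)N_i(K_i^{-1/2}\otimes I)$ has $\Tr_{B_i}C_i=K_i^{1/2}K_i^{-1/2}=I_{A_i}$, although $C_i$ is not Hermitian. Lemma~\ref{lem:affine} therefore gives $f(\bigotimes_iC_i)=1$.

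Now set $S=\bigotimes_iS_i$ and $T=\bigotimes_iT_i$. The sesquilinear form $(P,Q)\mapsto f(PQ^\dagger)$ is positive semidefinite because $f$ is positive, so Cauchy--Schwarz gives
\begin{equation*}
1=\bigl|f(ST^\dagger)\bigr|^2\leq f(SS^\dagger)\,f(TT^\dagger)
=f\!\left(\bigotimes_iX_i\right)f\!\left(\bigotimes_iR_i\right).
\end{equation*}
Applying the already proved upper bound to the $R_i$ gives $f(\bigotimes_iR_i)\leq\prod_i m_i^{-1}$, and the lower bound follows. I expect the main obstacle to be this step: one must see that the correct pair is $X_i$ and the inverse-effect operator $R_i$, and one needs Lemma~\ref{lem:affine} because the cross operator is neither positive nor Hermitian.

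\emph{Attainment.} Take $f=\bigotimes_if_{\eta_i}$, with $f_{\eta_i}$ the one-slot functionals of \eqref{eq:one-slot-functional}. Each $\eta_i$ is chosen as the projector onto an eigenvector of $K_i^{\transpose}$ with the minimal eigenvalue (respectively, the maximal eigenvalue). Such a functional is positive and normalized on product channels. It evaluates to $\prod_i\Tr[\eta_iK_i^{\transpose}]$, which equals $\prod_im_i$ (respectively, $\prod_ir_i$).
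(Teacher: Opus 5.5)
Your proof is correct and follows the paper's argument essentially step for step. The upper bound uses channel completion plus positivity. The lower bound uses the same normalized Choi operator $N_i$, a reciprocal operator with marginal $K_i^{-1}$, a cross term fixed to one by Lemma~\ref{lem:affine}, and Cauchy--Schwarz; your $S,T$ are the adjoints of the paper's $A=J^{1/2}F$ and $B=J^{1/2}F^{-1}$. The only differences are cosmetic: you expand the tensor product instead of telescoping, and you phrase attainment through the one-slot functionals of \eqref{eq:one-slot-functional}, which is equivalent to the paper's functional $f_-$.
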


\paragraph{Proof strategy.}
The upper bound uses a physical completion: after division by its
largest input-effect eigenvalue, a local operation can be completed
to a channel. A corresponding lower completion need not exist,
as illustrated in Section~\ref{subsec:reciprocal-obstruction}.
Instead, we write each operation as a channel Choi operator
conjugated by the square root of its input marginal. Replacing this
square root by its inverse gives a second positive operation.
Lemma~\ref{lem:affine} fixes the cross term between these two
operations to one. Cauchy--Schwarz then turns an upper bound
for the second operation into a lower bound for the first.
Figure~\ref{fig:reciprocal} displays this construction.

\begin{proof}
We first prove the upper bound in \eqref{eq:normalization-bound}. If $r_i=0$ for some $i$, positivity
and $\Tr X_i=\Tr K_i=0$ imply $X_i=0$, so both the product and its
process weight vanish. Otherwise, choose a state $\tau_i$ on $B_i$
and set
\begin{equation}\label{eq:upper-completion}
J_i^{\mathrm{comp}}\coloneqq
\frac{X_i+(r_iI_{A_i}-K_i)\otimes\tau_i}{r_i}.
\end{equation}
The numerator is positive semidefinite and has output partial trace
$r_iI_{A_i}$. Thus $J_i^{\mathrm{comp}}\in\Ch(A_i\to B_i)$ and
$r_iJ_i^{\mathrm{comp}}-X_i=(r_iI_{A_i}-K_i)\otimes\tau_i\geq0$.
These local inequalities imply a joint operator inequality.
To see it explicitly, use the telescoping identity
\begin{align}
\bigotimes_{i=1}^n(r_iJ_i^{\mathrm{comp}})-\bigotimes_{i=1}^n X_i
=\sum_{k=1}^n
\left(\bigotimes_{i=1}^{k-1}r_iJ_i^{\mathrm{comp}}\right)
\otimes(r_kJ_k^{\mathrm{comp}}-X_k)
\otimes\left(\bigotimes_{i=k+1}^n X_i\right)
\geq0.\label{eq:completion-tensor-order}
\end{align}
Empty tensor products are scalars equal to one. Each summand is
positive semidefinite because all of its tensor factors are
positive semidefinite. Positivity and linearity of $f$ then give
\begin{align}
0&\leq f\!\left(
\bigotimes_{i=1}^n(r_iJ_i^{\mathrm{comp}})-\bigotimes_{i=1}^nX_i\right)
\label{eq:upper-weight:line-1}\\
&=\left(\prod_{i=1}^n r_i\right)
f\!\left(\bigotimes_{i=1}^nJ_i^{\mathrm{comp}}\right)
-f\!\left(\bigotimes_{i=1}^nX_i\right)
\label{eq:upper-weight:line-2}\\
&=\prod_{i=1}^n r_i-f\!\left(\bigotimes_{i=1}^nX_i\right).
\label{eq:upper-weight}
\end{align}
The last equality uses normalization on the auxiliary channel
tuple $(J_i^{\mathrm{comp}})_{i=1}^n$. Rearranging proves the upper bound.

For the lower bound in \eqref{eq:normalization-bound}, if some $m_i=0$, positivity of $f$ proves
the claim. We therefore assume $K_i>0$ for every $i$ and define
\begin{equation}\label{eq:reciprocal-normalization}
F_i\coloneqq K_i^{1/2}\otimes I_{B_i},
\qquad
J_i\coloneqq F_i^{-1}X_iF_i^{-1}.
\end{equation}
Each $J_i$ is positive semidefinite and
\begin{equation}\label{eq:normalized-marginal}
\Tr_{B_i}J_i=K_i^{-1/2}K_iK_i^{-1/2}=I_{A_i}.
\end{equation}
Hence $J_i$ is a channel Choi operator, even if $X_i$ is singular.
Write $F=\bigotimes_iF_i$ and $J=\bigotimes_iJ_i$, so
$FJF=\bigotimes_iX_i$.

The cross operator has the prescribed partial trace in each slot:
\begin{equation}\label{eq:cross-marginal}
\Tr_{B_i}(F_iJ_iF_i^{-1})
=K_i^{1/2}I_{A_i}K_i^{-1/2}=I_{A_i}.
\end{equation}
Lemma~\ref{lem:affine} therefore gives
\begin{equation}\label{eq:cross-one}
f(FJF^{-1})=1.
\end{equation}
No commutation of $F_i$ and $J_i$ is used in this identity.
Complex linearity evaluates this cross term. Positivity will
instead be applied to operators of the form $A^\dagger A$
and $B^\dagger B$ in the next step.

Cauchy--Schwarz for a positive functional states that
$|f(A^\dagger B)|^2\leq f(A^\dagger A)f(B^\dagger B)$.
It follows, for example, by writing $f(Z)=\Tr(TZ)$ with $T\geq0$
and applying Hilbert--Schmidt Cauchy--Schwarz to
$AT^{1/2}$ and $BT^{1/2}$. Applying it with
$A=J^{1/2}F$ and $B=J^{1/2}F^{-1}$ yields
\begin{equation}\label{eq:reciprocal-cs}
1=|f(FJF^{-1})|^2
\leq f(FJF)\,f(F^{-1}JF^{-1}).
\end{equation}
To bound the second factor, set
$Y_i=F_i^{-1}J_iF_i^{-1}$. Its output partial trace is
\begin{equation}\label{eq:inverse-effect}
\Tr_{B_i}Y_i=K_i^{-1}.
\end{equation}
The upper bound \eqref{eq:upper-weight:line-1}--\eqref{eq:upper-weight}, applied to these positive
$Y_i$, implies
\begin{equation}\label{eq:inverse-upper}
f(F^{-1}JF^{-1})
\leq\prod_i\norm{K_i^{-1}}
=\prod_i m_i^{-1}.
\end{equation}
Combining \eqref{eq:reciprocal-cs} and \eqref{eq:inverse-upper}
proves the lower bound in \eqref{eq:normalization-bound}.

For tightness, select a unit eigenvector $\ket{v_i}$ of $K_i$
corresponding to its smallest eigenvalue, and define
\begin{equation}\label{eq:sharp-tester}
f_-(Z)\coloneqq
\Tr\!\left[
\bigotimes_i(\ket{v_i}\bra{v_i}\otimes I_{B_i})\,Z
\right].
\end{equation}
It is positive and normalized on product channels, and its value
on $\bigotimes_iX_i$ is $\prod_i m_i$. Largest-eigenvalue
eigenvectors give the upper endpoint in the same way.
\end{proof}

\begin{figure}[!t]
\centering
\begin{tikzpicture}[font=\small,>=Latex,
 box/.style={draw=blue!55!black,fill=blue!4,rounded corners=3pt,
   align=center,text width=3.25cm,minimum height=1.45cm,
   inner xsep=9pt,inner ysep=9pt},
 arrow/.style={->,line width=.65pt}]
\node[align=center,inner sep=0pt] at (0,1.65)
 {$K_i=\Tr_{B_i}X_i>0$,\quad $F_i=K_i^{1/2}\otimes I_{B_i}$\\[.45em]
 \footnotesize Horizontal arrows apply $V\mapsto F_i^{-1}VF_i^{-1}$.};
\node[box] (x) at (-4.65,0)
 {$X_i=F_iJ_iF_i$\\[.5em]$\Tr_{B_i}X_i=K_i$};
\node[box] (j) at (0,0)
 {$J_i=F_i^{-1}X_iF_i^{-1}$\\[.5em]$\Tr_{B_i}J_i=I_{A_i}$};
\node[box] (y) at (4.65,0)
 {$Y_i=F_i^{-1}J_iF_i^{-1}$\\[.5em]$\Tr_{B_i}Y_i=K_i^{-1}$};
\draw[arrow] (x.east)--(j.west);
\draw[arrow] (j.east)--(y.west);
\node[align=center,inner xsep=8pt,inner ysep=7pt] (cross) at (0,-2.05)
 {$Z_i=F_iJ_iF_i^{-1}$,\quad $\Tr_{B_i}Z_i=I_{A_i}$\\[.5em]
 affine normalization: $f(\bigotimes_i Z_i)=1$};
\draw[arrow] (j.south)--node[right=7pt,font=\footnotesize,inner sep=3pt]
 {similarity}(cross.north);
\node[draw=black!55,rounded corners=3pt,fill=black!2,align=center,
 text width=8.1cm,minimum height=1.55cm,
 inner xsep=11pt,inner ysep=10pt] (gram) at (0,-4.50)
 {$u=f(\bigotimes_iX_i)$,\quad $v=f(\bigotimes_iY_i)$\\[.55em]
 $\begin{pmatrix}u&1\\1&v\end{pmatrix}\geq0
 \quad\Longrightarrow\quad uv\geq1$};
\draw[arrow] (cross.south)--node[right=7pt,font=\footnotesize,inner sep=3pt]
 {Cauchy--Schwarz}(gram.north);
\node[align=center,inner xsep=8pt,inner ysep=6pt] (bound) at (0,-6.45)
 {$v\leq\prod_i m_i^{-1}
 \quad\Longrightarrow\quad
 u\geq\prod_i m_i$,\qquad $m_i=\lambda_{\min}(K_i)$};
\draw[arrow] (gram.south)--(bound.north);
\end{tikzpicture}

\caption{The lower-bound mechanism in
Theorem~\ref{thm:normalization}. In each slot, the positive Choi
operator $X_i$ is normalized by its input marginal $K_i$ to a
channel Choi operator $J_i$. The reciprocal Choi operator $Y_i$ has marginal
$K_i^{-1}$. The cross operator $Z_i=F_iJ_iF_i^{-1}$ need not be
positive, but its marginal is $I$, so affine normalization fixes
$f(\bigotimes_iZ_i)=1$. Cauchy--Schwarz relates the two positive
weights through the displayed Gram matrix for
$A=J^{1/2}F$ and $B=J^{1/2}F^{-1}$ in
\eqref{eq:reciprocal-cs}. Its off-diagonal entries are one by
affine normalization. The upper bound on the reciprocal weight supplies
the desired lower bound. The arrows denote algebraic constructions,
not physical implementation of inverse filters.}
\label{fig:reciprocal}
\end{figure}

\FloatBarrier
\subsection{The operator consequence}
Theorem~\ref{thm:normalization} controls scalar process weights.
To apply it to channel outputs, we evaluate the output Choi marginal
in an arbitrary unit vector and use the functional in
\eqref{eq:fv}. Bounds valid in every vector direction give an
operator inequality. The input effect is the transpose of this
marginal by \eqref{eq:effect-transpose}, so the same bounds by
scalar multiples of the identity hold for the input effect.
This connects the normalization argument to the channel Choi
formula \eqref{eq:channel-choi-formula}.
\begin{corollary}[Output input-effect bounds]\label{cor:effect-bounds}
Let $n\geq1$ be an integer, let $A_i,B_i,C,D$ be finite-dimensional
systems, and let $\Phi\colon\cL(\bigotimes_{i=1}^n A_iB_i)
\to\cL(CD)$ be completely positive and normalized as in
\eqref{eq:process-normalization} for every independent tuple of
channel Choi operators. For positive semidefinite
$X_i\in\cL(A_iB_i)$, set $K_i\coloneqq\Tr_{B_i}X_i$. Then
\begin{equation}\label{eq:operator-bound}
\left(\prod_i\lambda_{\min}(K_i)\right)I_C
\leq
\Tr_D\Phi\!\left(\bigotimes_iX_i\right)
\leq
\left(\prod_i\norm{K_i}\right)I_C.
\end{equation}
\end{corollary}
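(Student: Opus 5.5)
The plan is to reduce the operator inequality \eqref{eq:operator-bound} to the scalar bound of Theorem~\ref{thm:normalization}, one vector direction at a time. Fix an arbitrary unit vector $\ket{v}\in C$ and consider the functional $f_v$ from \eqref{eq:fv}, namely $f_v(Z)=\bra{v}\Tr_D\Phi(Z)\ket{v}$. I will first verify that $f_v$ meets the hypotheses of Theorem~\ref{thm:normalization}. It is complex-linear, since it is a composition of linear maps. It is positive: complete positivity (indeed mere positivity) of $\Phi$ gives $\Phi(Z)\geq0$ for $Z\geq0$, the partial trace preserves positivity, and a diagonal matrix element of a positive operator is nonnegative. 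It is normalized on products of channel Choi operators, because \eqref{eq:process-normalization} gives $\Tr_D\Phi(\bigotimes_iJ_i)=I_C$ and hence $f_v(\bigotimes_iJ_i)=\langle v|v\rangle=1$.

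Applying Theorem~\ref{thm:normalization} to $f_v$ and the given positive semidefinite $X_i$ then yields
\[
\prod_i\lambda_{\min}(K_i)\leq\bra{v}\Tr_D\Phi\!\left(\bigotimes_iX_i\right)\ket{v}\leq\prod_i\norm{K_i}.
\]
The operator $M\coloneqq\Tr_D\Phi(\bigotimes_iX_i)$ is positive, hence Hermitian. Because $\bra{v}(M-cI_C)\ket{v}\geq0$ for every unit vector (and so, by scaling, for every vector) is equivalent to $M\geq cI_C$, the lower bound follows with $c=\prod_i\lambda_{\min}(K_i)$. The upper bound follows in the same way from $\bra{v}(cI_C-M)\ket{v}\geq0$ with $c=\prod_i\norm{K_i}$.

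No step here is a genuine obstacle; all the difficulty lives in Theorem~\ref{thm:normalization}, especially its lower bound, which is taken as given. The only points needing care are the conversion from quadratic-form bounds to operator inequalities, which uses that $M$ is Hermitian, and the degenerate case where $C$ is one-dimensional. In that case $I_C=1$, the statement reduces directly to the theorem, and the argument above covers it unchanged. Finally, I would remark that transposing via \eqref{eq:effect-transpose} gives the same scalar bounds for the input effect of the output operation, since transposition preserves the spectrum and the identity.
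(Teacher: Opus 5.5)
Your proposal is correct and follows the paper's proof: apply Theorem~\ref{thm:normalization} to the functional $f_v$ of \eqref{eq:fv} for every unit vector $\ket{v}_C$, then convert the quadratic-form bounds into the operator inequalities \eqref{eq:operator-bound}. Your explicit checks (linearity, positivity and normalization of $f_v$, and Hermiticity of $\Tr_D\Phi(\bigotimes_iX_i)$) only spell out what the paper leaves implicit.
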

\begin{proof}
For each unit vector $\ket{v}_C$, the functional in
\eqref{eq:fv} satisfies the hypotheses of
Theorem~\ref{thm:normalization}. Apply that theorem to every
$\ket{v}_C$. Bounds on all quadratic forms are equivalent to
the operator inequalities in \eqref{eq:operator-bound}.
\end{proof}
In particular, if each inserted positive operator is the Choi
operator of a trace-nonincreasing operation, its output under
$\Phi$ is also a trace-nonincreasing operation. This consequence
will allow subnormalized comparisons in a communication converse.

\section{The geometric R\'enyi process inequality}\label{sec:main}
The proof of one of our main results (Theorem~\ref{thm:main}) now combines two order inequalities. Operator Jensen
compares the perspective of the output channels with the process
applied to the local perspectives. Corollary~\ref{cor:effect-bounds}
then controls the partial trace of that positive operator.

\begin{theorem}[Geometric R\'enyi divergence under deterministic processes]
\label{thm:main}
Let $n\geq1$ be an integer and let
$\cN_i,\cM_i\colon A_i\to B_i$ be channels between
finite-dimensional systems for $i\in\{1,\ldots,n\}$.
Let $\cW$ be a deterministic completely
positive process with Choi action $\Phi$ as in
\eqref{eq:process-domain}--\eqref{eq:process-normalization}, with
finite-dimensional external interface $C\to D$.
Thus $\cW(\cC_1,\ldots,\cC_n)$ has Choi operator
$\Phi(\bigotimes_iJ_{\cC_i})$ for every channel tuple, using
the convention \eqref{eq:choi}.
For every $\alpha\in(0,1)\cup(1,2]$, the channel divergence
in \eqref{eq:channel-divergence} satisfies
\begin{equation}\label{eq:main}
\Dh_\alpha\!\left(
\cW(\cN_1,\ldots,\cN_n)
\middle\|
\cW(\cM_1,\ldots,\cM_n)\right)
\leq\sum_{i=1}^n\Dh_\alpha(\cN_i\|\cM_i).
\end{equation}
The inequality holds in the extended-real sense and imposes no
causal order on the slots.
\end{theorem}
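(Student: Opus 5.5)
The plan is to work entirely with the Choi formula \eqref{eq:channel-choi-formula}, applied both to the output pair and to each slot pair. Set $X=\bigotimes_iJ_{\cN_i}$ and $Y=\bigotimes_iJ_{\cM_i}$. The two output Choi operators are then $\Phi(X)$ and $\Phi(Y)$. Write $H_{\alpha,i}\coloneqq P_\alpha(J_{\cN_i},J_{\cM_i})$ for the local perspectives.

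\textbf{Operator chain.} By the tensor identity \eqref{eq:perspective-tensor}, we have $P_\alpha(X,Y)=\bigotimes_iH_{\alpha,i}$. Applying Lemma~\ref{lem:perspective} with $\Psi=\Phi$ gives, for $0<\alpha<1$,
\begin{equation*}
P_\alpha(\Phi(X),\Phi(Y))\geq\Phi\!\left(\bigotimes_iH_{\alpha,i}\right).
\end{equation*}
For $1<\alpha\leq2$ the reverse inequality holds. Next, take $\Tr_D$, which preserves operator order. Each $H_{\alpha,i}$ is positive semidefinite, so Corollary~\ref{cor:effect-bounds} with $X_i=H_{\alpha,i}$ yields:
\begin{itemize}
\item for $0<\alpha<1$, $\Tr_DP_\alpha(\Phi(X),\Phi(Y))\geq\prod_i\lambda_{\min}(\Tr_{B_i}H_{\alpha,i})\,I_C$;
\item for $1<\alpha\leq2$, $\Tr_DP_\alpha(\Phi(X),\Phi(Y))\leq\prod_i\norm{\Tr_{B_i}H_{\alpha,i}}\,I_C$.
\end{itemize}
Consequently $\lambda_{\min}$ of the output marginal is at least the product of local minima below one, and the operator norm of the output marginal is at most the product of local norms above one. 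Taking $\frac{1}{\alpha-1}\log$ turns the product into a sum. The sign of $\alpha-1$ converts each inequality into \eqref{eq:main} via \eqref{eq:channel-choi-formula}, applied to the output pair (whose input is $C$) and to each slot pair.

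\textbf{Degenerate cases.} For $\alpha>1$, suppose some $J_{\cN_i}\not\ll J_{\cM_i}$. Then the right-hand side of \eqref{eq:main} is $+\infty$ and there is nothing to prove. Otherwise $X\ll Y$, since supports of tensor products multiply. Lemma~\ref{lem:perspective} then also gives $\Phi(X)\ll\Phi(Y)$, so the output Choi formula applies in its supported form.

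For $\alpha<1$, suppose some local $\lambda_{\min}$ vanishes. Then that local divergence is $+\infty$, and the inequality is trivial in the extended reals. Otherwise every factor is positive, and the output $\lambda_{\min}$ is positive as well, so its logarithm is finite. If $C$ is one-dimensional, $\lambda_{\min}$ and the norm are both the scalar trace, so the output is a state and the same argument covers it.

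\textbf{Main obstacle.} I expect the hardest step to be justifying that the Choi formula \eqref{eq:channel-choi-formula} holds for the output pair with the conventions used here. This involves three points:
\begin{itemize}
\item the possibly singular Choi operators at $\alpha<1$, handled through the extension \eqref{eq:mean-extension};
\item the fact that the output channel's Choi operator is indeed $\Phi(\cdot)$ in the convention \eqref{eq:choi};
\item the fact that Lemma~\ref{lem:perspective} applies to a $\Phi$ that is not trace preserving.
\end{itemize}
All three appear to be settled by earlier results (Appendix~\ref{app:choi} and Lemma~\ref{lem:perspective}). The remaining work is therefore bookkeeping of the extended-real cases.
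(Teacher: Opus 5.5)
Your proposal is correct and follows the paper's proof essentially step for step. You reduce to finite input divergences, use the tensor identity with Lemma~\ref{lem:perspective} to compare $P_\alpha(\Phi(X),\Phi(Y))$ with $\Phi(\bigotimes_i H_{\alpha,i})$, bound the $\Tr_D$ marginal by Corollary~\ref{cor:effect-bounds}, and conclude via the Choi formula with the sign of $\alpha-1$. The points you flag as the main obstacle are already settled by Definition~\ref{def:process}, Lemma~\ref{lem:perspective}, and Appendix~\ref{app:choi}, as you anticipated.
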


\paragraph{Proof strategy.}
Let $H_i$ be the geometric perspective of the two Choi operators
in slot $i$. Above order one, the channel divergence depends on
the largest eigenvalue of $\Tr_{B_i}H_i$, so we use the upper
normalization bound. Below order one, it depends on the smallest
eigenvalue; Jensen reverses, and we use the lower normalization
bound. The negative logarithmic prefactor below one then restores
the same direction of the divergence inequality.

\begin{proof}
If an input divergence is infinite, \eqref{eq:main} is immediate.
Assume henceforth that every input divergence is finite.
For $\alpha>1$ this implies $J_{\cN_i}\ll J_{\cM_i}$ for
each $i$. Tensor products and the positive transformation $\Phi$
preserve this support inclusion, so the output perspective is
also defined on the comparison support.
Set
\begin{equation}\label{eq:local-H}
H_i\coloneqq P_\alpha(J_{\cN_i},J_{\cM_i}),\qquad
H_{\mathrm{out}}\coloneqq
P_\alpha\!\left(
\Phi\!\left(\bigotimes_iJ_{\cN_i}\right),
\Phi\!\left(\bigotimes_iJ_{\cM_i}\right)\right).
\end{equation}
First suppose $1<\alpha\leq2$. Lemma~\ref{lem:perspective} gives
\begin{equation}\label{eq:main-upper-perspective}
H_{\mathrm{out}}
\leq
\Phi\!\left(\bigotimes_iH_i\right).
\end{equation}
Let $r_i=\norm{\Tr_{B_i}H_i}$. Taking the output partial trace
and applying the upper bound in \eqref{eq:operator-bound} yields
\begin{equation}\label{eq:main-upper-effect}
\Tr_DH_{\mathrm{out}}
\leq
\Tr_D\Phi\!\left(\bigotimes_iH_i\right)
\leq
\left(\prod_i r_i\right)I_C.
\end{equation}
The second line of \eqref{eq:channel-choi-formula}, followed by
the logarithm and division by $\alpha-1>0$, gives \eqref{eq:main}.

Now suppose $0<\alpha<1$, using the singular extension
\eqref{eq:mean-extension}. Finiteness of the input divergences
means $m_i=\lambda_{\min}(\Tr_{B_i}H_i)>0$ for every $i$.
Lemma~\ref{lem:perspective} and the lower bound in
\eqref{eq:operator-bound} give
\begin{equation}\label{eq:main-lower-effect}
\Tr_DH_{\mathrm{out}}
\geq
\Tr_D\Phi\!\left(\bigotimes_iH_i\right)
\geq
\left(\prod_i m_i\right)I_C.
\end{equation}
Therefore $\lambda_{\min}(\Tr_DH_{\mathrm{out}})\geq\prod_i m_i$.
Taking the logarithm and dividing by $\alpha-1<0$ reverses the
inequality. The first line of \eqref{eq:channel-choi-formula}
then proves \eqref{eq:main}.
\end{proof}

\subsection{Exact optimization over processes}
Theorem~\ref{thm:main} limits the divergence produced by every
allowed process. To determine whether that bound is sharp, we
can use separate reference-assisted probes and apply the channels
in parallel. The product structure of the resulting output states
makes their geometric divergences add, so this simple choice
approaches the upper bound after optimization.

\begin{corollary}[Exact process optimization]\label{cor:collapse}
Let $n\geq1$ be an integer, let
$\cN_i,\cM_i\colon A_i\to B_i$ be channel pairs on
finite-dimensional systems, and let $\alpha\in(0,1)\cup(1,2]$.
Let $\mathfrak P$ be the class
of channel-to-state processes satisfying Definition~\ref{def:process},
with arbitrary finite-dimensional output systems: each Choi action
$\Theta$ is completely positive and has trace one on every
independent product of channel Choi operators.
With Choi convention \eqref{eq:choi}, state divergence
\eqref{eq:state-divergence}, and channel divergence
\eqref{eq:channel-divergence},
\begin{equation}\label{eq:collapse}
\sup\Biggl\{
\Dh_\alpha\!\left(
\Theta\!\left(\bigotimes_iJ_{\cN_i}\right)
\middle\|
\Theta\!\left(\bigotimes_iJ_{\cM_i}\right)\right):\Theta\in\mathfrak P
\Biggr\}
=\sum_{i=1}^n\Dh_\alpha(\cN_i\|\cM_i),
\end{equation}
where the equality is understood in the extended-real sense.
No attainment of the supremum is asserted.
\end{corollary}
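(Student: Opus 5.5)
The proof has two directions. The upper bound ($\le$) is a specialization of Theorem~\ref{thm:main}. The lower bound ($\ge$) comes from parallel product probes, using tensor additivity of the geometric moment.

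For the upper bound, take any $\Theta\in\mathfrak P$ and view it as a deterministic completely positive process with one-dimensional external input $C$ and output $D$. For one-dimensional $C$, normalization \eqref{eq:process-normalization} is exactly trace one on product channel Choi operators. The output channels $\cW(\cN_1,\ldots,\cN_n)$ and $\cW(\cM_1,\ldots,\cM_n)$ are then preparation channels of the two states $\Theta(\bigotimes_iJ_{\cN_i})$ and $\Theta(\bigotimes_iJ_{\cM_i})$. Their channel divergence \eqref{eq:channel-divergence} equals the state divergence: the reference in any probe is uncorrelated with the trivial input, so the outputs are $\rho_R\otimes\omega$ versus $\rho_R\otimes\omega'$. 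By \eqref{eq:perspective-tensor}, the moment factorizes as $\Tr P_\alpha(\rho_R,\rho_R)\Tr P_\alpha(\omega,\omega')=\Tr P_\alpha(\omega,\omega')$. Theorem~\ref{thm:main} then bounds each term in the supremum by $\sum_i\Dh_\alpha(\cN_i\|\cM_i)$.

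For the lower bound, use parallel probes. Choose pure probes $\psi^{(i)}_{R_iA_i}$ and let $\Theta(Z)$ be the map that feeds $\bigotimes_i\psi^{(i)}$ into the slots. Concretely, $\Theta(Z)=\Tr_{A_1\cdots A_n}[(\bigotimes_i\psi^{(i)\transpose}_{A_i}\otimes I)Z]$ with the reference systems retained, written in Choi form as in \eqref{eq:choi-moment-main}. Equivalently, $\Theta$ is a tensor product of one-slot maps $Z_i\mapsto (\id_{R_i}\otimes\cC_i)(\psi^{(i)})$ expressed linearly in $J_{\cC_i}$. Each one-slot map is completely positive in $J$: it is a conjugation by $Z_i\otimes I$ followed by relabeling, as in the Choi formula. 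It has trace one on channel Choi operators, so $\Theta\in\mathfrak P$. The outputs are the products $\bigotimes_i(\id\otimes\cN_i)(\psi^{(i)})$ and $\bigotimes_i(\id\otimes\cM_i)(\psi^{(i)})$. By \eqref{eq:perspective-tensor} the moments multiply, so the divergences add. This additivity holds for $\alpha>1$ provided each support inclusion holds; if one fails, the product also fails support inclusion, and both sides are $+\infty$.

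The remaining step is to take suprema over probes independently in each slot, using the pure-probe reduction \eqref{eq:channel-pure-reduction}. The main obstacle is the extended-real bookkeeping. If some $\Dh_\alpha(\cN_j\|\cM_j)=+\infty$, the other terms must stay bounded below so that the sum of divergences approaches $+\infty$. This holds because each probe divergence is $\ge0$ by \eqref{eq:state-normalization-bound}, since the comparisons are normalized; one then fixes arbitrary probes in the other slots and lets slot $j$ diverge. If all terms are finite, choose $\epsilon/n$-optimal probes in each slot. No attainment is needed, matching the statement.
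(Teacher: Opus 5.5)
Your proposal is correct and follows essentially the same route as the paper. The upper bound is Theorem~\ref{thm:main} applied with one-dimensional external input, and the lower bound uses the parallel congruence process $Z\mapsto VZV^\dagger$ with $V=\bigotimes_i(Z_i\otimes I_{B_i})$, tensor multiplicativity of the geometric moment, $\eta/n$-optimal probes in the finite case, and nonnegativity of the other slots' probe divergences in the infinite case. Below order one, also note explicitly that a zero moment in one factor makes the product moment zero by \eqref{eq:perspective-tensor}, as the paper does alongside the unsupported case above one.
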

\begin{proof}
Theorem~\ref{thm:main}, applied to a process with one-dimensional
external input, supplies the upper bound. To prove the reverse
inequality, we construct a parallel process and check both its
normalization and its output divergence.

First suppose that the sum on the right of \eqref{eq:collapse}
is finite, and fix $\eta>0$. By the supremum definition and
the pure-probe reduction \eqref{eq:channel-pure-reduction},
choose a normalized pure probe $\psi_i$ in each slot whose
two output states
\begin{equation}\label{eq:parallel-probe-outputs}
\rho_i\coloneqq(\id_{R_i}\otimes\cN_i)(\psi_i),
\qquad
\sigma_i\coloneqq(\id_{R_i}\otimes\cM_i)(\psi_i)
\end{equation}
satisfy
\begin{equation}\label{eq:parallel-near-optimal}
\Dh_\alpha(\rho_i\|\sigma_i)
\geq\Dh_\alpha(\cN_i\|\cM_i)-\frac{\eta}{n},
\qquad i\in\{1,\ldots,n\}.
\end{equation}
Here $R_i\simeq A_i$. In the Choi convention \eqref{eq:choi},
write the probe as
$\ket{\psi_i}_{R_iA_i'}=(Z_i\otimes I_{A_i'})
\ket{\Gamma}_{A_iA_i'}$, where $A_i'$ is a copy of the channel
input and $Z_i\colon A_i\to R_i$ satisfies
$\Tr(Z_i^\dagger Z_i)=1$. No invertibility is required.
The corresponding parallel Choi action is
\begin{equation}\label{eq:parallel-choi-action}
\Theta_\psi(Z)\coloneqq VZV^\dagger,
\qquad
V\coloneqq\bigotimes_{i=1}^n(Z_i\otimes I_{B_i}).
\end{equation}
This action is completely positive. For every independent tuple
of channels $\cC_i\colon A_i\to B_i$, its trace is
\begin{align}
\Tr\Theta_\psi\!\left(\bigotimes_{i=1}^nJ_{\cC_i}\right)
&=\prod_{i=1}^n
\Tr\!\left[(Z_i^\dagger Z_i\otimes I_{B_i})J_{\cC_i}\right]
\label{eq:parallel-normalization:line-1}\\
&=\prod_{i=1}^n\Tr(Z_i^\dagger Z_i)=1.
\label{eq:parallel-normalization}
\end{align}
The second equality uses $\Tr_{B_i}J_{\cC_i}=I_{A_i}$.
Thus $\Theta_\psi$ belongs to the process class in the statement.
It produces $\bigotimes_{i=1}^n\rho_i$ and
$\bigotimes_{i=1}^n\sigma_i$ under the two hypotheses.

The tensor identity \eqref{eq:perspective-tensor} makes the
geometric moment multiplicative. Consequently,
\begin{align}
\Dh_\alpha\!\left(\bigotimes_{i=1}^n\rho_i
\middle\|\bigotimes_{i=1}^n\sigma_i\right)
&=\frac{1}{\alpha-1}
\log\prod_{i=1}^n\Tr P_\alpha(\rho_i,\sigma_i)
\label{eq:parallel-additivity:line-1}\\
&=\sum_{i=1}^n\Dh_\alpha(\rho_i\|\sigma_i)
\label{eq:parallel-additivity:line-2}\\
&\geq\sum_{i=1}^n\Dh_\alpha(\cN_i\|\cM_i)-\eta.
\label{eq:parallel-additivity}
\end{align}
All local divergences are finite in this case, so the trace and
logarithm operations are well defined. Taking the supremum over
processes and then letting $\eta\downarrow0$ proves the reverse
inequality for a finite sum.

If a channel divergence on the right of \eqref{eq:collapse} is
infinite, then for every $L>0$ there is a pure probe in that slot
whose output divergence is at least $L$. Choose arbitrary probes
in the other slots and use \eqref{eq:parallel-choi-action} again.
Corollary~\ref{cor:state-dpi} gives nonnegative divergences for
all these normalized state pairs. If a factor is unsupported
above one, so is the tensor-product pair; if a factor has zero
moment below one, so does the tensor product by
\eqref{eq:perspective-tensor}. In either case the output divergence
is infinite. Otherwise tensor additivity gives an output
divergence at least $L$. Since $L$ is arbitrary, the process
supremum is infinite, completing the extended-real case.
\end{proof}
Geometric divergences can decrease at a limiting rank-deficient
probe; the explicit
approximating probes in Appendix~\ref{app:choi} avoid an
attainment assumption. Equation~\eqref{eq:collapse} does not
assert equality of finite-use error probabilities between
different strategy classes.

\Needspace{18\baselineskip}
\subsection{The order-one limit and comparison operations}
The applications use two variations of Theorem~\ref{thm:main}.
First, the fixed-pair limit \eqref{eq:bs-limit} gives the
Belavkin--Staszewski inequality. Second, communication converses
sometimes compare a channel with a completely positive operation
that can lose trace. We verify that deterministic process
normalization still gives a valid trace-nonincreasing comparison
output in that setting.

\begin{corollary}[Belavkin--Staszewski process inequality]
\label{cor:bs}
Let $n\geq1$ be an integer and let
$\cN_i,\cM_i\colon A_i\to B_i$ be channels between
finite-dimensional systems. Let $\cW$ have a completely positive
Choi action \eqref{eq:process-domain} satisfying the normalization
\eqref{eq:process-normalization}, with finite-dimensional external
interface $C\to D$. The Belavkin--Staszewski channel divergence
in \eqref{eq:bs-channel} satisfies
\begin{equation}\label{eq:bs-main}
\Dh\!\left(\cW(\cN_1,\ldots,\cN_n)
\middle\|\cW(\cM_1,\ldots,\cM_n)\right)
\leq\sum_{i=1}^n\Dh(\cN_i\|\cM_i).
\end{equation}
Moreover, the supremum of the output state divergence
\eqref{eq:bs-state} over all such channel-to-state processes with
arbitrary finite-dimensional output equals
$\sum_{i=1}^n\Dh(\cN_i\|\cM_i)$, as in \eqref{eq:collapse}.
Both assertions are understood in the extended-real sense.
\end{corollary}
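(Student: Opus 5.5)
The plan is to pass to the limit $\alpha\downarrow1$ in Theorem~\ref{thm:main} and Corollary~\ref{cor:collapse}, using only the fixed-pair limits \eqref{eq:bs-limit} and \eqref{eq:bs-state-limit}. This avoids exchanging any supremum over probes or processes with a limit in $\alpha$.

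For the inequality \eqref{eq:bs-main}, fix the channel pairs and the process $\cW$. Write $\cN_{\mathrm{out}}=\cW(\cN_1,\ldots,\cN_n)$ and $\cM_{\mathrm{out}}=\cW(\cM_1,\ldots,\cM_n)$. These are fixed channels $C\to D$, by \eqref{eq:process-normalization} and complete positivity of $\Phi$. Theorem~\ref{thm:main} gives, for every $\alpha\in(1,2]$,
\[
\Dh_\alpha(\cN_{\mathrm{out}}\|\cM_{\mathrm{out}})\leq\sum_i\Dh_\alpha(\cN_i\|\cM_i).
\]
Applying \eqref{eq:bs-limit} to the fixed output pair on the left and to each fixed local pair on the right, and letting $\alpha\downarrow1$, yields \eqref{eq:bs-main}.

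Extended-real values need some care. If some $\Dh(\cN_i\|\cM_i)=+\infty$, the bound is trivial. Otherwise every local limit is finite, the finite sum of limits is the limit of the sums, and the left-hand limit is then bounded by a finite number. The only thing to check is that \eqref{eq:bs-limit} holds in the extended-real sense, including support failure. For the state version this is stated after \eqref{eq:bs-state-limit}. For channels, I would rely on the Appendix~\ref{app:choi} proof cited right after \eqref{eq:bs-limit}, and I take that as given.

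For the optimization statement, the upper bound follows from \eqref{eq:bs-main}. A channel-to-state process has one-dimensional $C$, so its output channel divergence reduces to the state divergence: the reference system is trivial, and data processing \eqref{eq:bs-state-data-processing} handles any attached reference. The lower bound is the main obstacle, because I cannot take a limit of the suprema in \eqref{eq:collapse}. Instead I would repeat the parallel construction directly at order one.

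Fix $\eta>0$. Using the pure-probe reduction \eqref{eq:bs-channel-pure-reduction}, choose probes $\psi_i$ whose output pairs $(\rho_i,\sigma_i)$ satisfy $\Dh(\rho_i\|\sigma_i)\ge\Dh(\cN_i\|\cM_i)-\eta/n$, or at least $L$ in any slot where the divergence is infinite. The parallel process $\Theta_\psi$ of \eqref{eq:parallel-choi-action} lies in the allowed class, by \eqref{eq:parallel-normalization}. It then remains to show additivity
\[
\Dh\!\left(\textstyle\bigotimes_i\rho_i\middle\|\bigotimes_i\sigma_i\right)=\sum_i\Dh(\rho_i\|\sigma_i).
\]
I would obtain this as the $\alpha\downarrow1$ limit, via \eqref{eq:bs-state-limit}, of the additivity \eqref{eq:parallel-additivity:line-1}--\eqref{eq:parallel-additivity:line-2}, which is valid for each $\alpha>1$. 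Alternatively it follows directly from \eqref{eq:bs-state}: when all $\rho_i\ll\sigma_i$, $\rho^{1/2}\sigma^{-1}\rho^{1/2}$ factorizes as a tensor product, so its logarithm splits into a sum. When support fails in some slot, it fails for the product, and both sides are $+\infty$. Every local divergence is nonnegative by the normalization remark after \eqref{eq:bs-state-data-processing}, so an infinite or large slot term forces the total to be large. Letting $\eta\downarrow0$, or $L\to\infty$, gives the equality in the extended-real sense.
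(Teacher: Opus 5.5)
Your proposal is correct and matches the paper's proof. You take the fixed-pair limit \eqref{eq:bs-limit} in Theorem~\ref{thm:main} for each local pair and for the output pair. You then obtain the exact supremum by redoing the parallel construction \eqref{eq:parallel-choi-action} at order one with the pure-probe reduction \eqref{eq:bs-channel-pure-reduction}, without moving a limit through a supremum; your state additivity via the $\alpha\downarrow1$ limit of the tensor identity is the paper's step, and the direct factorization of $\rho^{1/2}\sigma^{-1}\rho^{1/2}$ you mention is an equally valid shortcut.
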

\begin{proof}
Take $\alpha\downarrow1$ in \eqref{eq:main} using
\eqref{eq:bs-limit} for each fixed channel pair, including the
fixed output pair. If an input term is infinite, the bound is
trivial. To obtain state additivity, view states as channels
with one-dimensional input and take the same fixed-pair limit
of the tensor identity \eqref{eq:perspective-tensor} after taking
the trace and logarithm and dividing by $\alpha-1$.
If support inclusion fails in a factor,
the product pair is unsupported as well, giving infinity on
both sides. The reduction \eqref{eq:bs-channel-pure-reduction}
and the parallel construction \eqref{eq:parallel-choi-action}
then give the exact supremum by the same approximation argument
as in Corollary~\ref{cor:collapse}, without interchanging the
limit and a process optimization.
\end{proof}

\begin{proposition}[Trace-nonincreasing comparisons]\label{prop:subchannels}
Let $n\geq1$ be an integer and let
$\cN_i\colon A_i\to B_i$ be channels and
$\cM_i\colon A_i\to B_i$ completely positive,
trace-nonincreasing transformations between finite-dimensional
systems. Let $\cW$ have a completely positive Choi action $\Phi$
\eqref{eq:process-domain} satisfying
\eqref{eq:process-normalization}, with finite-dimensional
external interface $C\to D$. Define its action on the comparison
tuple through the Choi operator $\Phi(\bigotimes_iJ_{\cM_i})$
in convention \eqref{eq:choi}. The output
$\cW(\cM_1,\ldots,\cM_n)$ is completely positive and
trace nonincreasing. For $1<\alpha\leq2$, the channel divergence
with a completely positive comparison in
\eqref{eq:channel-divergence} obeys
\begin{equation}\label{eq:main-subchannels}
\Dh_\alpha\!\left(\cW(\cN_1,\ldots,\cN_n)
\middle\|\cW(\cM_1,\ldots,\cM_n)\right)
\leq\sum_{i=1}^n\Dh_\alpha(\cN_i\|\cM_i)
\end{equation}
in the extended-real sense.
\end{proposition}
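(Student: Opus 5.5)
The proof has two parts: the structural claim about the comparison output, and then the divergence bound (the latter copies the $\alpha>1$ branch of the proof of Theorem~\ref{thm:main}).

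\textbf{Structural claim.} Each $J_{\cM_i}$ is positive semidefinite with $K_i\coloneqq\Tr_{B_i}J_{\cM_i}\leq I_{A_i}$, since $\cM_i$ is trace nonincreasing.
\begin{itemize}
\item Complete positivity of $\Phi$ gives $\Phi(\bigotimes_iJ_{\cM_i})\geq0$. Hence $\cW(\cM_1,\ldots,\cM_n)$ is completely positive.
\item The upper bound in Corollary~\ref{cor:effect-bounds} gives
\[
\Tr_D\Phi\!\left(\bigotimes_iJ_{\cM_i}\right)\leq\left(\prod_i\norm{K_i}\right)I_C\leq I_C .
\]
By \eqref{eq:effect-transpose}, the input effect of the output is the transpose of this marginal. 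So it is at most $I_C$, and the output is trace nonincreasing.
\end{itemize}

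\textbf{Choi formula with a completely positive comparison.} Next I would check that the second line of \eqref{eq:channel-choi-formula} still holds when $\cM$ is only completely positive:
\[
\Dh_\alpha(\cN\|\cM)=\frac{1}{\alpha-1}\log\norm{\Tr_BP_\alpha(J_\cN,J_\cM)}, \qquad J_\cN\ll J_\cM .
\]
The derivation sketched after \eqref{eq:probe-parameter-main} uses only three facts:
\begin{itemize}
\item covariance of $P_\alpha$ under conjugation by $Z\otimes I_B$;
\item the pure-probe reduction \eqref{eq:channel-pure-reduction}, which rests on Corollary~\ref{cor:state-dpi};
\item the approximation of singular probes in Appendix~\ref{app:choi}.
\end{itemize}
None of these uses trace preservation of the second argument; Corollary~\ref{cor:state-dpi} in particular allows unnormalized $\sigma$. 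I would state this explicitly, pointing to the appendix argument, and otherwise take the formula as given. I expect this to be the main point needing care, since the earlier statement of the formula only cites it for channels. If the appendix proof needs it, I would redo the direction "$\le$" directly: for a full-Schmidt-rank probe the moment equals $\Tr[(Z^\dagger Z)\Tr_BH_\alpha]\leq\norm{\Tr_BH_\alpha}$, and rank-deficient probes follow by restricting to the support of $Z$. The reverse direction follows from the approximating eigenvector probes.

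\textbf{Divergence bound.} Assume every $\Dh_\alpha(\cN_i\|\cM_i)$ is finite; otherwise \eqref{eq:main-subchannels} is trivial.
\begin{enumerate}
\item Finiteness gives $J_{\cN_i}\ll J_{\cM_i}$ for each $i$. Tensor products preserve this, and so does the positive map $\Phi$ (use $X\leq cY\Rightarrow\Phi(X)\leq c\Phi(Y)$, as in Lemma~\ref{lem:perspective}).
\item Set $H_i=P_\alpha(J_{\cN_i},J_{\cM_i})$. Inequality \eqref{eq:jensen-upper} combined with the tensor identity \eqref{eq:perspective-tensor} gives
\[
H_{\mathrm{out}}\leq\Phi\!\left(\bigotimes_iH_i\right).
\]
Lemma~\ref{lem:perspective} requires only complete positivity and support inclusion, not trace preservation.
\item Take $\Tr_D$ and apply Corollary~\ref{cor:effect-bounds} to the positive operators $H_i$:
\[
\Tr_DH_{\mathrm{out}}\leq\left(\prod_i\norm{\Tr_{B_i}H_i}\right)I_C .
\]
\item Take the operator norm, then the logarithm, and divide by $\alpha-1>0$. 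Together with the extended Choi formula applied to both sides, this yields \eqref{eq:main-subchannels}.
\end{enumerate}
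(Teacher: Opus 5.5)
Your proposal is correct and follows the paper's own argument. The trace-nonincreasing claim comes from the upper bound in Corollary~\ref{cor:effect-bounds} applied to $J_{\cM_i}$ with $\Tr_{B_i}J_{\cM_i}\leq I_{A_i}$, and the divergence bound reruns the $\alpha>1$ branch of Theorem~\ref{thm:main}, where normalization is used only for the auxiliary channel completions of the $H_i$, together with the Choi formula for completely positive comparisons recalled in Appendix~\ref{app:choi}.
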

\begin{proof}
The output trace bound follows from Corollary~\ref{cor:effect-bounds}
applied to $J_{\cM_i}$, since $\Tr_{B_i}J_{\cM_i}\leq I_{A_i}$.
For the divergence bound, the proof of
\eqref{eq:main-upper-effect} uses normalization only for the
auxiliary channels that complete the positive $H_i$.
It does not require the comparison operators themselves to be
channel Choi operators. The channel Choi formula for a completely
positive comparison, recalled in Appendix~\ref{app:choi},
therefore gives the same conclusion.
\end{proof}

\section{Channel discrimination and exclusion}\label{sec:discrimination}
We now turn the process inequality into constraints on observable
error probabilities. A final measurement can be included in the
process, but it is useful to keep it explicit: the quantum part
is controlled by Theorem~\ref{thm:main}, and the remaining argument
is a classical binary R\'enyi calculation.
For probability distributions $p=(p_x)_x$ and $q=(q_x)_x$ on
the same finite alphabet, the classical R\'enyi divergence is defined as
\begin{equation}\label{eq:classical-divergence}
D_\alpha(p\|q)\coloneqq\frac1{\alpha-1}
\log\sum_xp_x^\alpha q_x^{1-\alpha},
\qquad\alpha\in(0,1)\cup(1,2].
\end{equation}
For classical moment terms above one we use the perspective
convention $0^\alpha0^{1-\alpha}=0$; a term with positive
first argument and zero second argument is $+\infty$.
Below one, a term is zero if either probability is zero, and a
zero total moment gives $D_\alpha=+\infty$.
These conventions also apply to the comparison-test argument
in Section~\ref{sec:capacities}.

\subsection{Binary discrimination and a strong converse}
In binary discrimination the experiment has two possible channel
hypotheses, and one final decision must distinguish them. We first
bound the two error probabilities for any fixed experiment. Optimizing
that bound will then give a hypothesis-testing converse for the entire
process class.

Fix $n\geq1$ finite-dimensional slot interfaces $A_i\to B_i$.
Write $\mathfrak P_n$ for the class of deterministic completely positive
channel-to-state processes in Definition~\ref{def:process} on these
slots, normalized on every independent channel tuple, with arbitrary
finite-dimensional output. Fix channel tuples $(\cN_i)_{i=1}^n$ and
$(\cM_i)_{i=1}^n$ and a process $\Theta\in\mathfrak P_n$.
Under the two hypotheses its outputs are
\begin{equation}\label{eq:testing-states}
\rho=\Theta\!\left(\bigotimes_{i=1}^nJ_{\cN_i}\right),
\qquad
\sigma=\Theta\!\left(\bigotimes_{i=1}^nJ_{\cM_i}\right).
\end{equation}
All initial probes, memories, and control states are independent
of the unknown hypothesis and are part of the same $\Theta$.
A final test $0\leq T\leq I$ accepts the first hypothesis.
Define its type-I and type-II errors by
\begin{equation}\label{eq:errors}
a\coloneqq1-\Tr(T\rho),
\qquad b\coloneqq\Tr(T\sigma).
\end{equation}
This model includes every positive process tester with these slots:
its positive effects define a completely positive transformation
to a classical outcome register.

\begin{proposition}[Binary error constraints]\label{prop:binary}
Let $n\geq1$ be an integer, let
$\cN_i,\cM_i\colon A_i\to B_i$ be channels on
finite-dimensional systems, and let $\Theta$ be a deterministic
completely positive channel-to-state process in
Definition~\ref{def:process}, normalized on every independent
channel tuple. For its output states $\rho,\sigma$ in
\eqref{eq:testing-states} and a test $0\leq T\leq I$ accepting
the first hypothesis, define the errors $a,b$ by \eqref{eq:errors}. With the channel
divergence in \eqref{eq:channel-divergence}, we have
\begin{equation}\label{eq:binary-divergence}
D_\alpha\bigl((1-a,a)\|(b,1-b)\bigr)\leq \sum_{i=1}^n\Dh_\alpha(\cN_i\|\cM_i),
\qquad \alpha\in(0,1)\cup(1,2],
\end{equation}
where $D_\alpha$ is the classical divergence in
\eqref{eq:classical-divergence}.
For $1<\alpha\leq2$ and finite $\sum_{i=1}^n\Dh_\alpha(\cN_i\|\cM_i)$,
\begin{equation}\label{eq:binary-strong}
1-a\leq
2^{-\frac{\alpha-1}{\alpha}[-\log b-\sum_{i=1}^n\Dh_\alpha(\cN_i\|\cM_i)]}.
\end{equation}
The convention at $b=0$ is that finite $\sum_{i=1}^n\Dh_\alpha(\cN_i\|\cM_i)$ forces $1-a=0$.
For $0<\alpha<1$,
\begin{equation}\label{eq:binary-below}
(1-a)^\alpha b^{1-\alpha}
+a^\alpha(1-b)^{1-\alpha}
\geq 2^{-(1-\alpha)\sum_{i=1}^n\Dh_\alpha(\cN_i\|\cM_i)}.
\end{equation}
\end{proposition}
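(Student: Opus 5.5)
The plan is to reduce everything to Theorem~\ref{thm:main} followed by state data processing. View the final test as a measurement channel $\mathcal T(\omega)=\Tr(T\omega)\ket0\bra0+\Tr((I-T)\omega)\ket1\bra1$. Composing $\mathcal T$ with $\Theta$ is again a deterministic completely positive channel-to-state process, since measurement channels preserve complete positivity and trace. Alternatively, one can apply Corollary~\ref{cor:state-dpi} directly to $\rho,\sigma$.

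The outputs are the classical pairs $(1-a,a)$ and $(b,1-b)$. For commuting diagonal arguments the geometric moment equals the classical moment, so $\Dh_\alpha$ of these diagonal states is $D_\alpha$ in \eqref{eq:classical-divergence}. I will check that the support and zero-moment conventions match: the perspective convention $0^\alpha0^{1-\alpha}=0$ above one, and the limit \eqref{eq:mean-extension} below one, which yields $x^\alpha y^{1-\alpha}$ with zero when either argument vanishes. This gives the chain
\[
D_\alpha\bigl((1-a,a)\|(b,1-b)\bigr)\leq\Dh_\alpha(\rho\|\sigma)\leq\sum_i\Dh_\alpha(\cN_i\|\cM_i).
\]
The second inequality is Theorem~\ref{thm:main} with one-dimensional $C$; there the channel divergence of the output reduces to the state divergence, because the reference is trivial up to data processing. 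This proves \eqref{eq:binary-divergence}.

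For \eqref{eq:binary-strong} with $\alpha>1$, let $S=\sum_i\Dh_\alpha(\cN_i\|\cM_i)<\infty$. Drop the second nonnegative moment term to get
\[
(1-a)^\alpha b^{1-\alpha}\leq\sum_x p_x^\alpha q_x^{1-\alpha}\leq2^{(\alpha-1)S}.
\]
If $b>0$, rearranging gives $(1-a)^\alpha\leq b^{\alpha-1}2^{(\alpha-1)S}$; taking the $1/\alpha$ power yields the stated exponent $-\frac{\alpha-1}{\alpha}[-\log b-S]$. If $b=0$ and $1-a>0$, the classical divergence is $+\infty$ by convention, which contradicts finiteness of $S$. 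Hence $1-a=0$.

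For \eqref{eq:binary-below} with $0<\alpha<1$, multiply \eqref{eq:binary-divergence} by $\alpha-1<0$, which reverses the inequality. This gives $\log\sum_x p_x^\alpha q_x^{1-\alpha}\geq-(1-\alpha)S$, and exponentiating gives the claim. When $S=\infty$ the right side is $0$ and the bound is trivial. When the classical moment is zero, the divergence is infinite, so finite $S$ excludes that case.

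The only delicate point I expect is the bookkeeping of conventions: identifying the geometric divergence of diagonal singular states with the classical conventions stated for \eqref{eq:classical-divergence}, together with the extended-real cases. This is routine with the commuting formula and the $\delta$-regularization.
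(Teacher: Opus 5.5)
Your proposal is correct and follows the paper's proof step for step. You apply Theorem~\ref{thm:main} with a one-dimensional external input, use data processing through the measurement channel for $\{T,I-T\}$ together with the classical reduction of the geometric divergence, drop the second moment term above order one (with $b=0$ handled by the infinite divergence), and multiply by $\alpha-1<0$ below order one. Your extra checks of the singular-argument conventions and extended-real cases agree with the paper's definitions.
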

\begin{proof}
Theorem~\ref{thm:main} bounds $\Dh_\alpha(\rho\|\sigma)$ by
$\sum_{i=1}^n\Dh_\alpha(\cN_i\|\cM_i)$. The channel that records the outcome of $\{T,I-T\}$
converts these states to the distributions in
\eqref{eq:binary-divergence}. Corollary~\ref{cor:state-dpi} and
the classical reduction of \eqref{eq:state-divergence} prove
that inequality.
For $\alpha>1$, exponentiation and retention of the first
nonnegative term give
\begin{equation}\label{eq:first-binary-term}
(1-a)^\alpha b^{1-\alpha}
\leq 2^{(\alpha-1)\sum_{i=1}^n\Dh_\alpha(\cN_i\|\cM_i)}.
\end{equation}
Rearranging proves \eqref{eq:binary-strong}. If $b=0$ and $1-a>0$,
the binary divergence is infinite, contradicting finite $\sum_{i=1}^n\Dh_\alpha(\cN_i\|\cM_i)$.
For $\alpha<1$, multiplying \eqref{eq:binary-divergence} by
$\alpha-1$ reverses its direction and yields
\eqref{eq:binary-below}.
\end{proof}

For asymmetric testing, the type-I error is constrained and the type-II
error is minimized. We record the convention explicitly below.

\begin{definition}[Process hypothesis-testing divergence]
\label{def:process-hypothesis-testing}
Let $n\geq1$ be an integer, let $\cN,\cM\colon A\to B$ be channels
between finite-dimensional systems, and let $0\leq\varepsilon<1$.
Let $\mathfrak P_n$ be the class of deterministic completely positive
channel-to-state processes in Definition~\ref{def:process} on $n$
slots $A\to B$, with arbitrary finite-dimensional output. Define
\begin{align}
\beta_{\varepsilon}^{\ind}(\cN^{\otimes n}\|\cM^{\otimes n})
&\coloneqq
\inf\Bigl\{
\Tr[T\Theta(J_{\cM}^{\otimes n})]:
\Theta\in\mathfrak P_n,\ 0\leq T\leq I,\notag\\[-0.25em]
&\hspace{8em}
\Tr[T\Theta(J_{\cN}^{\otimes n})]\geq1-\varepsilon
\Bigr\},\label{eq:testing-beta}\\
D_H^{\varepsilon,\ind}(\cN^{\otimes n}\|\cM^{\otimes n})
&\coloneqq-\log
\beta_{\varepsilon}^{\ind}(\cN^{\otimes n}\|\cM^{\otimes n}).
\label{eq:testing-definition}
\end{align}
The effect $T$ acts on the output of $\Theta$ and accepts $\cN$;
the same process and test are used under both hypotheses.
The tensor-power notation specifies $n$ individually addressable
channel slots. All probes, memories, and control systems are included
in $\Theta$ and are independent of the unknown hypothesis.
The logarithm has base two, $-\log0=+\infty$, and the type-II
probability is not divided by $1-\varepsilon$. The feasible set is
nonempty, since $T=I$ is allowed, and the infimum need not be attained.
\end{definition}

The normalization follows the state hypothesis-testing relative
entropy of Wang and Renner~\cite[Eq.~(1)]{WangRenner2012}.
Related smoothing through acceptance operators appears in the
quasi-entropies of Buscemi and
Datta~\cite[Section II.B, Eqs.~(10), (12), and (14)]{BuscemiDatta2010}
and the smooth min-relative entropy of entanglement of Brand\~ao and
Datta~\cite[Definition 4, Eq.~(12)]{BrandaoDatta2011}.
These references concern quantities defined for states; the process
quantity defined here optimizes the testing criterion over the
specified class of positive process testers.

\begin{corollary}[Hypothesis-testing and strong-converse bounds]
\label{cor:hypothesis-testing}
Let $n\geq1$ be an integer and let $\cN,\cM\colon A\to B$ be
channels between finite-dimensional systems. For
$0\leq\varepsilon<1$ and $1<\alpha\leq2$, the process
hypothesis-testing divergence in
\eqref{eq:testing-definition}, optimized over the processes in
Definition~\ref{def:process-hypothesis-testing}, satisfies
\begin{equation}\label{eq:testing-oneshot}
D_H^{\varepsilon,\ind}(\cN^{\otimes n}\|\cM^{\otimes n})
\leq
n\Dh_\alpha(\cN\|\cM)
+\frac{\alpha}{\alpha-1}\log\frac1{1-\varepsilon}.
\end{equation}
Here $\Dh_\alpha$ is the channel divergence in
\eqref{eq:channel-divergence}, and the inequality is understood
in the extended-real sense.
Moreover, if $\Dh_\alpha(\cN\|\cM)<+\infty$ and a process and
test have errors $a_n,b_n$ defined by \eqref{eq:errors} and satisfying
$b_n\leq2^{-nr}$ for a real rate
$r$, then
\begin{equation}\label{eq:testing-exponent}
1-a_n\leq
2^{-n\frac{\alpha-1}{\alpha}
[r-\Dh_\alpha(\cN\|\cM)]}.
\end{equation}
If the Belavkin--Staszewski channel divergence
$\Dh(\cN\|\cM)$ in \eqref{eq:bs-channel} is finite, every
$r>\Dh(\cN\|\cM)$ therefore forces the probability $1-a_n$
of correctly accepting $\cN$ to decay exponentially, uniformly
over all allowed processes and tests satisfying the type-II constraint.
\end{corollary}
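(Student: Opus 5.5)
The plan is to derive everything from Proposition~\ref{prop:binary} specialized to identical slots, where $\cN_i=\cN$ and $\cM_i=\cM$, so that the right-hand side becomes $n\Dh_\alpha(\cN\|\cM)$. No new quantum argument is needed. Theorem~\ref{thm:main} already controls every process $\Theta\in\mathfrak P_n$, and what remains is scalar manipulation of \eqref{eq:binary-strong}.

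\textbf{Step 1: the one-shot bound \eqref{eq:testing-oneshot}.} If $\Dh_\alpha(\cN\|\cM)=+\infty$ the bound is trivial, so assume it is finite and fix $1<\alpha\leq2$.
\begin{itemize}
\item Take any feasible pair $(\Theta,T)$ in \eqref{eq:testing-beta}. Then $1-a=\Tr[T\Theta(J_\cN^{\otimes n})]\geq1-\varepsilon>0$, and set $b=\Tr[T\Theta(J_\cM^{\otimes n})]$.
\item If $b=0$, the convention in Proposition~\ref{prop:binary} forces $1-a=0$, which contradicts feasibility. Hence $b>0$.
\item Inequality \eqref{eq:binary-strong} gives $1-\varepsilon\leq 2^{-\frac{\alpha-1}{\alpha}[-\log b-n\Dh_\alpha(\cN\|\cM)]}$.
\item Taking logarithms and multiplying by $\alpha/(\alpha-1)>0$ rearranges this to $-\log b\leq n\Dh_\alpha(\cN\|\cM)+\frac{\alpha}{\alpha-1}\log\frac1{1-\varepsilon}$.
\end{itemize}
This bound is uniform over feasible pairs, so it bounds the supremum of $-\log b$. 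That supremum is $-\log\beta^{\ind}_\varepsilon$, because $-\log$ is decreasing and continuous on $(0,\infty]$ and the infimum need not be attained. This proves \eqref{eq:testing-oneshot}. If the infimum were $0$, the same uniform bound keeps every $b$ above a fixed positive constant, which gives a contradiction.

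\textbf{Step 2: the exponent bound \eqref{eq:testing-exponent}.} Apply \eqref{eq:binary-strong} directly with $-\log b_n\geq nr$.
\begin{itemize}
\item The right-hand side of \eqref{eq:binary-strong} is increasing in $b$, so substituting $-\log b\geq nr$ gives exactly \eqref{eq:testing-exponent}.
\item The case $b_n=0$ again forces $1-a_n=0$, and the bound holds trivially.
\end{itemize}

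\textbf{Step 3: the strong converse for $r>\Dh(\cN\|\cM)$.} This is the main obstacle, because it needs an $\alpha>1$ with $\Dh_\alpha(\cN\|\cM)<r$ and finite.
\begin{itemize}
\item Use the fixed-pair limit \eqref{eq:bs-limit}: $\Dh_\alpha(\cN\|\cM)\to\Dh(\cN\|\cM)$ as $\alpha\downarrow1$.
\item Hence some $\alpha\in(1,2]$ has $\Dh_\alpha(\cN\|\cM)<r$, and in particular this value is finite.
\item The limit is taken only for the fixed channel pair and is never exchanged with a process supremum. Uniformity over processes comes from \eqref{eq:testing-exponent}, which holds for each process with the same $\alpha$.
\item Finiteness of $\Dh(\cN\|\cM)$ should be checked to guarantee support inclusion $J_\cN\ll J_\cM$, so that $\Dh_\alpha$ is finite for $\alpha$ near one. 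This is the limit statement already recorded around \eqref{eq:bs-limit}, and I would cite it rather than reprove it.
\end{itemize}
With this $\alpha$, \eqref{eq:testing-exponent} gives the exponential decay rate $\frac{\alpha-1}{\alpha}[r-\Dh_\alpha(\cN\|\cM)]>0$.
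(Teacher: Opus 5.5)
Your proposal is correct and follows the paper's proof essentially step for step. The paper likewise reads off $b_n\geq(1-\varepsilon)^{\alpha/(\alpha-1)}2^{-n\Dh_\alpha(\cN\|\cM)}$ from \eqref{eq:binary-strong} for every feasible test before taking the infimum, substitutes $b_n\leq2^{-nr}$ for the exponent bound, and uses the fixed-pair limit \eqref{eq:bs-limit} to select one $\alpha\in(1,2]$ with $\Dh_\alpha(\cN\|\cM)<r$. Your separate support-inclusion check in Step 3 is not needed, because that limit already gives $\Dh_\alpha(\cN\|\cM)<r<+\infty$ for all $\alpha$ sufficiently close to one.
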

\begin{proof}
When $\Dh_\alpha(\cN\|\cM)$ is finite, every feasible test in
\eqref{eq:testing-beta} obeys, by \eqref{eq:binary-strong},
\begin{equation}\label{eq:testing-beta-lower}
b_n\geq
(1-\varepsilon)^{\alpha/(\alpha-1)}
2^{-n\Dh_\alpha(\cN\|\cM)}.
\end{equation}
Taking the infimum over these tests and then the negative logarithm
proves \eqref{eq:testing-oneshot}; no minimizing process or test is
required. An infinite channel divergence makes that inequality
vacuous. Substituting $b_n\leq2^{-nr}$ in
\eqref{eq:binary-strong} gives \eqref{eq:testing-exponent},
including $b_n=0$ by the convention in Proposition~\ref{prop:binary}.
Finally, for $r>\Dh(\cN\|\cM)$, the fixed-pair limit
\eqref{eq:bs-limit} provides one fixed $\alpha\in(1,2]$ such
that $\Dh_\alpha(\cN\|\cM)<r$. The exponent in
\eqref{eq:testing-exponent} is then positive.
\end{proof}
The last assertion is a strong-converse guarantee above the stated
threshold. It does not identify the optimal Stein threshold.
For each fixed $n$ or rate $r$, the bounds can also be optimized
over $1<\alpha\leq2$; no limit needs to be exchanged with that
optimization. There is no polynomial factor in
\eqref{eq:testing-exponent} or \eqref{eq:testing-oneshot}.

\subsection{Symmetric discrimination}
The range for $\alpha$ below one is useful when both errors are penalized.
For equal priors, define the error probability and, for $n$
identical uses of a channel pair, its process optimum by
\begin{equation}\label{eq:symmetric-error}
P_{\err}\coloneqq\frac{a+b}{2},
\qquad
P_{\err}^{\ind}(n)\coloneqq
\inf\{P_{\err}:\Theta\in\mathfrak P_n,\ 0\leq T\leq I\}.
\end{equation}
Here $a,b$ are evaluated using \eqref{eq:testing-states}--\eqref{eq:errors}
with $\cN_i=\cN$ and $\cM_i=\cM$, and $\mathfrak P_n$ is the class of processes on these $n$ slots
in Definition~\ref{def:process-hypothesis-testing}.
At $\alpha=1/2$, the binary R\'enyi moment is a classical
overlap that can be related directly to $P_{\err}$.

\begin{corollary}[Equal-prior error bound]\label{cor:symmetric}
Let $n\geq1$ be an integer and let
$\cN_i,\cM_i\colon A_i\to B_i$ be finite-dimensional
channel pairs with equal prior probabilities for the two tuples.
For every deterministic completely positive channel-to-state
process in Definition~\ref{def:process} and every binary test,
let $a,b$ be as in \eqref{eq:errors}, and let $P_{\err}=(a+b)/2$ as in
\eqref{eq:symmetric-error}. Then
\begin{equation}\label{eq:symmetric-bound}
P_{\err}\geq
\frac{1-\sqrt{1-2^{-\sum_{i=1}^n\Dh_{1/2}(\cN_i\|\cM_i)}}}{2}
\geq\frac14\,2^{-\sum_{i=1}^n\Dh_{1/2}(\cN_i\|\cM_i)}.
\end{equation}
Consequently, for $\cN_i=\cN$ and $\cM_i=\cM$ in every slot,
the optimal error $P_{\err}^{\ind}(n)$ in
\eqref{eq:symmetric-error} satisfies
\begin{equation}\label{eq:symmetric-exponent}
\limsup_{n\to\infty}-\frac1n\log P_{\err}^{\ind}(n)
\leq\Dh_{1/2}(\cN\|\cM).
\end{equation}
The channel divergence is defined in \eqref{eq:channel-divergence};
the bounds include its infinite values, with $2^{-\infty}=0$.
\end{corollary}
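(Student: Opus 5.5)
The plan is to specialize inequality \eqref{eq:binary-below} of Proposition~\ref{prop:binary} to $\alpha=1/2$ and then invert a classical overlap bound. Write $S\coloneqq\sum_{i=1}^n\Dh_{1/2}(\cN_i\|\cM_i)$. Proposition~\ref{prop:binary} at $\alpha=1/2$ gives the fidelity-type overlap bound
\begin{equation*}
\sqrt{(1-a)b}+\sqrt{a(1-b)}\geq 2^{-S/2}.
\end{equation*}
If $S=+\infty$ the right side is zero, and the claimed bound reads $P_{\err}\geq0$, which is trivial. So I will assume $S<\infty$ and set $F\coloneqq 2^{-S/2}\in(0,1]$. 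Here $F\leq1$ follows from nonnegativity of the channel divergences (Corollary~\ref{cor:state-dpi} with normalized comparisons).

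The next step bounds the left side in terms of $P_{\err}$. By Cauchy--Schwarz on the vectors $(\sqrt{1-a},\sqrt a)$ and $(\sqrt b,\sqrt{1-b})$, or more directly by the classical Fuchs--van de Graaf relation, I expect
\begin{equation*}
\sqrt{(1-a)b}+\sqrt{a(1-b)}\leq 2\sqrt{P_{\err}(1-P_{\err})}.
\end{equation*}
To check this, set $s=a+b$. Cauchy--Schwarz in the form $\sqrt{xy}+\sqrt{zw}\leq\sqrt{(x+z)(y+w)}$ with $x=b$, $y=1-a$, $z=a$, $w=1-b$ gives the bound $\sqrt{(a+b)(2-a-b)}=\sqrt{s(2-s)}=2\sqrt{P_{\err}(1-P_{\err})}$, since $P_{\err}=s/2$.

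Combining the two displays yields $4P_{\err}(1-P_{\err})\geq F^2=2^{-S}$. Solving this quadratic in $P_{\err}$ gives $P_{\err}\geq\bigl(1-\sqrt{1-2^{-S}}\bigr)/2$ or $P_{\err}\geq\bigl(1+\sqrt{1-2^{-S}}\bigr)/2$. In either case the first inequality in \eqref{eq:symmetric-bound} holds. There is no need to assume $P_{\err}\leq1/2$, since the larger root also exceeds the smaller one. For the second inequality I use $1-\sqrt{1-x}\geq x/2$ for $x\in[0,1]$, which follows from $\sqrt{1-x}\leq1-x/2$. This gives a bound of at least $2^{-S}/4$.

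For the asymptotic statement, I take $\cN_i=\cN$ and $\cM_i=\cM$, so $S=n\Dh_{1/2}(\cN\|\cM)$. The bound holds uniformly over every process in $\mathfrak P_n$ and every test, so taking the infimum gives $P_{\err}^{\ind}(n)\geq\frac14 2^{-n\Dh_{1/2}(\cN\|\cM)}$. Then $-\frac1n\log P_{\err}^{\ind}(n)\leq\Dh_{1/2}(\cN\|\cM)+\frac2n$, and taking the $\limsup$ proves \eqref{eq:symmetric-exponent}. If the divergence is infinite, \eqref{eq:symmetric-exponent} is trivial.

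The only mild obstacle is bookkeeping with the conventions: zero errors, and the zero-moment convention below one in \eqref{eq:classical-divergence}. These are handled because the moment inequality \eqref{eq:binary-below} is stated directly in moment form, with $2^{-\infty}=0$.
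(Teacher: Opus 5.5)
Your proposal is correct and follows essentially the same route as the paper. Specialize \eqref{eq:binary-below} to $\alpha=1/2$, bound the overlap by $\sqrt{(a+b)(2-a-b)}=2\sqrt{P_{\err}(1-P_{\err})}$ via Cauchy--Schwarz, solve the quadratic, use $1-\sqrt{1-x}\geq x/2$ (the paper rationalizes the numerator), and take the infimum before the $\limsup$.

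There are two harmless slips. First, the vector pairing you name first, $(\sqrt{1-a},\sqrt a)$ and $(\sqrt b,\sqrt{1-b})$, only gives the bound $1$; your subsequent pairing $x=b$, $y=1-a$, $z=a$, $w=1-b$ is the correct one. Second, $4P_{\err}(1-P_{\err})\geq 2^{-S}$ confines $P_{\err}$ to the interval between the two roots, rather than allowing values above the larger root. Either way, the lower bound by the smaller root follows.
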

\begin{proof}
Equation~\eqref{eq:binary-below} implies
\begin{align}
2^{-\sum_{i=1}^n\Dh_{1/2}(\cN_i\|\cM_i)/2}
&\leq\sqrt{(1-a)b}+\sqrt{a(1-b)}
\label{eq:symmetric-cs:line-1}\\
&\leq\sqrt{(a+b)(2-a-b)}
=2\sqrt{P_{\err}(1-P_{\err})},\label{eq:symmetric-cs}
\end{align}
where the second inequality is Cauchy--Schwarz. Solving the
resulting quadratic inequality gives the first bound in
\eqref{eq:symmetric-bound}. Rationalizing its numerator gives
the second bound. Both hold uniformly over all strategies, so
they remain true after taking the infimum of their error
probabilities. For identical channel pairs, the sum is
$n\Dh_{1/2}(\cN\|\cM)$. Taking $\limsup$ proves \eqref{eq:symmetric-exponent}.
\end{proof}
If $\sum_{i=1}^n\Dh_{1/2}(\cN_i\|\cM_i)=0$, the first bound is exactly $1/2$. If
$\sum_{i=1}^n\Dh_{1/2}(\cN_i\|\cM_i)=+\infty$, it is zero. The corresponding geometric-fidelity
bound for sequential strategies was developed
in~\cite[Section 6.3, Conclusion 49]{KatariyaWilde2021}; here the process
inequality extends its validity to the general class.

\subsection{Hypothesis exclusion and a channel-radius bound}
For several hypotheses, an exclusion measurement reports a label
that is intended \emph{not} to be the true label. An error occurs
when it reports the actual label. Binary exclusion is binary
discrimination after relabeling the two outcomes.

Let $k\geq2$ and $n\geq1$ be integers. Let
$\cN_1,\ldots,\cN_k\colon A\to B$ have positive fixed priors
$p_1,\ldots,p_k$, satisfying $\sum_xp_x=1$. Each hypothesis supplies
$n$ memoryless uses of the corresponding channel. For a fixed
process $\Theta_n$, let
\begin{equation}\label{eq:exclusion-output}
\rho_x\coloneqq\Theta_n(J_{\cN_x}^{\otimes n}).
\end{equation}
An exclusion measurement $\{\Lambda_x\}_{x=1}^k$ has error
$P_{\err}=\sum_xp_x\Tr(\Lambda_x\rho_x)$.
For this task, define
\begin{equation}\label{eq:exclusion-optimum}
P_{\err,\mathrm{excl}}^{\ind}(n)
\coloneqq\inf\left\{
\sum_xp_x\Tr[\Lambda_x\Theta_n(J_{\cN_x}^{\otimes n})]:
\Theta_n\in\mathfrak P_n,\ \{\Lambda_x\}_x\text{ a POVM}
\right\},
\end{equation}
where $\mathfrak P_n$ is the class of deterministic completely
positive channel-to-state processes in Definition~\ref{def:process}
on these $n$ slots, with arbitrary finite-dimensional output.
The same process and measurement are used under all hypotheses.
A POVM here has $k$ positive effects summing to the identity.

\Needspace{17\baselineskip}
\begin{theorem}[Exclusion under general processes]\label{thm:exclusion}
Let $k\geq2$ and $n\geq1$ be integers, and let
$\cN_1,\ldots,\cN_k\colon A\to B$ be channels between
finite-dimensional systems. Fix priors $p_x>0$ satisfying
$\sum_{x=1}^k p_x=1$, independent of $n$. Each hypothesis
supplies $n$ memoryless uses of its channel. Let
$P_{\err,\mathrm{excl}}^{\ind}(n)$ be the infimum in
\eqref{eq:exclusion-optimum} over deterministic completely positive
channel-to-state processes of Definition~\ref{def:process} and
$k$-outcome measurements, with the same process and measurement
under every hypothesis; an error means reporting the true label.
Let $\mathfrak C$ be the set of all channels $\cT\colon A\to B$.
For every $1<\alpha\leq2$, the exclusion error satisfies the
following bound, where $\Dh_\alpha$ is the channel divergence
in \eqref{eq:channel-divergence}:
\begin{equation}\label{eq:exclusion-finite}
-\frac1n\log P_{\err,\mathrm{excl}}^{\ind}(n)
\leq
\inf\Biggl\{
\frac{1}{n(\alpha-1)}
\log\!\left[
\sum_{x=1}^k p_x^{1-\alpha}
2^{n(\alpha-1)\Dh_\alpha(\cT\|\cN_x)}
\right]:\cT\in\mathfrak C\Biggr\}.
\end{equation}
Furthermore,
\begin{equation}\label{eq:exclusion-radius}
\limsup_{n\to\infty}-\frac1n\log P_{\err,\mathrm{excl}}^{\ind}(n)
\leq
\inf\left\{\max_x\Dh(\cT\|\cN_x):
\cT\colon A\to B\text{ a channel}\right\}.
\end{equation}
Here $\Dh$ is the Belavkin--Staszewski channel divergence in
\eqref{eq:bs-channel}, and both bounds are understood in the
extended-real sense, with $-\log0=+\infty$.
\end{theorem}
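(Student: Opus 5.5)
The plan is to reduce exclusion to a family of binary comparisons against an auxiliary channel $\cT$, as in the sequential geometric exclusion converse of Ji, Mishra, Mosonyi, and Wilde, with Theorem~\ref{thm:main} supplying the process step.

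\textbf{Setup.} Fix a process $\Theta_n\in\mathfrak P_n$, a POVM $\{\Lambda_x\}_x$, a channel $\cT\in\mathfrak C$, and $1<\alpha\leq2$. Set $\omega\coloneqq\Theta_n(J_{\cT}^{\otimes n})$; this is a normalized state, since $\Theta_n$ is normalized on every independent channel tuple. Let $q_x\coloneqq\Tr(\Lambda_x\omega)$, so $\sum_xq_x=1$, and let $e_x\coloneqq\Tr(\Lambda_x\rho_x)$.

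\textbf{Step 1 (binary reduction).} For each $x$, apply Proposition~\ref{prop:binary} with first tuple $(\cT,\ldots,\cT)$ and second tuple $(\cN_x,\ldots,\cN_x)$. Use the test $\Lambda_x$ as the acceptance effect for $\cT$. Retaining only the first moment term, as in \eqref{eq:first-binary-term}, gives
\[
q_x^\alpha e_x^{1-\alpha}\leq 2^{n(\alpha-1)\Dh_\alpha(\cT\|\cN_x)}.
\]
If some $\Dh_\alpha(\cT\|\cN_x)=+\infty$, the right side of \eqref{eq:exclusion-finite} is infinite and there is nothing to prove, so assume all of these divergences are finite. Then $e_x=0$ forces $q_x=0$.

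\textbf{Step 2 (aggregate over labels).} Rearranging Step 1 gives $q_x\leq e_x^{(\alpha-1)/\alpha}\,2^{(\alpha-1)nD_x/\alpha}$, where $D_x\coloneqq\Dh_\alpha(\cT\|\cN_x)$. Write $e_x=p_x e_x/p_x$, apply Hölder with exponents $\alpha/(\alpha-1)$ and $\alpha$, and use $\sum_xq_x=1$:
\[
1=\sum_xq_x\leq\Bigl(\sum_xp_xe_x\Bigr)^{(\alpha-1)/\alpha}\Bigl(\sum_xp_x^{1-\alpha}2^{n(\alpha-1)D_x}\Bigr)^{1/\alpha}.
\]
Solving this for $P_{\err}=\sum_xp_xe_x$ gives
\[
P_{\err}\geq\Bigl[\sum_xp_x^{1-\alpha}2^{n(\alpha-1)D_x}\Bigr]^{-1/(\alpha-1)}.
\]
The main technical care is in this step: choosing the Hölder split so that the $p_x^{1-\alpha}$ weights appear exactly as in \eqref{eq:exclusion-finite}, and handling labels with $e_x=0$ or $q_x=0$. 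The bound is uniform in $\Theta_n$ and the POVM, so it survives taking the infimum. Taking $-\frac1n\log$ and then the infimum over $\cT$ yields \eqref{eq:exclusion-finite}.

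\textbf{Step 3 (asymptotic radius).} Bound the sum by $k\max_xp_x^{1-\alpha}2^{n(\alpha-1)D_x}$. The right side of \eqref{eq:exclusion-finite} is then at most
\[
\max_xD_x+\frac{\log k+(\alpha-1)\log(1/\min_xp_x)}{n(\alpha-1)}.
\]
Take $\limsup_n$ at fixed $\alpha$ and fixed $\cT$, which removes the correction term. Then let $\alpha\downarrow1$ using the fixed-pair limit \eqref{eq:bs-limit} for each of the finitely many pairs $(\cT,\cN_x)$; the maximum of finitely many limits commutes with the limit. Finally take the infimum over $\cT$. If $\Dh(\cT\|\cN_x)$ is infinite for some $x$ the bound for that $\cT$ is trivial. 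Otherwise, finiteness for $\alpha$ near one follows from the limit, since $\Dh_\alpha$ is finite for all $\alpha$ sufficiently close to $1$ when the limit is finite. This yields \eqref{eq:exclusion-radius}.
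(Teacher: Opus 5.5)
Your proposal is correct and follows the paper's proof essentially step by step. Both use the common reference output $\Theta_n(J_{\cT}^{\otimes n})$ with $\sum_x q_x=1$, the one-outcome binary bound from Theorem~\ref{thm:main} (which you reach through Proposition~\ref{prop:binary}), the same H\"older split with exponents $\alpha/(\alpha-1)$ and $\alpha$, and then $n\to\infty$ at fixed $\cT$ and $\alpha$, followed by $\alpha\downarrow1$ and the infimum over $\cT$; your explicit $O(1/n)$ bound on the logarithmic sum is just a more concrete version of the paper's remark that it converges to $\max_x d_x$.
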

\paragraph{Proof strategy.}
An auxiliary channel produces a common reference distribution
$q_x$ satisfying $\sum_x q_x=1$. Bounding each $q_x$ by the corresponding
erroneous-report probability forces a lower bound on the total
exclusion error. This is why the useful divergence has the auxiliary
channel in its first argument, $\Dh_\alpha(\cT\|\cN_x)$.
\begin{proof}
The auxiliary channel supplies one common reference output,
\begin{equation}\label{eq:exclusion-reference}
\omega_{\cT}\coloneqq\Theta_n(J_{\cT}^{\otimes n}).
\end{equation}
It is normalized by the process assumption. Define
$q_x=\Tr(\Lambda_x\omega_{\cT})$ and
$e_x=\Tr(\Lambda_x\rho_x)$, so $\sum_xq_x=1$.
Theorem~\ref{thm:main}, followed by the binary measurement
$\{\Lambda_x,I-\Lambda_x\}$, gives
\begin{equation}\label{eq:exclusion-one-outcome}
q_x\leq
e_x^{(\alpha-1)/\alpha}
2^{n(\alpha-1)\Dh_\alpha(\cT\|\cN_x)/\alpha}.
\end{equation}
An infinite divergence makes the final bound trivial for that
choice of $\cT$; otherwise the inequality includes the zero-probability
cases by support inclusion.

Set $d_x=\Dh_\alpha(\cT\|\cN_x)$. To introduce the priors,
factor each term on the right of
\eqref{eq:exclusion-one-outcome} as
\begin{equation}\label{eq:exclusion-factorization}
e_x^{(\alpha-1)/\alpha}2^{n(\alpha-1)d_x/\alpha}
=(p_xe_x)^{(\alpha-1)/\alpha}
\left(p_x^{1-\alpha}2^{n(\alpha-1)d_x}\right)^{1/\alpha}.
\end{equation}
Summing and applying H\"older with exponents
$\alpha/(\alpha-1)$ and $\alpha$ separates the total exclusion
error from the comparison-divergence factor:
\begin{equation}\label{eq:exclusion-holder}
1\leq
\left(\sum_xp_xe_x\right)^{(\alpha-1)/\alpha}
\left(\sum_xp_x^{1-\alpha}2^{n(\alpha-1)d_x}\right)^{1/\alpha}.
\end{equation}
Rearrange, take logarithms, and optimize over $\cT$. The estimate
is uniform over all processes and measurements, proving
\eqref{eq:exclusion-finite}.

In finite dimension, the supported Choi expression and the normalized
maximally entangled probe give
$\Dh(\cT\|\cN_x)<+\infty$ if and only if
$J_{\cT}\ll J_{\cN_x}$; see Appendix~\ref{app:choi} and
\eqref{eq:bs-choi-limit}. This is also the support condition for
finite $\Dh_\alpha(\cT\|\cN_x)$ at every $1<\alpha\leq2$
in \eqref{eq:channel-choi-formula}. Thus every auxiliary channel
contributing a finite radius is eligible for the following argument.
For every fixed $\cT$ satisfying $d_x<+\infty$ for every $x$,
the logarithmic sum in
\eqref{eq:exclusion-finite}, divided by $n(\alpha-1)$, converges
to $\max_xd_x$. Therefore the left side of
\eqref{eq:exclusion-radius} is at most $\max_xd_x$ for every
fixed $\cT$ and fixed $\alpha>1$. Let $\alpha\downarrow1$
using \eqref{eq:bs-limit} for this finite collection, and then
take the infimum over $\cT$. This order of operations requires
no interchange of an infimum and a limit. If there is no channel
with all comparison divergences finite, the claimed upper bound
is infinite and the assertion is automatic.
\end{proof}

The right side of \eqref{eq:exclusion-radius} is the left
Belavkin--Staszewski channel radius studied in
\cite[Theorem 26]{JiEtAl2026}.
That paper establishes its barycentric representation and its
use as an adaptive exclusion converse. Its Remark 30 asks
whether the geometric bound extends beyond definite causal order;
Theorem~\ref{thm:exclusion} gives this extension for the completely
positive process class. The separately affine abstract class
treated by its Appendix H is broader, so no claim about every
transformation in that broader class is needed here.

The binary specialization was noted for adaptive strategies
in~\cite[Remark 28]{JiEtAl2026}. We state its consequence for
the present process class explicitly.

\begin{corollary}[Binary discrimination from exclusion]
\label{cor:binary-radius}
Let $\cN,\cM\colon A\to B$ be channels between finite-dimensional
systems, with equal prior probabilities. Let $n\geq1$, and let
$P_{\err}^{\ind}(n)$ be the optimal error defined by the infimum in
\eqref{eq:symmetric-error} over deterministic completely positive
channel-to-state processes of Definition~\ref{def:process} on $n$
slots and binary tests. Write $\mathfrak C$ for the set of channels
$A\to B$. For every $1<\alpha\leq2$,
\begin{equation}\label{eq:binary-exclusion-finite}
P_{\err}^{\ind}(n)
\geq\frac12\sup\Biggl\{
\left[
2^{n(\alpha-1)\Dh_\alpha(\cT\|\cN)}
+2^{n(\alpha-1)\Dh_\alpha(\cT\|\cM)}
\right]^{-1/(\alpha-1)}:
\cT\in\mathfrak C\Biggr\}.
\end{equation}
Define the left Belavkin--Staszewski radius of the pair by
\begin{equation}\label{eq:binary-bs-radius}
\widehat R(\cN,\cM)
\coloneqq
\inf\left\{
\max\{\Dh(\cT\|\cN),\Dh(\cT\|\cM)\}:
\cT\in\mathfrak C\right\},
\end{equation}
using the channel divergence in \eqref{eq:bs-channel}. Then
\begin{equation}\label{eq:binary-combined-exponent}
\limsup_{n\to\infty}-\frac1n\log P_{\err}^{\ind}(n)
\leq
\min\left\{
\Dh_{1/2}(\cN\|\cM),\widehat R(\cN,\cM)
\right\}.
\end{equation}
The finite-use bound uses the channel divergence in
\eqref{eq:channel-divergence}. Its bracket contributes zero after
the negative power if either divergence is infinite. All exponent
bounds use $-\log0=+\infty$ and hold in the extended-real sense.
\end{corollary}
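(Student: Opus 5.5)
The plan is to specialize Theorem~\ref{thm:exclusion} to $k=2$ with $p_1=p_2=1/2$ and hypotheses $\cN_1=\cN$, $\cN_2=\cM$. Binary exclusion is binary discrimination with the two outcomes relabeled. Given any process $\Theta\in\mathfrak P_n$ and binary test $\{T,I-T\}$ accepting $\cN$, set $\Lambda_{\cN}=I-T$ and $\Lambda_{\cM}=T$. The exclusion error $\tfrac12\Tr[(I-T)\rho]+\tfrac12\Tr[T\sigma]$ then equals $(a+b)/2=P_{\err}$. Hence $P_{\err}^{\ind}(n)=P_{\err,\mathrm{excl}}^{\ind}(n)$ for these priors.

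Exponentiating \eqref{eq:exclusion-finite} for a fixed $\cT$ gives
\begin{equation}
P_{\err}^{\ind}(n)\geq\left[\sum_x 2^{\alpha-1}2^{n(\alpha-1)d_x}\right]^{-1/(\alpha-1)},
\end{equation}
since $p_x^{1-\alpha}=2^{\alpha-1}$. Pulling out the common factor, $(2^{\alpha-1})^{-1/(\alpha-1)}=1/2$, yields the bracket form in \eqref{eq:binary-exclusion-finite}. If either divergence is infinite, the bracket is infinite and that $\cT$ contributes $0$ after the negative power, consistent with the stated convention. Taking the supremum over $\cT\in\mathfrak C$ proves the finite-use bound. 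One could instead rerun the H\"older step in the proof of Theorem~\ref{thm:exclusion} directly, but invoking the theorem suffices because its bound is uniform over processes and measurements.

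For \eqref{eq:binary-combined-exponent}, there are two separate upper bounds, and their minimum bounds the limsup. The first, $\Dh_{1/2}(\cN\|\cM)$, is exactly \eqref{eq:symmetric-exponent} from Corollary~\ref{cor:symmetric}. The second, $\widehat R(\cN,\cM)$, is \eqref{eq:exclusion-radius} with $k=2$ and the identification of errors above, since $\widehat R$ is the left Belavkin--Staszewski radius. Both limsup bounds hold simultaneously, so the limsup is at most their minimum.

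The only delicate point is bookkeeping rather than analysis. One must check that the exclusion infimum and the symmetric-error infimum range over the same class $\mathfrak P_n$ and the same tests, so the relabeling is a bijection between feasible strategies. One must also track the extended-real conventions when a divergence is infinite. No limit needs to be interchanged with an optimization, because Theorem~\ref{thm:exclusion} already handles the order $\alpha\downarrow1$ followed by the infimum over $\cT$.
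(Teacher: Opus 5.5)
Your proposal is correct and follows the paper's proof essentially verbatim. You relabel a binary test as the exclusion measurement $\Lambda_{\cN}=I-T$, $\Lambda_{\cM}=T$, specialize Theorem~\ref{thm:exclusion} to $k=2$ with $p_1=p_2=1/2$ (the factor $\tfrac12$ comes from $p_x^{1-\alpha}=2^{\alpha-1}$), and take the minimum of the exponent bounds from \eqref{eq:exclusion-radius} and Corollary~\ref{cor:symmetric}.
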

\begin{proof}
For a discrimination test $T$ that accepts $\cN$, choose the
exclusion effects $\Lambda_{\cN}=I-T$ and
$\Lambda_{\cM}=T$. Their equal-prior exclusion error is
\begin{equation}\label{eq:binary-exclusion-identification}
\frac12\Tr[(I-T)\rho]+\frac12\Tr[T\sigma]
=\frac{a+b}{2}.
\end{equation}
Conversely, every two-outcome exclusion measurement has this form,
so the two optimized errors coincide. Setting $k=2$ and
$p_1=p_2=1/2$ in \eqref{eq:exclusion-finite}, exponentiating,
and optimizing over the auxiliary channel gives
\eqref{eq:binary-exclusion-finite}. The factor $1/2$ comes from
$p_x^{1-\alpha}=2^{\alpha-1}$.
Equation~\eqref{eq:exclusion-radius} gives the exponent bound
$\widehat R(\cN,\cM)$, and Corollary~\ref{cor:symmetric} gives
$\Dh_{1/2}(\cN\|\cM)$. Taking the smaller right side proves
\eqref{eq:binary-combined-exponent}.
\end{proof}

Both bounds constrain how small the error can be: a larger lower
bound on $P_{\err}^{\ind}(n)$ is stronger, while a smaller upper
bound on its decay exponent is stronger. The radius uses a
common auxiliary channel in the first argument of two divergences.
It is therefore a different comparison from the direct order-$1/2$
divergence in \eqref{eq:symmetric-exponent}. The minimum in
\eqref{eq:binary-combined-exponent} is valid without assuming an
ordering between those two quantities. The following commuting
example shows that the radius can give a strict improvement.

\begin{example}[A strict improvement of the binary exponent bound]
\label{ex:binary-radius-improvement}
Let $A$ be one-dimensional and $B$ two-dimensional. For
$0<p<1$, $p\ne1/2$, let $\cN$ and $\cM$ prepare, respectively,
\begin{equation}\label{eq:binary-radius-example-states}
\rho_p=\operatorname{diag}(p,1-p),\qquad
\sigma_p=\operatorname{diag}(1-p,p).
\end{equation}
A channel on this input prepares a state, so its channel divergences
in \eqref{eq:channel-divergence} and \eqref{eq:bs-channel}
are the corresponding state divergences. The radius in
\eqref{eq:binary-bs-radius} satisfies
\begin{equation}\label{eq:binary-radius-example-values}
\widehat R(\cN,\cM)
=-\log\!\left(2\sqrt{p(1-p)}\right)
=\frac12\Dh_{1/2}(\cN\|\cM)
<\Dh_{1/2}(\cN\|\cM).
\end{equation}
To verify the radius, dephase an arbitrary auxiliary state in the
common eigenbasis of $\rho_p$ and $\sigma_p$. The data-processing
inequality \eqref{eq:bs-state-data-processing} shows that this cannot
increase either comparison divergence, so
it suffices to optimize over $\operatorname{diag}(t,1-t)$ satisfying
$0\leq t\leq1$. Write
$h_2(t)=-t\log t-(1-t)\log(1-t)$, with $0\log0=0$.
The larger of the two classical relative entropies is at least their
average, which equals
\begin{equation}\label{eq:binary-radius-example-average}
-h_2(t)-\frac12\log[p(1-p)]
\geq-\log\!\left(2\sqrt{p(1-p)}\right).
\end{equation}
The uniform auxiliary state $t=1/2$ attains equality in both
comparisons. The order-$1/2$ divergence is
$-2\log(2\sqrt{p(1-p)})$, proving
\eqref{eq:binary-radius-example-values}. Thus the upper bound on
the error exponent is halved for this family; no achievability
claim is needed for the comparison.

The finite-use bound can improve as well. For $p=9/10$, the
uniform auxiliary state and the fixed order $\alpha=3/2$ have
$\Dh_{3/2}(\cT\|\cN)=\Dh_{3/2}(\cT\|\cM)=\log(20/9)$.
Equation~\eqref{eq:binary-exclusion-finite} then gives
\begin{equation}\label{eq:binary-radius-example-finite}
P_{\err}^{\ind}(n)\geq\frac18\left(\frac9{20}\right)^n.
\end{equation}
At $n=4$, this is $0.00512578125$, whereas the full lower bound
in \eqref{eq:symmetric-bound} is
$(1-\sqrt{1-(9/25)^4})/2\approx0.00421682$.
\end{example}

The two finite-use lower bounds do not dominate one another
uniformly. For identical channels, \eqref{eq:symmetric-bound}
gives the exact value $1/2$, whereas optimizing
\eqref{eq:binary-exclusion-finite} over the stated order range
and the auxiliary channel gives $1/4$. Indeed, both divergences
are nonnegative, and choosing $\cT=\cN=\cM$ makes them zero;
the remaining factor is $2^{-1-1/(\alpha-1)}$, whose maximum
on $1<\alpha\leq2$ is $1/4$. Thus one should take the larger
of the two finite-use lower bounds and the smaller of their
asymptotic exponent bounds. In particular, the order-one radius
is obtained after taking the large-$n$ limit at fixed auxiliary
channel and fixed $\alpha>1$, as in Theorem~\ref{thm:exclusion};
it is not obtained by substituting $\alpha=1$ into
\eqref{eq:binary-exclusion-finite}.

\section{Communication and resource-generation converses}\label{sec:capacities}
A divergence bound becomes a coding converse when a successful output
passes a test that comparison outputs rarely pass. For processes without
definite order, one must state which comparison outputs the process
preserves. Deterministic normalization alone does not impose this
resource restriction. We first give a common test argument and then
specialize it to three operational settings.

\subsection{A comparison-test principle}
The comparison experiment replaces each physical channel by an
operation from a specified comparison set. The essential requirement
is that the complete experiment then produces an output that has
small acceptance probability for the target test. Figure~\ref{fig:comparison}
illustrates the two experiments to which we apply the divergence bound.

For each slot, let $\cF_i$ be a nonempty set of completely positive,
trace-nonincreasing comparison transformations. Let $\mathfrak S$
be a set of positive output operators of trace at most one.
A channel-to-state process $\mathcal P$ satisfies the required preservation
condition when
\begin{equation}\label{eq:comparison-preservation}
\mathcal P\!\left(\bigotimes_iJ_{\cM_i}\right)\in\mathfrak S
\quad\text{for every }\cM_i\in\cF_i.
\end{equation}
This is a condition on the complete protocol, including its initial
resources and final decoding. It can be verified from a physical
construction or imposed on an admissible process class. General
resource converses for processes, including indefinite order, have
also been developed using robustness and resource
weight~\cite[Theorem 2 and Supplementary Theorem 13]{RegulaTakagi2021}.
The following statement supplies
a finite-order geometric R\'enyi version.

\begin{theorem}[Comparison-test converse]\label{thm:comparison-test}
Let $n\geq1$ be an integer and let $\mathcal P$ be a
deterministic completely positive
channel-to-state process as in Definition~\ref{def:process},
with finite-dimensional slot interfaces $A_i\to B_i$ and output
system $E$, normalized on every independent channel tuple.
Let the nonempty comparison
sets $\cF_i$ consist of completely positive, trace-nonincreasing
transformations $\cM_i\colon A_i\to B_i$, and suppose
\eqref{eq:comparison-preservation} holds for a set
$\mathfrak S\subseteq\{\sigma\in\cL(E):\sigma\geq0,\ \Tr\sigma\leq1\}$.
Suppose $0\leq T\leq I_E$ is a test satisfying
$\Tr(T\sigma)\leq2^{-\ell}$ for every $\sigma\in\mathfrak S$,
where $\ell\geq0$. For channels $\cN_i\colon A_i\to B_i$,
use the Choi convention \eqref{eq:choi} and the channel divergence
\eqref{eq:channel-divergence} to set
\begin{equation}\label{eq:resource-distance}
d_{\alpha,i}(\cN_i)
\coloneqq
\inf\{\Dh_\alpha(\cN_i\|\cM_i):\cM_i\in\cF_i\},
\qquad 1<\alpha\leq2.
\end{equation}
If $s=\Tr[T\mathcal P(\bigotimes_iJ_{\cN_i})]$, then
\begin{equation}\label{eq:comparison-success}
s\leq
2^{-\frac{\alpha-1}{\alpha}
\left[\ell-\sum_{i=1}^n d_{\alpha,i}(\cN_i)\right]}.
\end{equation}
In particular, for $0\leq\varepsilon<1$, the condition $s\geq1-\varepsilon$
implies
\begin{equation}\label{eq:comparison-size}
\ell\leq
\sum_{i=1}^n d_{\alpha,i}(\cN_i)
+\frac{\alpha}{\alpha-1}\log\frac1{1-\varepsilon}.
\end{equation}
\end{theorem}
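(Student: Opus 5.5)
The plan is to fix an arbitrary tuple of comparison operations $\cM_i\in\cF_i$ and first prove \eqref{eq:comparison-success} with $\sum_i\Dh_\alpha(\cN_i\|\cM_i)$ in place of $\sum_i d_{\alpha,i}(\cN_i)$. Taking the infimum over each $\cM_i$ independently then gives the stated bound. This step is legitimate because the right-hand side of \eqref{eq:comparison-success} is monotone increasing in the divergence sum, and the sets $\cF_i$ are nonempty. If some $d_{\alpha,i}$ is infinite, the bound is vacuous, since $s\leq1$. So we may assume every chosen $\Dh_\alpha(\cN_i\|\cM_i)$ is finite.

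For the fixed tuple, let $\rho=\mathcal P(\bigotimes_iJ_{\cN_i})$, a normalized state by \eqref{eq:process-normalization}, and $\sigma=\mathcal P(\bigotimes_iJ_{\cM_i})$. By Proposition~\ref{prop:subchannels}, $\sigma$ is positive with trace at most one. By \eqref{eq:comparison-preservation} it lies in $\mathfrak S$, so $b\coloneqq\Tr(T\sigma)\leq2^{-\ell}$. Proposition~\ref{prop:subchannels}, applied with the trivial external input (a channel-to-state process), gives
\[
\Dh_\alpha(\rho\|\sigma)\leq\sum_i\Dh_\alpha(\cN_i\|\cM_i).
\]
Here the channel divergence of a process with one-dimensional input is just the state divergence.

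Next, apply data processing (Corollary~\ref{cor:state-dpi}, which allows unnormalized $\sigma$) under the measurement channel $\{T,I-T\}$. The output pairs are $(s,1-s)$ and $(b,\Tr\sigma-b)$. Keeping only the first nonnegative term of the classical moment, as in the proof of \eqref{eq:first-binary-term}, yields
\[
s^\alpha b^{1-\alpha}\leq2^{(\alpha-1)\sum_i\Dh_\alpha(\cN_i\|\cM_i)}.
\]
Since $1-\alpha<0$ and $b\leq2^{-\ell}$, we have $b^{1-\alpha}\geq2^{(\alpha-1)\ell}$. Taking $\alpha$-th roots then gives \eqref{eq:comparison-success}. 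The case $b=0$ requires the support convention: a finite divergence forces $s=0$, exactly as in Proposition~\ref{prop:binary}.

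Finally, \eqref{eq:comparison-size} follows by substituting $s\geq1-\varepsilon$, taking base-two logarithms, and multiplying by $\alpha/(\alpha-1)>0$.

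The only delicate point is that the comparison operations are merely trace-nonincreasing. We must therefore use the subchannel version (Proposition~\ref{prop:subchannels}) instead of Theorem~\ref{thm:main}, and make sure the classical data-processing step is valid with a subnormalized second distribution. Both are already covered by Corollary~\ref{cor:state-dpi} and the conventions fixed after \eqref{eq:classical-divergence}.
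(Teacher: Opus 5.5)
Your proposal is correct and follows the paper's proof. Both arguments bound $\Dh_\alpha(\rho\|\sigma)$ for the subnormalized comparison output via Proposition~\ref{prop:subchannels}, coarse-grain with $\{T,I-T\}$, keep the first-outcome moment term, insert $\Tr(T\sigma)\leq2^{-\ell}$ (using support inclusion when it vanishes), and pass to the infimum over comparisons. The only difference is in that last step: the paper picks $\eta/n$-near-optimal comparisons and lets $\eta\downarrow0$, whereas you take the infimum directly, which is valid because the right-hand side is continuous as well as increasing in the divergence sum.
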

\begin{proof}
If some $d_{\alpha,i}(\cN_i)$ is infinite, both conclusions
are trivial. Otherwise fix an arbitrary tolerance $\eta>0$
and choose, independently in each nonempty $\cF_i$, a
finite-divergence comparison satisfying
$\Dh_\alpha(\cN_i\|\cM_i)\leq d_{\alpha,i}(\cN_i)+\eta/n$.
Write
$\rho=\mathcal P(\bigotimes_iJ_{\cN_i})$,
$\sigma=\mathcal P(\bigotimes_iJ_{\cM_i})$, and $t=\Tr(T\sigma)$.
Proposition~\ref{prop:subchannels} bounds the divergence of this
normalized first output and subnormalized second output.
The measurement $\{T,I-T\}$ and the first-outcome term give
\begin{equation}\label{eq:resource-binary}
s^\alpha t^{1-\alpha}
\leq 2^{(\alpha-1)\Dh_\alpha(\rho\|\sigma)}
\leq
2^{(\alpha-1)\sum_{i=1}^n\Dh_\alpha(\cN_i\|\cM_i)}.
\end{equation}
The measurement channel $\{T,I-T\}$ is trace preserving, so
Corollary~\ref{cor:state-dpi} applies even though its second input
is subnormalized. Retaining the first-outcome term gives the first
inequality in \eqref{eq:resource-binary}.
If the divergence is finite and $t=0$, support inclusion forces
$s=0$. Otherwise substitute $t\leq2^{-\ell}$ in
\eqref{eq:resource-binary} and rearrange.
The sum of the comparison divergences exceeds the sum of
the infima in \eqref{eq:resource-distance} by at most $\eta$.
Letting $\eta\downarrow0$ proves
\eqref{eq:comparison-success} without an attainment assumption.
Its logarithmic form is \eqref{eq:comparison-size}.
\end{proof}

Figure~\ref{fig:comparison} separates the two hypotheses used in this
argument. The output test is a mathematical tool for the converse
and need not be implementable by the communicating parties.
\begin{figure}[!t]
\centering
\begin{tikzpicture}[font=\small,>=Latex,
 block/.style={draw=blue!55!black,fill=blue!4,rounded corners=2pt,
   minimum width=1.55cm,minimum height=.95cm,
   inner xsep=9pt,inner ysep=8pt},
 test/.style={draw=black!65,fill=black!3,rounded corners=2pt,
   minimum width=1.1cm,minimum height=.95cm,
   inner xsep=9pt,inner ysep=8pt},
 wire/.style={->,line width=.65pt},
 output/.style={align=left,text width=4.2cm,inner xsep=6pt,inner ysep=7pt}]
\node[inner sep=5pt] (n) at (-5.20,1.05) {$\bigotimes_i J_{\mathcal N_i}$};
\node[inner sep=5pt] (m) at (-5.20,-1.05) {$\bigotimes_i J_{\mathcal M_i}$};
\node[block] (tn) at (-2.65,1.05) {$\mathcal P$};
\node[block] (tm) at (-2.65,-1.05) {$\mathcal P$};
\node[test] (an) at (.90,1.05) {$T$};
\node[test] (am) at (.90,-1.05) {$T$};
\node[output,anchor=west] (pn) at (2.65,1.05)
 {acceptance probability\\[.45em]$\Tr(T\rho)\geq1-\varepsilon$};
\node[output,anchor=west] (pm) at (2.65,-1.05)
 {unconditioned acceptance\\[.45em]$\Tr(T\sigma)\leq2^{-\ell}$};
\draw[wire,shorten <=3pt] (n.east)--(tn.west);
\draw[wire,shorten <=3pt] (m.east)--(tm.west);
\draw[wire] (tn.east)--node[above=5pt,inner sep=3pt]{$\rho$}(an.west);
\draw[wire] (tm.east)--node[above=5pt,inner sep=3pt]{$\sigma\in\mathfrak S$}
 node[below=5pt,font=\footnotesize,inner sep=3pt]{$\Tr\sigma\leq1$}(am.west);
\draw[wire,shorten >=4pt] (an.east)--(pn.west);
\draw[wire,shorten >=4pt] (am.east)--(pm.west);
\end{tikzpicture}

\caption{A capacity converse compares two experiments with the same
process $\mathcal P$ and target test $T$. Resource channels produce
$\rho$, which must pass with probability at least $1-\varepsilon$.
Allowed comparison operations produce $\sigma\in\mathfrak S$
with $\Tr\sigma\leq1$. Its unconditioned acceptance weight
$\Tr(T\sigma)$ is at most $2^{-\ell}$. The process
inequality bounds the divergence between these outputs.
Preservation of $\mathfrak S$ is an additional assumption on the
complete communication protocol; it is not implied by deterministic
normalization. The diagram is a comparison of experiments, not
two stages of one protocol.}
\label{fig:comparison}
\end{figure}

For the asymptotic consequence, use $\cN$ in every slot and
one common comparison class $\cF$ that is independent of both
the slot and the blocklength. Then \eqref{eq:resource-distance}
defines one quantity $d_\alpha(\cN)$ and the sum in
\eqref{eq:comparison-success} is $n d_\alpha(\cN)$.
For a target $\ell_n\geq nr$, a fixed R\'enyi order
$1<\alpha\leq2$ gives the success bound
$2^{-n(\alpha-1)[r-d_\alpha(\cN)]/\alpha}$.
Thus every $r>\inf\{d_\alpha(\cN):1<\alpha\leq2\}$ is a
strong-converse rate for a class obeying
\eqref{eq:comparison-preservation} at every blocklength.
For rates specified by $\liminf_n\ell_n/n\geq r$, first use a
slightly smaller rate and then take its limit.
A capacity, defined as the supremum of rates achievable with
error tending to zero, is bounded by this infimum.
This statement does not identify the infimum with a limit of
optimized order-one quantities.

Table~\ref{tab:comparisons} summarizes the three specializations.
Their output conditions apply after the complete protocol,
including its preparation and decoding.
\begin{table}[t]
\centering
\small
\begin{tabular}{@{}>{\raggedright\arraybackslash}p{.24\linewidth}
>{\raggedright\arraybackslash}p{.34\linewidth}
>{\raggedright\arraybackslash}p{.34\linewidth}@{}}
\toprule
Target and error & Local comparisons & Required comparison output \\
\midrule
Entanglement overlap, \eqref{eq:entanglement-target}
& Completely positive operations in $\mathfrak V_\Theta$,
\eqref{eq:theta-class}
& Rains set across the final laboratories,
\eqref{eq:rains-preservation} \\[.5em]
Uniform-message decoding
& Replacer channels $\mathcal R_{\tau_i}$
& Message-independent estimate,
\eqref{eq:replacer-preservation} \\[.5em]
Complete private-state trace distance,
\eqref{eq:private-error}
& Entanglement-breaking channels
& Separable state across both complete laboratories \\
\bottomrule
\end{tabular}
\caption{Comparison conditions used in the communication converses.
Each row is a separate application of Theorem~\ref{thm:comparison-test}.
Deterministic normalization does not by itself imply any of the
listed output conditions.}
\label{tab:comparisons}
\end{table}

\subsection{Entanglement generation and quantum communication}
We first apply Theorem~\ref{thm:comparison-test} to the task of creating
shared entanglement. The target is a maximally entangled state, and
partial transposition provides a norm that separates this target from
the comparison outputs.

Alice and Bob start without shared entanglement. We bound the
process class satisfying the Rains-set preservation condition
\eqref{eq:rains-preservation}. Ordinary sequential protocols
with completely positive-partial-transpose-preserving assistance
belong to this class; the verification following
Corollary~\ref{cor:quantum} is given in
\eqref{eq:rains-transfer}--\eqref{eq:rains-sequential}. Such an
assisting channel remains completely positive when Bob's input
and output systems are partially transposed before and after
its action, respectively.
Their protocol
may use local memories and specified assisting operations.
Its final state is $\rho_{A_fB_f}$, with local target dimensions
$K$. For $0\leq\varepsilon<1$, its error criterion is
\begin{equation}\label{eq:entanglement-target}
\Tr(\Phi_{A_fB_f}^{K}\rho_{A_fB_f})\geq1-\varepsilon,
\qquad
\ket{\Phi^{K}}_{A_fB_f}
\coloneqq\frac1{\sqrt K}\sum_{j=1}^{K}\ket{j}_{A_f}\ket{j}_{B_f}.
\end{equation}
Here $\Phi^K=\ket{\Phi^K}\bra{\Phi^K}$. Entanglement transmission
is included by preparing this entangled state entirely in Alice's
laboratory, retaining one reference system there, and encoding
the other system. No initially shared entanglement is thereby
introduced.

Let $\mathsf t_B$ denote matrix transposition in a fixed orthonormal
basis of $B$ and write
$\sigma^{\transpose_B}=(\id_A\otimes\mathsf t_B)(\sigma)$.
For $B_f$, we use the basis in \eqref{eq:entanglement-target}.
A normalized separable state remains positive under partial
transposition, so its partial-transpose trace norm is one.
For the target, partial transposition gives
\begin{equation}\label{eq:maxent-partial-transpose}
(\Phi^K)^{\transpose_{B_f}}
=\frac1K\sum_{j,k=1}^K
\ket{j}\bra{k}_{A_f}\otimes\ket{k}\bra{j}_{B_f}
=\frac1K S,
\end{equation}
where $S(\ket{j}\otimes\ket{k})=\ket{k}\otimes\ket{j}$ is
the swap operator. Since $S^\dagger S=I_{A_fB_f}$, all of its
$K^2$ singular values equal one. Consequently,
\begin{equation}\label{eq:maxent-partial-transpose-norm}
\norm{(\Phi^K)^{\transpose_{B_f}}}_1
=\frac1K\Tr\sqrt{S^\dagger S}
=\frac{K^2}{K}=K.
\end{equation}
We therefore track this norm in the comparison experiment.
The Rains set collects the positive operators for which it is
at most one. The diamond-norm condition in \eqref{eq:theta-class}
controls this
norm after a comparison operation even when the transmitted
system is correlated with retained memories.
Define the Rains set and a class of comparison operations by
\begin{align}
\Rains(A;B)
&\coloneqq\{\sigma_{AB}\geq0:
\norm{\sigma_{AB}^{\transpose_B}}_1\leq1\},
\label{eq:rains-set}\\
\mathfrak V_\Theta(A\to B)
&\coloneqq\{\cM\colon A\to B\text{ completely positive}:
\norm{\mathsf t_B\circ\cM}_{\diamond}\leq1\}.
\label{eq:theta-class}
\end{align}
The diamond norm is defined as the stabilized induced trace norm:
$\norm{\Lambda}_\diamond=
\sup\{\norm{(\id_R\otimes\Lambda)(X)}_1:
X\in\cL(RA),\ \norm{X}_1=1,\ R\simeq A\}$
for $\Lambda\colon\cL(A)\to\cL(B)$.
Every element of \eqref{eq:rains-set} has trace at most one.
Every comparison in \eqref{eq:theta-class} is trace nonincreasing,
because for a state $\eta$ its positive output has trace at most
$\norm{\mathsf t_B\cM(\eta)}_1\leq1$.
These are the comparison sets used in Rains-type
converses~\cite{Rains2001,TomamichelWildeWinter2017,FangFawzi2021}.
The subscript $\Theta$ is the conventional name of this
comparison class and is unrelated to the protocol $\mathcal P$.

We require the complete protocol to satisfy
\begin{equation}\label{eq:rains-preservation}
\mathcal P\!\left(\bigotimes_iJ_{\cM_i}\right)
\in\Rains(A_f;B_f)
\quad\text{whenever }\cM_i\in\mathfrak V_\Theta(A_i\to B_i).
\end{equation}
The maximally entangled projector has comparison acceptance
at most $1/K$. By \eqref{eq:maxent-partial-transpose},
$(\Phi^K)^{\transpose_{B_f}}=S/K$. Since the swap operator is
unitary, $\norm{S}=1$, and trace-norm duality gives
\begin{equation}\label{eq:rains-overlap}
\Tr(\Phi^K\sigma)
=\frac1K\Tr(S\sigma^{\transpose_{B_f}})
\leq\frac1K\norm{\sigma^{\transpose_{B_f}}}_1
\leq\frac1K.
\end{equation}
This gives the following direct capacity application.

\Needspace{24\baselineskip}
\begin{corollary}[Entanglement-generation converse]\label{cor:quantum}
Let $n,K\geq1$ be integers, and let
$\cN_i\colon A_i\to B_i$ be channels on finite-dimensional
systems. Let $\mathcal P$ be a deterministic completely positive
channel-to-state process in Definition~\ref{def:process}, with
output $\rho_{A_fB_f}=\mathcal P(\bigotimes_iJ_{\cN_i})$ on
two $K$-dimensional target systems. Suppose the complete protocol
satisfies \eqref{eq:rains-preservation}: every tuple of comparisons
from $\mathfrak V_\Theta$ in \eqref{eq:theta-class} produces an
operator in the Rains set \eqref{eq:rains-set} across the final
laboratories. If the target overlap
$\Tr(\Phi_{A_fB_f}^{K}\rho_{A_fB_f})\geq1-\varepsilon$ in
\eqref{eq:entanglement-target} holds for $0\leq\varepsilon<1$,
then, for every $1<\alpha\leq2$,
\begin{equation}\label{eq:quantum-converse}
\log K\leq
\sum_{i=1}^n\widehat R_{\alpha,\Theta}(\cN_i)
+\frac{\alpha}{\alpha-1}\log\frac1{1-\varepsilon},
\end{equation}
where, for a channel $\cN\colon A\to B$,
\begin{equation}\label{eq:theta-information}
\widehat R_{\alpha,\Theta}(\cN)
\coloneqq
\inf\{\Dh_\alpha(\cN\|\cM):
\cM\in\mathfrak V_\Theta(A\to B)\}.
\end{equation}
Here $\Dh_\alpha$ is the channel divergence in
\eqref{eq:channel-divergence}. For a sequence of such protocols
using one fixed channel $\cN$ in all $n$ slots and satisfying
\eqref{eq:rains-preservation} for every $n$, let the target
dimension be $K_n$. If
$\liminf_{n\to\infty}n^{-1}\log K_n\geq r$ and
$r>\inf\{\widehat R_{\alpha,\Theta}(\cN):1<\alpha\leq2\}$,
then the target overlap tends to zero exponentially in $n$.
\end{corollary}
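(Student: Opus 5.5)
The plan is to apply Theorem~\ref{thm:comparison-test} directly. Take comparison sets $\cF_i=\mathfrak V_\Theta(A_i\to B_i)$, the target set $\mathfrak S=\Rains(A_f;B_f)$, and the test $T=\Phi^K_{A_fB_f}$. Each element of $\mathfrak V_\Theta$ is completely positive and trace nonincreasing, as noted after \eqref{eq:theta-class}. These sets are nonempty: the zero map lies in them, and so does any replacer channel preparing a product state, since the partial transpose of a state has trace norm one.

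Every element of $\Rains(A_f;B_f)$ is positive with trace at most one. Hence $\mathfrak S$ has the form required by Theorem~\ref{thm:comparison-test}, and the preservation hypothesis \eqref{eq:comparison-preservation} is exactly \eqref{eq:rains-preservation}. The test bound $\Tr(T\sigma)\leq 2^{-\ell}$ with $\ell=\log K$ is supplied by \eqref{eq:rains-overlap}; this uses trace-norm duality with $\norm{S}=1$ and needs no normalization of $\sigma$. With these choices, $d_{\alpha,i}(\cN_i)$ in \eqref{eq:resource-distance} is precisely $\widehat R_{\alpha,\Theta}(\cN_i)$. The success condition $s=\Tr(\Phi^K\rho)\geq1-\varepsilon$ then gives \eqref{eq:quantum-converse} via \eqref{eq:comparison-size}, in the extended-real sense. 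This covers the case where some infimum is infinite.

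For the asymptotic statement, I would fix $\alpha\in(1,2]$ with $\widehat R_{\alpha,\Theta}(\cN)<r$, which exists by the infimum assumption, and choose $r'$ with $\widehat R_{\alpha,\Theta}(\cN)<r'<r$. The liminf hypothesis gives $\log K_n\geq nr'$ for all large $n$. Then \eqref{eq:comparison-success}, with the sum equal to $n\widehat R_{\alpha,\Theta}(\cN)$, bounds the overlap by $2^{-n\frac{\alpha-1}{\alpha}[r'-\widehat R_{\alpha,\Theta}(\cN)]}$. This decays exponentially.

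The only real subtlety is making sure the hypotheses of Theorem~\ref{thm:comparison-test} are met rather than assumed. Nonemptiness of $\cF_i$ and trace nonincreasement of its elements are checked as above. The bound $\Tr(T\sigma)\leq1/K$ must hold for every $\sigma$ in the Rains set, which may be subnormalized; \eqref{eq:rains-overlap} handles this. No further analysis is needed, since Proposition~\ref{prop:subchannels} was already absorbed into the proof of Theorem~\ref{thm:comparison-test}.
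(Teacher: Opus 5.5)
Your proposal is correct and follows the paper's proof: apply Theorem~\ref{thm:comparison-test} with $T=\Phi^K$, $\ell=\log K$, comparison sets $\mathfrak V_\Theta(A_i\to B_i)$ and target set $\Rains(A_f;B_f)$, using \eqref{eq:rains-overlap} for the test hypothesis. For the asymptotic claim, your slightly-smaller-rate argument is the same as the paper's remark after Theorem~\ref{thm:comparison-test}, and your checks of nonemptiness, trace nonincrease, and $\ell=\log K\geq0$ supply details the paper leaves implicit.
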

\begin{proof}
Apply Theorem~\ref{thm:comparison-test} with
$T=\Phi^K$, $\ell=\log K$, and the comparison sets
\eqref{eq:rains-set}--\eqref{eq:theta-class}.
Equation~\eqref{eq:rains-overlap} verifies its test hypothesis.
\end{proof}
The quantity in \eqref{eq:theta-information} is Fang and Fawzi's
geometric R\'enyi \emph{Theta-information}
\cite[Definition 14 and Theorem 17]{FangFawzi2021}.
It differs from their geometric R\'enyi Rains information, which
optimizes a state Rains divergence over channel inputs.
Their semidefinite representations at
$\alpha=1+2^{-m}$, $m\in\{0,1,\ldots\}$, remain available for
the same single-letter comparison quantity
\cite[Proposition 19]{FangFawzi2021}.

For clarity, \eqref{eq:rains-preservation} can be verified for
ordinary sequential protocols with completely
positive-partial-transpose-preserving assistance. Such an assisting
channel $\mathcal L$, with Bob's registers denoted by $R_B$
before it and $R_B'$ afterward, has
$\mathsf t_{R_B'}\circ\mathcal L\circ\mathsf t_{R_B}$ completely positive
and trace preserving, and hence contracts the trace norm of a
partially transposed Hermitian input.

To verify the effect of a transmitted system, let
$\sigma_{R_A A R_B}$ be the current positive operator, with
Alice holding $R_AA$ and Bob holding $R_B$.
A comparison operation $\cM\colon A\to B$ delivers $B$ to
Bob, producing
\begin{equation}\label{eq:rains-transfer}
\omega_{R_A B R_B}
\coloneqq(\id_{R_A}\otimes\cM\otimes\id_{R_B})
(\sigma_{R_A A R_B}).
\end{equation}
Transposing Bob's complete output gives
\begin{equation}\label{eq:rains-transpose-identity}
\omega^{\transpose_{B R_B}}
=(\id_{R_A}\otimes(\mathsf t_B\circ\cM)\otimes\id_{R_B})
(\sigma^{\transpose_{R_B}}).
\end{equation}
The transposition on his old register $R_B$ commutes with the
operation on $A$. The diamond norm then bounds
\begin{equation}\label{eq:rains-sequential}
\norm{\omega^{\transpose_{B R_B}}}_1
\leq
\norm{\mathsf t_B\circ\cM}_\diamond
\norm{\sigma^{\transpose_{R_B}}}_1.
\end{equation}
The retained systems $R_A R_B$ act as the reference in this
inequality. To check that their dimension is unrestricted here,
diagonalize the Hermitian operator $\sigma^{\transpose_{R_B}}$.
Each normalized eigenvector has Schmidt rank at most $d_A$
across $A;R_A R_B$, so its reference support fits in the
diamond-norm definition. Apply that bound to each eigenvector
projector and use the triangle inequality; the sum of the
absolute eigenvalues is $\norm{\sigma^{\transpose_{R_B}}}_1$.
Starting with a separable state, whose partial-transpose trace
norm is one, induction proves \eqref{eq:rains-preservation}.
Local operations and two-way classical communication are among
these assisting channels. Thus Corollary~\ref{cor:quantum}
recovers a definite-order converse and extends it to every
process without definite order that satisfies the same global
preservation condition. Preservation only for normalized
positive-partial-transpose channels would be a weaker assumption
and is not substituted for \eqref{eq:rains-preservation}.

\subsection{Classical communication}
For classical communication, a successful output correlates Bob's
estimate with Alice's message. A useful comparison replaces every
forward channel by a fixed-output channel. When the complete comparison
experiment leaves Bob's state independent of the message, the target
test succeeds with probability only $1/K$. This is the hypothesis of
Theorem~\ref{thm:comparison-test} that we verify here.

Let $W$ be a uniform message with $K$ possible values and let
$\widehat W$ be Bob's estimate. The encoding, channel-slot process,
and decoding define a channel-to-state process with output
$\rho_{W\widehat W}$. Preshared entanglement independent of $W$
may be included. The bound concerns protocols satisfying
\eqref{eq:replacer-preservation}; ordinary protocols with
feedback and no other forward communication satisfy it by
Lemma~\ref{lem:feedback-replacer-preservation}.
A comparison channel is a replacer
$\mathcal R_\tau(X)=\Tr(X)\tau$. We impose
\begin{equation}\label{eq:replacer-preservation}
\mathcal P\!\left(\bigotimes_iJ_{\mathcal R_{\tau_i}}\right)
=\pi_W\otimes\omega_{\widehat W}
\quad\text{for every tuple of states }(\tau_i)_i,
\end{equation}
where $\pi_W=I_W/K$ and $\omega_{\widehat W}$ may depend on
the tuple. This condition says that after all communication
slots are replaced, Bob's estimate contains no message information.

The success projector
$T=\sum_{w=1}^K\ket{w}\bra{w}_W\otimes
\ket{w}\bra{w}_{\widehat W}$ has acceptance $1/K$
on \eqref{eq:replacer-preservation}. For $0\leq\varepsilon<1$,
a code with average decoding error at most $\varepsilon$ satisfies
\begin{equation}\label{eq:classical-converse}
\log K\leq
\sum_{i=1}^n\inf\{\Dh_\alpha(\cN_i\|\mathcal R_{\tau_i}):
\tau_i\in\cD(B_i)\}
+\frac{\alpha}{\alpha-1}\log\frac1{1-\varepsilon},
\qquad 1<\alpha\leq2.
\end{equation}
The corresponding exponential success bound follows from
\eqref{eq:comparison-success}.

Feedback-assisted communication has an established converse theory.
Bowen proved that noiseless quantum feedback does not increase
the entanglement-assisted classical capacity of a memoryless
quantum channel~\cite[Theorem 3]{Bowen2004}.
Cooney, Mosonyi, and Wilde gave a finite-block exponential converse
in terms of sandwiched R\'enyi information
\cite[Theorem 4]{CooneyMosonyiWilde2016}.
Their comparison experiment replaces the forward channel by a
replacer and makes Bob's state independent of the message
\cite[Section 5, Eq.~(5.5)]{CooneyMosonyiWilde2016}.
For classical feedback without initial shared entanglement,
Ding and Wilde proved a strong converse for classical communication
over entanglement-breaking channels
\cite[Section 3.2, Theorem 5, and Corollary 6]{DingWilde2018}.
Their exponential success bound also holds for arbitrary channels
when the encoder does not entangle inputs across different
channel uses.

Equation~\eqref{eq:classical-converse} uses the geometric
R\'enyi divergence, permits nonidentical forward slots, and applies
to the larger class of processes satisfying
\eqref{eq:replacer-preservation}. The following verification
identifies the ordinary feedback protocols contained in that class.

\begin{lemma}[Replacer preservation for feedback protocols]
\label{lem:feedback-replacer-preservation}
Let $n,K\geq1$ be integers. Consider a finite-dimensional
sequential protocol with $n$ forward slots
$\cN_i\colon A_i\to B_i$, used to communicate a uniform
message $W\in\{1,\ldots,K\}$ from Alice to Bob.
The parties may share an initial state independent of $W$, retain
arbitrary finite-dimensional local memories and classical records,
apply local channels, and transmit classical or quantum feedback
from Bob to Alice. Their only communication from Alice to Bob
is through the specified forward slots. The protocol includes
all measurement outcomes and a trace-preserving final decoder.
If every forward slot is replaced by
$\mathcal R_{\tau_i}(X)=\Tr(X)\tau_i$, for fixed states
$\tau_i\in\cD(B_i)$ independent of $W$, then Bob's complete
final marginal, and therefore his estimate $\widehat W$, is
independent of $W$. In particular,
\eqref{eq:replacer-preservation} holds.
\end{lemma}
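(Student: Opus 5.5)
I would prove the lemma by induction over the sequential steps of the protocol. The invariant is that Bob's entire current state is independent of $W$ at every stage. Write the global state after each step as a classical--quantum operator $\sum_w K^{-1}\ket{w}\bra{w}_W\otimes\xi^{(w)}_{M_AM_B}$, where $M_A$ collects Alice's memories and classical records and $M_B$ collects Bob's. The invariant states that the Bob marginal $\Tr_{M_A}\xi^{(w)}$ equals a fixed operator $\beta$ that does not depend on $w$. Initially the shared state is independent of $W$, and Alice's encoding acts only on $W M_A$. Hence Bob's marginal is the initial Bob marginal for every $w$.

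The inductive step has one case for each kind of operation in the protocol.
\begin{itemize}
\item An Alice-local channel, possibly controlled by $W$, is trace preserving and acts only on her side. It leaves every $\Tr_{M_A}\xi^{(w)}$ unchanged.
\item A Bob-local channel $\mathcal B$, including his measurements with recorded outcomes and his final decoder, maps $\beta$ to $\mathcal B(\beta)$. This operator is still independent of $w$.
\item Feedback from Bob to Alice moves a register from $M_B$ to $M_A$. Bob's new marginal is a partial trace of $\beta$, so it is still independent of $w$; copying classical feedback is a Bob-local channel followed by such a transfer.
\item A replaced forward slot applies $\mathcal R_{\tau_i}$ to Alice's register $A_i$ and delivers $B_i$ to Bob. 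Because $\mathcal R_{\tau_i}(X)=\Tr(X)\tau_i$, the new joint state is $\Tr_{A_i}[\xi^{(w)}]\otimes\tau_i$. Bob's new marginal is therefore $\beta\otimes\tau_i$, which uses the trace preservation of Alice's discarded part.
\end{itemize}
Since these are the only forms of interaction, the invariant holds through the end of the protocol.

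At the end of the protocol, Bob's estimate register $\widehat W$ is part of his marginal. The final state is therefore $\sum_w K^{-1}\ket{w}\bra{w}\otimes\omega_{\widehat W}$ with $\omega$ independent of $w$, which is $\pi_W\otimes\omega_{\widehat W}$. This establishes \eqref{eq:replacer-preservation}. I would also note that the resulting channel-to-state process is completely positive and normalized on product channels, since it is a composition of channels around the slots.

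The main obstacle is the bookkeeping for the forward slot, where Alice's input may be entangled with both memories and with earlier feedback. I would handle it by stating the invariant for the full Bob marginal rather than only for $\widehat W$; the factorization of the replacer then makes the step immediate. A secondary point is that classical records could appear to leak $W$. They are included in $M_A$ or $M_B$ according to who holds them, and any record that reaches Bob does so only through feedback-free local steps or replaced slots, both of which the cases above already cover.
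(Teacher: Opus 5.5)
Your proposal is correct and follows essentially the same route as the paper. Both arguments use an induction whose invariant is that Bob's complete marginal is the same for every message value $w$. That invariant is preserved by three facts: Alice's $w$-dependent operations are trace preserving, Bob's channels and feedback transfers are independent of $w$, and each replaced forward slot produces the factorization $\tau_i\otimes\beta$. The only difference is packaging: the paper groups each round into one encoder $\mathcal E_i^w$, one replaced slot, and one Bob channel $\mathcal B_i$ emitting the feedback register $F_i$, whereas you treat the elementary operations case by case.
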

\begin{proof}
The joint state of the laboratories may depend on the message.
We prove only the needed invariant: Bob's complete marginal is
the same for every message. Keeping every local record permits
adaptive operations to be described by trace-preserving channels,
so the proof never conditions on a selected outcome.

Let $L_0$ and $R_0$ be the initial memories of Alice and Bob,
respectively, and write their shared state as $\eta_{L_0R_0}$.
For each message value $w$, set
$\sigma^{w,0}_{L_0R_0}=\eta_{L_0R_0}$ and
$\beta^0_{R_0}=\Tr_{L_0}\eta_{L_0R_0}$.
Before the $i$th forward use, Alice applies a channel
\begin{equation}\label{eq:feedback-encoder-domain}
\mathcal E_i^w\colon L_{i-1}\longrightarrow\widetilde L_i A_i.
\end{equation}
Her complete memory $L_{i-1}$ includes previously received
feedback and all retained local records. After receiving $B_i$,
Bob applies a channel
\begin{equation}\label{eq:feedback-channel-domain}
\mathcal B_i\colon B_iR_{i-1}\longrightarrow R_iF_i,
\end{equation}
retains $R_i$, and sends $F_i$ to Alice. The channel
$\mathcal B_i$ is independent of the unknown value $w$; its
action may be controlled by records in $R_{i-1}$.
After this transmission Alice's memory is
$L_i=\widetilde L_iF_i$. Classical feedback is included by
restricting $F_i$ to a classical register, and absent feedback
is represented by a one-dimensional $F_i$.
Several feedback transmissions between two forward uses can be
collected into $F_i$, with the local processing absorbed into
these channels. The last feedback register may be taken to be
one-dimensional.

Suppose inductively that
$\Tr_{L_{i-1}}\sigma^{w,i-1}=\beta^{i-1}_{R_{i-1}}$
for every $w$. Alice's encoded state is
\begin{equation}\label{eq:feedback-encoded-state}
\omega^{w,i}_{\widetilde L_i A_iR_{i-1}}
=(\mathcal E_i^w\otimes\id_{R_{i-1}})
  (\sigma^{w,i-1}_{L_{i-1}R_{i-1}}).
\end{equation}
Trace preservation, equivalently
$(\mathcal E_i^w)^\dagger(I_{\widetilde L_iA_i})=I_{L_{i-1}}$,
gives
\begin{equation}\label{eq:feedback-local-marginal}
\Tr_{\widetilde L_i A_i}\omega^{w,i}
=\Tr_{L_{i-1}}\sigma^{w,i-1}
=\beta^{i-1}_{R_{i-1}}.
\end{equation}
The replacer then produces
\begin{align}
\xi^{w,i}_{\widetilde L_i B_iR_{i-1}}
&=(\id_{\widetilde L_i}\otimes\mathcal R_{\tau_i}
  \otimes\id_{R_{i-1}})(\omega^{w,i}),
\label{eq:feedback-replacer-state}\\
\xi^{w,i}_{B_iR_{i-1}}
&=\tau_{B_i}\otimes\beta^{i-1}_{R_{i-1}}.
\label{eq:feedback-replacer-marginal}
\end{align}
The second identity follows because the replacer traces out
$A_i$ and tensors its fixed output with the remaining systems.
After Bob's channel and the transfer of $F_i$, define the new
joint state by
\begin{equation}\label{eq:feedback-updated-joint-state}
\sigma^{w,i}_{L_iR_i}
\coloneqq (\id_{\widetilde L_i}\otimes\mathcal B_i)
(\xi^{w,i}_{\widetilde L_iB_iR_{i-1}}),
\qquad L_i=\widetilde L_iF_i,
\end{equation}
where the registers on the right are reordered according to their
owners. Its Bob marginal is
\begin{equation}\label{eq:feedback-marginal-induction}
\Tr_{L_i}\sigma^{w,i}_{L_iR_i}=\beta^i_{R_i},
\qquad
\beta^i_{R_i}\coloneqq\Tr_{F_i}\mathcal B_i
  (\tau_{B_i}\otimes\beta^{i-1}_{R_{i-1}}).
\end{equation}
The right side is independent of $w$, so the induction is complete.

To make the treatment of measurements explicit, an instrument
$\{\mathcal I_z\}_z$ with retained classical outcome $Z$ is
represented by the channel
$X\mapsto\sum_z\mathcal I_z(X)\otimes\ket{z}\bra{z}_Z$,
where $\sum_z\mathcal I_z$ is trace preserving.
Subsequent controls act on this retained register. No outcome
is selected and no conditional state is renormalized in
\eqref{eq:feedback-local-marginal}--\eqref{eq:feedback-marginal-induction}.

Finally, let $\mathcal D\colon R_n\to\widehat W$ be Bob's
decoding channel, fixed independently of $w$. The joint output is
\begin{equation}\label{eq:feedback-final-product}
\frac1K\sum_{w=1}^K\ket{w}\bra{w}_W
\otimes\mathcal D(\beta^n_{R_n})
=\pi_W\otimes\omega_{\widehat W},
\qquad
\omega_{\widehat W}\coloneqq\mathcal D(\beta^n_{R_n}).
\end{equation}
This is \eqref{eq:replacer-preservation}.
\end{proof}

For a process without definite order, the preservation condition
must be verified for the complete protocol. The quantum switch
of two completely depolarizing channels illustrates why it does
not follow from deterministic normalization alone
\cite{EblerSalekChiribella2018}; see
Section~\ref{subsec:free-conditions}.

\subsection{Private-state generation}
For private resources, a separable comparison set supplies the
appropriate target-test bound. Let Alice and Bob have key systems
$K_A,K_B$ of dimension $K$ and shield systems $S_A,S_B$.
A private state has the form~\cite[Theorem 1]{HorodeckiEtAl2005}
\begin{equation}\label{eq:private-state}
\gamma_{K_AK_BS_AS_B}
=U(\Phi_{K_AK_B}^{K}\otimes\tau_{S_AS_B})U^\dagger,
\qquad
U=\sum_{i,j=1}^K\ket{ij}\bra{ij}_{K_AK_B}\otimes U_{ij},
\end{equation}
where $\tau_{S_AS_B}$ is a density operator and each $U_{ij}$ is
unitary on the joint shield.
The key systems carry the classical key, while the shield retains
correlations needed for its privacy. To see this, purify
$\tau_{S_AS_B}$ to $\ket{\tau}_{S_AS_BE}$, with $E$ an
external reference. A purification of $\gamma$ is
\begin{equation}\label{eq:private-purification}
\ket{\gamma}_{K_AK_BS_AS_BE}
=\frac1{\sqrt K}\sum_{i=1}^K
\ket{ii}_{K_AK_B}(U_{ii}\otimes I_E)\ket{\tau}_{S_AS_BE}.
\end{equation}
Measuring the key systems gives identical, uniformly distributed
labels. For each label $i$, the reference's conditional state
is $\tau_E$, because tracing out the shield removes its
conditional unitary. Thus the key is independent of a purifying
reference. The twisting formula specifies the target state;
the protocol is not assumed to implement a joint shield unitary.
Its privacy test is defined as the projector
\begin{equation}\label{eq:privacy-test}
\Pi_\gamma\coloneqq
U(\Phi_{K_AK_B}^{K}\otimes I_{S_AS_B})U^\dagger.
\end{equation}
It accepts $\gamma$ with probability one and every separable
state across $K_AS_A;K_BS_B$ with probability at most $1/K$
\cite[Definition 8 and Lemma 10]{WildeTomamichelBerta2017}.

The factor $1/K$ has a short explanation that also accommodates
the joint shield unitary. Consider a normalized pure product
vector $\ket{a}_{K_AS_A}\otimes\ket{b}_{K_BS_B}$ and write
$\ket{a}=\sum_{i=1}^K\ket{i}_{K_A}\ket{a_i}_{S_A}$ and
$\ket{b}=\sum_{j=1}^K\ket{j}_{K_B}\ket{b_j}_{S_B}$.
Then
\begin{align}
\bra{ab}\Pi_\gamma\ket{ab}
&=\frac1K\normket{\sum_{i=1}^K U_{ii}^\dagger
(\ket{a_i}\otimes\ket{b_i})}^{\!2}
\label{eq:privacy-overlap-proof:line-1}\\
&\leq\frac1K\left(\sum_{i=1}^K
\normket{\ket{a_i}}\normket{\ket{b_i}}\right)^2
\leq\frac1K.\label{eq:privacy-overlap-proof}
\end{align}
The first inequality follows from the triangle inequality and unitarity
of each $U_{ii}$. The second follows from Cauchy--Schwarz, because
$\sum_{i=1}^K\normket{\ket{a_i}}^2=
\sum_{i=1}^K\normket{\ket{b_i}}^2=1$.
Every separable state is a convex combination of pure product
states, so linearity proves the same acceptance bound for it.

Assume no initially shared entanglement or secret key. Require
that inserting entanglement-breaking channels in every slot
makes the final state separable across the complete laboratories.
Here a channel is entanglement breaking if its action on one
share of every bipartite state produces a separable state.
For $0\leq\varepsilon<1$, require closeness of the complete
output state to a private state, including its shield:
\begin{equation}\label{eq:private-error}
\tfrac12\norm{\rho_{K_AK_BS_AS_B}-\gamma_{K_AK_BS_AS_B}}_1
\leq\varepsilon.
\end{equation}
The variational characterization of normalized trace distance gives
$\Tr(\Pi_\gamma\rho)\geq1-\varepsilon$. Applying
Theorem~\ref{thm:comparison-test} yields
\begin{align}\label{eq:private-converse}
\log K\leq{}&
\sum_{i=1}^n\inf\{\Dh_\alpha(\cN_i\|\cM_i):
\cM_i\colon A_i\to B_i\text{ an entanglement-breaking channel}\}
\notag\\
&+\frac{\alpha}{\alpha-1}\log\frac1{1-\varepsilon},
\qquad 1<\alpha\leq2.
\end{align}

Fang and Fawzi define geometric R\'enyi Sigma-information by
optimizing the channel divergence over completely positive,
trace-nonincreasing comparisons with separable Choi operators
\cite[Definition 26]{FangFawzi2021}. They use this quantity in
a two-way-assisted private-capacity converse
\cite[Theorem 29]{FangFawzi2021}.
For $1<\alpha\leq2$, the following completion argument shows
that the local infimum in \eqref{eq:private-converse} equals
this quantity: allowing subchannels gives the same infimum
as using normalized entanglement-breaking channels.

Let $\cM\colon A\to B$ be such a subchannel and let $\tau_B$
be an arbitrary state. Its completion
\begin{equation}\label{eq:eb-completion}
J_{\widetilde{\cM}}
\coloneqq J_{\cM}+
(I_A-\Tr_BJ_{\cM})\otimes\tau_B
\end{equation}
is positive and separable across $A;B$, and has output partial
trace $I_A$. Thus $\widetilde{\cM}$ is an entanglement-breaking
channel and $\widetilde{\cM}-\cM$ is completely positive.
Here a separable positive operator is a sum of positive product
operators; separability of a Choi operator characterizes an
entanglement-breaking operation, as used in the cited definition.

Adding a positive operator to the second argument cannot increase
the geometric divergence in this range. Indeed, for a state
$\rho$ and positive $\sigma,\omega$, discarding a two-valued
flag sends $(\rho\oplus0,\sigma\oplus\omega)$ to
$(\rho,\sigma+\omega)$. Data processing gives
\begin{equation}\label{eq:second-argument-monotonicity}
\Dh_\alpha(\rho\|\sigma+\omega)
\leq\Dh_\alpha(\rho\oplus0\|\sigma\oplus\omega)
=\Dh_\alpha(\rho\|\sigma).
\end{equation}
The equality follows by evaluating the perspective block by block;
its block with zero first argument contributes zero. If the right
side is infinite the inequality is automatic, and otherwise its
support condition makes each perspective well defined.
Apply \eqref{eq:second-argument-monotonicity} to every
reference-assisted probe and take the supremum to obtain
\begin{equation}\label{eq:eb-comparison-order}
\Dh_\alpha(\cN\|\widetilde{\cM})
\leq\Dh_\alpha(\cN\|\cM).
\end{equation}
Every subchannel comparison therefore has a channel completion
with no larger divergence. Conversely, every
entanglement-breaking channel is a permitted subchannel.
The two infima are equal, without an assumption that either
is attained. This is an equality of local comparison quantities;
the global protocol condition remains preservation of separability
on every tuple of entanglement-breaking channels.

Local operations and public classical communication, starting
from a separable state, preserve separability after these
replacements. They therefore satisfy the stated condition
in an ordinary sequential protocol. The same condition is
required explicitly for a process without definite order.
The privacy test is a global mathematical measurement in this
converse; no local implementation of it is assumed.
The error criterion \eqref{eq:private-error} concerns the full
private state, including the shield. Agreement of the classical
key marginal alone would not imply privacy. Consequently,
\eqref{eq:private-converse} is stated for private-state generation;
applying it to a different secret-key error convention requires
the corresponding protocol conversion and error relation.

\section{Examples and the role of the assumptions}\label{sec:examples}
The applications depend on both the divergence inequality and its
precise hypotheses. The examples in this section make those
distinctions concrete: erasure channels give an exact evaluation,
a rank-one operator explains the reciprocal construction, and two
further examples show why comparison preservation and universal
normalization must be stated explicitly. All calculations are
analytic and require no numerical optimization.

\subsection{Erasure channels: an exactly classical divergence}
For erasure channels, the only difference between the two
hypotheses is the probability of an orthogonally flagged erasure.
The Choi formula therefore reduces to the classical divergence
between the corresponding two-outcome distributions.
Let $A\simeq B_0$ have dimension $d$, and let
$B=B_0\oplus\operatorname{span}\{\ket{e}\}$ with an erasure flag
orthogonal to $B_0$. For $0<p<1$, define
\begin{equation}\label{eq:erasure}
\mathcal E_p(X)
=(1-p)X\oplus p\,\Tr(X)\ket{e}\bra{e}.
\end{equation}
The Choi operator has two orthogonal blocks,
\begin{equation}\label{eq:erasure-choi}
J_{\mathcal E_p}
=(1-p)\ket{\Gamma}\bra{\Gamma}_{AB_0}
\oplus p\,I_A\otimes\ket{e}\bra{e}.
\end{equation}
For $0<p,q<1$ and every order in the main theorem, the perspective
acts separately on these blocks. Consequently,
\begin{equation}\label{eq:erasure-marginal}
\Tr_BP_\alpha(J_{\mathcal E_p},J_{\mathcal E_q})
=\left[(1-p)^\alpha(1-q)^{1-\alpha}
+p^\alpha q^{1-\alpha}\right]I_A.
\end{equation}
Both extremal eigenvalues coincide. The Choi formula gives
\begin{equation}\label{eq:erasure-divergence}
\Dh_\alpha(\mathcal E_p\|\mathcal E_q)
=D_\alpha\bigl((1-p,p)\|(1-q,q)\bigr).
\end{equation}
Every probe produces the same orthogonal erasure-flag mixture
of a common unerased state and a common erased state, so this
value can be attained already by reading the erasure flag.
Retaining independent flags in parallel attains the sum in
Theorem~\ref{thm:main}.

In this example there is also a direct explanation for the
absence of a general-process testing advantage. Write each
erasure channel as a mixture of two fixed channels, the identity
embedding and the erasing replacer. Multilinearity gives the
output of an arbitrary $n$-slot channel-to-state process,
including any fixed external probe and final processing, as
\begin{equation}\label{eq:erasure-mixture}
\rho_p=\sum_{s\in\{0,1\}^n}
(1-p)^{n-|s|}p^{|s|}\,\rho_s,
\end{equation}
where $|s|$ is the number of ones in $s$, namely the number of
erasing replacements, and $\rho_s$ is the normalized output obtained
by inserting the corresponding fixed channel in every slot. The states
$\rho_s$ are independent of $p$. Thus the output is obtained
by a channel from the classical independent erasure pattern.
Parallel use reveals that pattern, so every general-process
test can be simulated from a parallel experiment in this
special family. This additional conclusion uses the common
mixture representation and is not asserted for arbitrary
channel pairs.

\subsection{Why reciprocal normalization is needed}
\label{subsec:reciprocal-obstruction}
The upper-bound proof of Theorem~\ref{thm:normalization} completes
a positive operator to a scalar multiple of a channel Choi
operator. One might hope to reverse that operator inequality
for the lower bound. A two-qubit example explains why the proof
instead uses reciprocal normalization and Cauchy--Schwarz.

Let $a,b>0$ satisfy $a\neq b$, and define
\begin{equation}\label{eq:rank-one-obstruction}
\ket{v}\coloneqq\sqrt a\ket{00}+\sqrt b\ket{11},
\qquad X\coloneqq\ket{v}\bra{v},
\qquad K\coloneqq\Tr_BX=\operatorname{diag}(a,b).
\end{equation}
The smallest eigenvalue $m=\min\{a,b\}$ is positive. Suppose
that a channel Choi operator $J$ satisfied $X\geq mJ$.
Every vector orthogonal to $\ket v$ would then have zero
expectation in $J$. Positivity of $J$ would force its support
to lie in the one-dimensional span of $\ket v$, so $J=cX$
for a scalar $c\geq0$. But the channel normalization would
require $I_A=\Tr_BJ=c\operatorname{diag}(a,b)$, which is
impossible when $a\neq b$. Thus a lower completion of this
form need not exist even when $K$ is invertible. The lower
bound in \eqref{eq:normalization-bound} is a bound on positive
functionals, and the reciprocal construction proves it without
asserting this false operator inequality.

\subsection{Why comparison preservation is an extra condition}
\label{subsec:free-conditions}
The basic divergence theorem compares two hypotheses under the
same process. A capacity converse additionally needs the
comparison output to fail the target test. These are different
requirements.

For example, for $d\geq2$, let
$\mathcal D(X)=\Tr(X)I/d$ be a completely depolarizing channel
on a $d$-dimensional target. It is a replacer channel and has no
classical communication ability by itself. Nevertheless, a quantum
switch using two copies can transmit classical information when
its control is prepared in the fixed state
$\ket{+}=(\ket0+\ket1)/\sqrt2$ and Bob receives both the output
target and the output control~\cite{EblerSalekChiribella2018}.
The message is encoded only in the target; the input control is
independent of the message. If one applied
\eqref{eq:classical-converse} to an unrestricted switch merely
because each slot was a replacer, comparison with
$\cM_i=\mathcal D$ would incorrectly give a zero asymptotic communication rate
at every fixed error strictly below one. The error would be the unverified condition
\eqref{eq:replacer-preservation}: the switched comparison
experiment need not make Bob's output independent of the
message.

There is no conflict with Theorem~\ref{thm:main}. Comparing
the tuple $(\mathcal D,\mathcal D)$ with itself gives identical
output channels and divergence zero, regardless of the
communication ability of that common output channel.
This example explains why the operational model in
Section~\ref{sec:capacities} specifies the complete comparison
preservation condition rather than assigning it to local
channels alone.

\subsection{Normalization only on the hypotheses is insufficient}
\label{subsec:normalization-counterexample}
Universal normalization cannot be replaced by the requirement
that the two selected hypotheses happen to have normalized
outputs. A one-slot classical example already demonstrates
the distinction.

Choose $\delta,a,b\in(0,1)$ satisfying $a\neq b$, and consider state-preparation
channels with diagonal output distributions
\begin{equation}\label{eq:postselect-inputs}
p=(1-\delta,\delta a,\delta(1-a)),
\qquad
q=(1-\delta,\delta b,\delta(1-b)).
\end{equation}
Let $P$ project onto the last two coordinates and define the
completely positive linear transformation
$\Psi(X)=\delta^{-1}PXP$. Its outputs on these two states
are normalized, with distributions $(a,1-a)$ and $(b,1-b)$.
Set
\begin{equation}\label{eq:postselect-moment}
Q_\alpha=a^\alpha b^{1-\alpha}
+(1-a)^\alpha(1-b)^{1-\alpha}.
\end{equation}
Write
$Q_\alpha=b(a/b)^\alpha+(1-b)((1-a)/(1-b))^\alpha$.
The two ratios have weighted mean one and are different because
$a\ne b$. Strict scalar convexity for $\alpha>1$ and strict
concavity for $0<\alpha<1$ give $Q_\alpha>1$ and $Q_\alpha<1$,
respectively. The input and output moments are
$1-\delta+\delta Q_\alpha$ and $Q_\alpha$. Since $0<\delta<1$,
the output moment is strictly larger above order one and strictly
smaller below order one. Taking logarithms and dividing by
$\alpha-1$ therefore makes the output divergence strictly larger
in both ranges.

The transformation $\Psi$ is not normalized on every input
state-preparation channel. It therefore violates
\eqref{eq:process-normalization}. An identical unused second
slot can be appended to obtain a two-slot counterexample.
The same issue arises if one conditions on a favorable
outcome and renormalizes: the deterministic theorem concerns
the full normalized experiment, not a nonlinear conditional
output.

\Needspace{12\baselineskip}
\section{RLD Fisher information and estimation without definite causal order}
\label{sec:estimation}

Discrimination compares distinct channel hypotheses. Estimation instead
asks how accurately an experiment can resolve a small change in a
continuous channel parameter. Fisher information connects these tasks:
it describes the quadratic change of a divergence between nearby
models. In this section, we use Theorem~\ref{thm:main} to bound the
right logarithmic derivative (RLD) Fisher information of a channel
produced by an arbitrary parameter-independent deterministic process.
The resulting Cram\'er--Rao bound applies even when the channel slots
have no definite causal order.

The connection between geometric R\'enyi divergence and RLD Fisher
information is established in prior work. Katariya and Wilde prove
state and channel limit formulas, including regularization for singular
operators, and use RLD information to bound sequential estimation
\cite[Propositions 53 and 89, and Conclusion 41]{KatariyaWilde2021}.
Wilde gives a detailed Hessian calculation for positive definite
state families \cite[Theorem 5]{Wilde2025Fisher}.
Our contribution is the additive RLD inequality for the full process
class of Definition~\ref{def:process}, including nonidentical channel
families, and its estimation consequence. This complements the
optimization of metrological strategies with different causal
structures in \cite{LiuHuYuanYang2023} and the finite-use bounds on symmetric logarithmic derivative (SLD)
Fisher information for repeated identical channels in
\cite[Theorems 2 and 3]{WeiLiPang2026}. That work also gives
asymptotic scaling and leading-coefficient results for general
processes. Our statement concerns an additive RLD bound for
possibly different channel families at each finite number of uses.
For a fixed single-parameter output state family, SLD Fisher information
equals the largest classical Fisher information obtainable by a measurement
at the parameter point of interest~\cite{KatariyaWilde2021}.
The RLD information used here bounds this quantity from above and can
be strictly larger. Corollary~\ref{cor:rld-cramer-rao} proves the
measurement inequality directly.

\subsection{State and channel RLD Fisher information}
We first define the local quantity to be bounded. Throughout this
section, $\theta$ is a single real parameter in an open interval,
$\theta_0$ is an interior point, and differentiation is evaluated at
$\theta_0$ unless specified otherwise. We take all systems to be
finite-dimensional. For a differentiable state family $\rho_\theta$,
write
\begin{equation}\label{eq:rld-state-tangent}
\rho\coloneqq\rho_{\theta_0},\qquad
\dot\rho\coloneqq
\left.\frac{d\rho_\theta}{d\theta}\right|_{\theta=\theta_0}.
\end{equation}
For a positive semidefinite operator $X$, throughout this section
$X^{-1}$ denotes the inverse on $\supp X$, extended by zero on its
orthogonal complement, and $X^{-1/2}\coloneqq(X^{-1})^{1/2}$.
Thus $XX^{-1}=X^{-1}X=P_X$, where $P_X$ is the projector onto
$\supp X$.
The RLD Fisher information is defined as
\cite[Definition 3 and Eqs.~(5.25)--(5.30)]{KatariyaWilde2021}
\begin{equation}\label{eq:rld-state}
\IR(\theta_0;\rho_\theta)
\coloneqq
\begin{cases}
\Tr[\dot\rho\rho^{-1}\dot\rho],
 &\supp\dot\rho\subseteq\supp\rho,\\
+\infty,&\text{otherwise}.
\end{cases}
\end{equation}
The support of the Hermitian operator
$\dot\rho$ is the orthogonal complement of its kernel; thus the
finite case means that $\dot\rho$ has no matrix elements to or from
$\ker\rho$. This convention is essential for singular states.
If $\rho$ is invertible, the right logarithmic derivative is
$R=\rho^{-1}\dot\rho$, satisfying $\dot\rho=\rho R$, and
$\IR(\theta_0;\rho_\theta)=\Tr[\rho RR^\dagger]$.
The operator $R$ need not be Hermitian, whereas this information quantity
is real and nonnegative. Corollary~\ref{cor:rld-cramer-rao}
will show that RLD information bounds the Fisher information
available from a measurement. The bound need not be attained;
an infinite value makes this particular precision converse
uninformative.

For a classical family
$\rho_\theta=\sum_xp_\theta(x)\ket{x}\bra{x}$, the finite expression
in \eqref{eq:rld-state} reduces to
\begin{equation}\label{eq:rld-classical}
I_{\mathrm{cl}}(\theta_0;p_\theta)
\coloneqq
\sum_{x:p_{\theta_0}(x)>0}
\frac{\dot p_{\theta_0}(x)^2}{p_{\theta_0}(x)}.
\end{equation}
For a differentiable probability on an open interval, a zero value
at an interior point has zero first derivative because the
probability is nonnegative on both sides of that point. Such
outcomes are therefore omitted from the sum.
In particular, the Fisher information has no logarithm-base factor
in its definition. The factors of $\ln2$ below arise only because
our divergences use base-two logarithms.

For a differentiable family of channels
$\cN_\theta\colon A\to B$, the channel RLD Fisher information is
defined by optimizing over parameter-independent probes:
\begin{equation}\label{eq:rld-channel}
\IR(\theta_0;\cN_\theta)
\coloneqq
\sup\left\{
\begin{aligned}
&\IR\!\left(\theta_0;
  (\id_R\otimes\cN_\theta)(\rho_{RA})\right):\\
&\rho_{RA}\in\cD(RA),\ R\text{ finite-dimensional},
  \quad\rho_{RA}\text{ independent of }\theta
\end{aligned}
\right\}.
\end{equation}
The definition permits every mixed bipartite state and an
arbitrarily large finite reference. The state RLD information
obeys data processing under parameter-independent channels
\cite[Eq.~(5.41)]{KatariyaWilde2021}. Consequently the purification
argument in \eqref{eq:mixed-probe-from-pure} reduces
\eqref{eq:rld-channel} to pure probes with $R\simeq A$;
see also \cite[Definition 9 and Remark 10]{KatariyaWilde2021}.

For completeness, the state data-processing fact just used has a
short proof from Lemma~\ref{lem:perspective}. Let $\mathcal T$
be a parameter-independent quantum channel. If the input RLD
information is finite, there is a finite $a\geq0$ satisfying
$-a\rho\leq\dot\rho\leq a\rho$; one can take
$a=\norm{\rho^{-1/2}\dot\rho\rho^{-1/2}}$.
Set $\sigma=\mathcal T(\rho)$ and
$T=\mathcal T(\dot\rho)$. Positivity gives
$-a\sigma\leq T\leq a\sigma$, so $T$ is supported on
$\sigma$. For sufficiently small nonzero $h$, the input
operators $\rho+h\dot\rho$ and $\rho$ are states, and
the normalization follows from $\Tr\dot\rho=0$.
Lemma~\ref{lem:perspective} at order two gives
\begin{align}
\sigma+2hT+h^2T\sigma^{-1}T
&=P_2(\sigma+hT,\sigma)
\label{eq:rld-state-quadratic-contraction:line-1}\\
&\leq\mathcal T\!\left(P_2(\rho+h\dot\rho,\rho)\right)
\label{eq:rld-state-quadratic-contraction:line-2}\\
&=\sigma+2hT+
 h^2\mathcal T(\dot\rho\rho^{-1}\dot\rho).
\label{eq:rld-state-quadratic-contraction}
\end{align}
Here the support conditions justify the identity
$P_2(X,Y)=XY^{-1}X$. Cancel the constant and linear terms,
divide by $h^2$, and take the trace. Trace preservation and
parameter independence then give
\begin{equation}\label{eq:rld-state-data-processing}
\IR(\theta_0;\mathcal T(\rho_\theta))
\leq\IR(\theta_0;\rho_\theta).
\end{equation}
If the input RLD information is infinite, the inequality
holds in the extended-real sense. This also verifies the
data-processing input to the mixed-probe reduction above.

The channel quantity has a particularly useful Choi formula. Set
\begin{equation}\label{eq:rld-choi-tangent}
J\coloneqq J_{\cN_{\theta_0}},\qquad
H\coloneqq
\left.\frac{dJ_{\cN_\theta}}{d\theta}\right|_{\theta=\theta_0}.
\end{equation}
Because every $\cN_\theta$ is trace preserving,
$\Tr_BJ=I_A$ and $\Tr_BH=0$. Hayashi's channel formula
\cite[Theorem 1]{Hayashi2011}, also proved in
\cite[Proposition 29 and Appendix G]{KatariyaWilde2021}, is
\begin{equation}\label{eq:rld-choi-formula}
\IR(\theta_0;\cN_\theta)
=
\begin{cases}
\norm{\Tr_B[HJ^{-1}H]},&\supp H\subseteq\supp J,\\
+\infty,&\text{otherwise}.
\end{cases}
\end{equation}
This is a local statement: only the channel and its first derivative
at $\theta_0$ enter the formula. It does not require the nearby
Choi operators to be invertible or to have the same support.

Here is a direct derivation of \eqref{eq:rld-choi-formula}, including
the possibility of singular probes. Suppose first that
$\supp H\subseteq\supp J$, and set $Q\coloneqq\Tr_B[HJ^{-1}H]$.
There is a finite $a\geq0$ satisfying $-aJ\leq H\leq aJ$:
one can take $a=\norm{J^{-1/2}HJ^{-1/2}}$, where
$J^{-1/2}=(J^{-1})^{1/2}$, and conjugate the corresponding
operator-norm bound by $J^{1/2}$. Thus $J+hH$ is positive
semidefinite for sufficiently small $|h|$, and it is a channel
Choi operator because $\Tr_BH=0$.

Write a normalized pure probe as in
\eqref{eq:probe-parameter-main}, now allowing a singular $Z$,
and set $L\coloneqq Z\otimes I_B$. Its output state and
derivative are
\begin{equation}\label{eq:rld-probe-tangents}
\sigma\coloneqq LJL^\dagger,\qquad
T\coloneqq LHL^\dagger.
\end{equation}
Indeed, $\Tr\sigma=\Tr[(Z^\dagger Z)\Tr_BJ]=1$ and
$\Tr T=0$. The inequalities $-a\sigma\leq T\leq a\sigma$
show that $\supp T\subseteq\supp\sigma$. Applying
Lemma~\ref{lem:perspective} at order two to the completely
positive congruence $X\mapsto LXL^\dagger$ yields
\begin{align}
\sigma+2hT+h^2T\sigma^{-1}T
&=P_2(\sigma+hT,\sigma)
\label{eq:rld-probe-quadratic:line-1}\\
&\leq L P_2(J+hH,J)L^\dagger
\label{eq:rld-probe-quadratic:line-2}\\
&=\sigma+2hT+h^2LHJ^{-1}HL^\dagger.
\label{eq:rld-probe-quadratic}
\end{align}
The two expansions use $P_2(X,Y)=XY^{-1}X$ for supported
pairs. Canceling the constant and linear terms, dividing by
$h^2$ for a nonzero sufficiently small $h$, and taking the trace
give
\begin{equation}\label{eq:rld-probe-upper}
\IR\!\left(\theta_0;
  (\id\otimes\cN_\theta)(\psi_Z)\right)
=\Tr[T\sigma^{-1}T]
\leq\Tr[(Z^\dagger Z)Q]\leq\norm Q.
\end{equation}
The pure-probe reduction established after
\eqref{eq:rld-channel} therefore proves the upper bound for
every allowed probe.
The first inequality in \eqref{eq:rld-probe-upper} can be strict when
$Z$ is singular. We approach the channel supremum using invertible
filters, as follows.

If $Z$ is invertible, the perspective is covariant under the
congruence, as in \eqref{eq:choi-moment-main}, so the first
inequality in \eqref{eq:rld-probe-upper} is an equality.
Let $\ket v$ be a unit eigenvector of $Q$ for its largest
eigenvalue. For $0<\varepsilon<1$, choose
\begin{equation}\label{eq:rld-probe-approximation}
Z_\varepsilon\coloneqq
\left((1-\varepsilon)\ket v\bra v+
      \varepsilon I_A/d_A\right)^{1/2}.
\end{equation}
These invertible filters define normalized pure probes, and
their RLD informations equal
$\Tr[(Z_\varepsilon^\dagger Z_\varepsilon)Q]\to\norm Q$.
This proves the reverse bound without assuming attainment.
Finally, if $\supp H\nsubseteq\supp J$, the maximally
entangled probe gives the output state $J/d_A$ and its
derivative $H/d_A$. Their supports violate the finite case
of \eqref{eq:rld-state}, so the channel RLD information is
infinite. This completes the derivation of the Choi formula.

\subsection{The local geometric identity}
The geometric divergence becomes a quadratic measure of a small
parameter change. For a twice continuously differentiable family
of positive definite states, the state Hessian result gives
\begin{equation}\label{eq:rld-general-state-limit}
\Dh_\alpha(\rho_{\theta_0}\|\rho_{\theta_0+h})
=\frac{\alpha h^2}{2\ln2}\,
  \IR(\theta_0;\rho_\theta)+o(h^2),
\qquad \alpha\in(0,1)\cup(1,2].
\end{equation}
This is the single-parameter specialization of
\cite[Theorem 5]{Wilde2025Fisher}, with its natural logarithm
converted to our base-two convention. The related state and
channel formulas in \cite[Propositions 53 and 89]{KatariyaWilde2021}
use an additional noise-regularization limit when needed.

For our process inequality, the order $\alpha=2$ already suffices.
We now reverse the argument order, keeping the second Choi operator
fixed so that its inverse does not vary with the parameter. At this
order, the quadratic coefficient can be obtained from an exact
identity. The following proof is independent of the positive definite
state Hessian formula \eqref{eq:rld-general-state-limit}. We state and prove the identity at the channel
level so that the proof of Theorem~\ref{thm:rld-process} will not
need to move a supremum over probes through a limit.

\begin{lemma}[Order-two identity and local RLD limit]
\label{lem:rld-local-limit}
Let $h\mapsto\cC_h\colon A\to B$ be a channel family between
finite-dimensional systems, defined for $h$ in a neighborhood of
zero and differentiable at zero. Write $J_h=J_{\cC_h}$ using
\eqref{eq:choi}, and suppose that
$\supp J_h\subseteq\supp J_0$ for all sufficiently small $h$.
Set $\Delta_h=J_h-J_0$ and
$H=\left.dJ_h/dh\right|_{h=0}$.
The channel divergence \eqref{eq:channel-divergence} satisfies
\begin{equation}\label{eq:rld-d2-exact}
\Dh_2(\cC_h\|\cC_0)
=\log\!\left(1+
   \norm{\Tr_B[\Delta_hJ_0^{-1}\Delta_h]}\right).
\end{equation}
Consequently, for the channel RLD information in
\eqref{eq:rld-channel},
\begin{equation}\label{eq:rld-d2-limit}
\lim_{h\to0}\frac{\ln2}{h^2}\,
\Dh_2(\cC_h\|\cC_0)
=\norm{\Tr_B[HJ_0^{-1}H]}
=\IR(0;\cC_h).
\end{equation}
The same assertions hold for state families, with the output
partial trace replaced by the full trace and the operator norm
omitted.
\end{lemma}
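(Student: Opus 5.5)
The exact identity \eqref{eq:rld-d2-exact} will come from the Choi formula \eqref{eq:channel-choi-formula} at $\alpha=2$, where the perspective is an explicit quadratic expression. For sufficiently small $h$ the hypothesis $\supp J_h\subseteq\supp J_0$ gives $J_h\ll J_0$. On supported pairs, $P_2(X,Y)=Y^{1/2}(Y^{-1/2}XY^{-1/2})^2Y^{1/2}=XY^{-1}X$, with the inverse taken on $\supp Y$. Substituting $J_h=J_0+\Delta_h$ and using $J_0J_0^{-1}=J_0^{-1}J_0=P_{J_0}$, together with $P_{J_0}\Delta_h=\Delta_hP_{J_0}=\Delta_h$ (both $J_h$ and $J_0$ are supported in $\supp J_0$), I obtain
\begin{equation*}
P_2(J_h,J_0)=J_0+2\Delta_h+\Delta_hJ_0^{-1}\Delta_h .
\end{equation*}
Taking $\Tr_B$ and using $\Tr_BJ_0=I_A$ and $\Tr_B\Delta_h=0$, both following from trace preservation, leaves $I_A+\Tr_B[\Delta_hJ_0^{-1}\Delta_h]$. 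The second term is positive, so the norm of the sum is $1+\norm{\Tr_B[\Delta_hJ_0^{-1}\Delta_h]}$. The second line of \eqref{eq:channel-choi-formula}, whose prefactor is $1/(\alpha-1)=1$, then gives \eqref{eq:rld-d2-exact}. For state families, the channel has a one-dimensional input, so the partial trace becomes the full trace and the norm of a scalar is the scalar itself.

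For the limit, I will set $Q_h=h^{-2}\Tr_B[\Delta_hJ_0^{-1}\Delta_h]$. Differentiability at zero gives $\Delta_h/h\to H$, and $J_0^{-1}$ is a fixed operator, so $Q_h\to Q\coloneqq\Tr_B[HJ_0^{-1}H]$ by continuity of products and partial traces in finite dimensions. Writing $\norm{\Tr_B[\Delta_hJ_0^{-1}\Delta_h]}=h^2\norm{Q_h}$, I then get
\begin{equation*}
\frac{\ln2}{h^2}\Dh_2(\cC_h\|\cC_0)=\frac{\ln(1+h^2\norm{Q_h})}{h^2}.
\end{equation*}
Since $\norm{Q_h}$ is bounded, this tends to $\lim\norm{Q_h}=\norm{Q}$, using $\ln(1+x)/x\to1$ when $\norm Q>0$ and the bound $\ln(1+x)\le x$ when $\norm Q=0$.

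To identify this limit with $\IR(0;\cC_h)$, I will check the support condition in Hayashi's formula \eqref{eq:rld-choi-formula}. Each $\Delta_h/h$ satisfies $P_{J_0}(\Delta_h/h)P_{J_0}=\Delta_h/h$, and this closed linear condition passes to the limit, so $H=P_{J_0}HP_{J_0}$. Thus $\supp H\subseteq\supp J_0$. The finite branch of \eqref{eq:rld-choi-formula} then equals $\norm{Q}$, and the state version is the same computation with $\Tr$ in place of $\Tr_B$ and \eqref{eq:rld-state} in place of Hayashi's formula.

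I expect the main obstacle to be the support bookkeeping rather than the limit. The identity $P_2(X,Y)=XY^{-1}X$ must be justified with the generalized inverse, which is legitimate only because support inclusion holds, as warned after \eqref{eq:mean-extension}. The same inclusion must also be confirmed for the derivative $H$, which the closedness argument above provides.
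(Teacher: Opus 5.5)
Your proposal is correct and follows essentially the same route as the paper: it expands $P_2(J_h,J_0)=J_0+2\Delta_h+\Delta_hJ_0^{-1}\Delta_h$ on the support, takes $\Tr_B$ and uses $\norm{I_A+Q}=1+\norm Q$ in the order-two Choi formula, then passes to the limit via $\Delta_h/h\to H$ and $\log(1+x)=x/\ln2+o(x)$, identifying the value with Hayashi's formula. Your closedness argument giving $H=P_{J_0}HP_{J_0}$, and hence $\supp H\subseteq\supp J_0$, spells out a step the paper only asserts.
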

\begin{proof}
The support hypothesis implies that $\Delta_h$ and $H$ are
supported on $J_0$. Let $P$ be the projector onto this support.
Then $J_0J_0^{-1}=J_0^{-1}J_0=P$, and the order-two perspective is
\begin{align}
P_2(J_h,J_0)
&=J_hJ_0^{-1}J_h
\label{eq:rld-quadratic-expansion:line-1}\\
&=J_0+2\Delta_h+\Delta_hJ_0^{-1}\Delta_h.
\label{eq:rld-quadratic-expansion}
\end{align}
Trace preservation gives $\Tr_BJ_0=I_A$ and
$\Tr_B\Delta_h=0$. Thus
\begin{equation}\label{eq:rld-quadratic-effect}
\Tr_BP_2(J_h,J_0)
=I_A+\Tr_B[\Delta_hJ_0^{-1}\Delta_h].
\end{equation}
The second term is positive semidefinite. For every positive
semidefinite $Q$, $\norm{I_A+Q}=1+\norm Q$.
Substituting \eqref{eq:rld-quadratic-effect} into the Choi
formula \eqref{eq:channel-choi-formula} therefore proves
\eqref{eq:rld-d2-exact}.

Differentiability means $\Delta_h/h\to H$ in operator norm.
All spaces are finite-dimensional and $J_0^{-1}$ is fixed, so
\begin{equation}\label{eq:rld-quadratic-convergence}
\frac{1}{h^2}\Tr_B[\Delta_hJ_0^{-1}\Delta_h]
\longrightarrow\Tr_B[HJ_0^{-1}H]
\end{equation}
in operator norm. Taking norms preserves this limit. Finally,
$\log(1+x)=x/\ln2+o(x)$ as $x\to0$ proves
\eqref{eq:rld-d2-limit}, including the case in which the limit
is zero, and \eqref{eq:rld-choi-formula} identifies its value.
For states, the same expansion uses $\Tr\Delta_h=0$ and
$\Tr\rho_0=1$, and \eqref{eq:rld-state} identifies the limit.
\end{proof}

The support hypothesis in Lemma~\ref{lem:rld-local-limit} cannot
simply be dropped. For example,
$\rho_h=\operatorname{diag}(1-h^2,h^2)$ has
$\dot\rho_0=0$ and therefore zero RLD information at zero, but
$\Dh_2(\rho_h\|\rho_0)=+\infty$ whenever $h\ne0$ is sufficiently
small. This is why the next proof uses auxiliary channel families
with the same value and first derivative as the given families,
instead of assuming a divergence limit for the original families.

\subsection{The RLD process inequality}
RLD information depends only on a local tangent. Whenever this
information is finite, the channel's tangent can be followed along
a short affine path of channels. That path has fixed support and
therefore permits the exact calculation in
Lemma~\ref{lem:rld-local-limit}. Applying the geometric process
inequality to these paths yields the desired bound, without any
assumption about the causal order of the slots.
Figure~\ref{fig:rld-estimation} summarizes this passage from a
finite-divergence bound to an estimation bound.

\begin{theorem}[RLD Fisher information under deterministic processes]
\label{thm:rld-process}
Let $n\geq1$ be an integer, and let
$\cN_{i,\theta}\colon A_i\to B_i$, $i\in\{1,\ldots,n\}$,
be finite-dimensional channel families on a real open parameter
interval, differentiable at an interior point $\theta_0$.
Let $\cW$ be a deterministic completely positive process with
finite-dimensional external interface $C\to D$ and Choi action
$\Phi$ satisfying
\eqref{eq:process-domain}--\eqref{eq:process-normalization}.
Assume that $\Phi$ is independent of $\theta$, and define
\begin{equation}\label{eq:rld-output-family}
\cK_\theta\coloneqq
\cW(\cN_{1,\theta},\ldots,\cN_{n,\theta}).
\end{equation}
Then the channel RLD Fisher information in
\eqref{eq:rld-channel} obeys
\begin{equation}\label{eq:rld-process}
\IR(\theta_0;\cK_\theta)
\leq\sum_{i=1}^n\IR(\theta_0;\cN_{i,\theta}).
\end{equation}
The inequality holds in the extended-real sense. It requires
neither invertible Choi operators nor constant supports for the
given channel families near $\theta_0$.
\end{theorem}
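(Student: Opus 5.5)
We will reduce the theorem to the order-two case of Theorem~\ref{thm:main} combined with Lemma~\ref{lem:rld-local-limit}, using auxiliary affine channel paths. If some $\IR(\theta_0;\cN_{i,\theta})$ is infinite there is nothing to prove. We therefore assume every local RLD information is finite. By the Choi formula \eqref{eq:rld-choi-formula}, this means $\supp H_i\subseteq\supp J_i$, where $J_i=J_{\cN_{i,\theta_0}}$ and $H_i$ is the Choi derivative at $\theta_0$. As in the derivation of that formula, there is $a_i$ with $-a_iJ_i\leq H_i\leq a_iJ_i$. Hence, for $|h|$ small, $\widetilde J_{i,h}\coloneqq J_i+hH_i$ is a channel Choi operator, since $\Tr_{B_i}H_i=0$. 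It also satisfies $\supp\widetilde J_{i,h}\subseteq\supp J_i$. Write $\widetilde\cN_{i,h}$ for these affine channels; their value and derivative at $h=0$ agree with those of $\cN_{i,\theta}$ at $\theta_0$.

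Next we form the output family $\widetilde\cK_h\coloneqq\cW(\widetilde\cN_{1,h},\ldots,\widetilde\cN_{n,h})$, with Choi operator $\Phi(\bigotimes_i(J_i+hH_i))$. This is a polynomial in $h$. Its value at $0$ is $J_{\cK_{\theta_0}}$, and its derivative at $0$ is $\sum_i\Phi(J_1\otimes\cdots\otimes H_i\otimes\cdots\otimes J_n)$. By the product rule and linearity of $\Phi$, this equals the derivative of $J_{\cK_\theta}$ at $\theta_0$, because each $J_{\cN_{i,\theta}}$ is differentiable at $\theta_0$. Since the channel RLD formula \eqref{eq:rld-choi-formula} depends only on value and derivative, $\IR(\theta_0;\cK_\theta)=\IR(0;\widetilde\cK_h)$.

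To apply Lemma~\ref{lem:rld-local-limit} to $\widetilde\cK_h$, we need the support condition. Each $\widetilde J_{i,h}\leq cJ_i$ for small $h$ (with, e.g., $c=2$), so $\bigotimes_i\widetilde J_{i,h}\leq c^n\bigotimes_iJ_i$. Positivity of $\Phi$ then gives $\supp J_{\widetilde\cK_h}\subseteq\supp J_{\widetilde\cK_0}$. Theorem~\ref{thm:main} at $\alpha=2$ yields
\[
\Dh_2(\widetilde\cK_h\|\widetilde\cK_0)\leq\sum_i\Dh_2(\widetilde\cN_{i,h}\|\widetilde\cN_{i,0}).
\]
Multiplying by $\ln2/h^2$ and letting $h\to0$, Lemma~\ref{lem:rld-local-limit} converts each side into the corresponding channel RLD information. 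The left limit is $\IR(0;\widetilde\cK_h)=\IR(\theta_0;\cK_\theta)$, and the right limit is $\sum_i\IR(\theta_0;\cN_{i,\theta})$. Only fixed-pair limits appear, so no supremum is exchanged with a limit.

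The main obstacle is ensuring that the auxiliary paths remain channels with constant support while reproducing the true output tangent. Naive use of the original families fails, as the example after Lemma~\ref{lem:rld-local-limit} shows. The plan handles this through the sandwich $-a_iJ_i\leq H_i\leq a_iJ_i$ and the observation that the output tangent depends multilinearly only on local values and derivatives. One point to check carefully is that differentiability of the $\cN_{i,\theta}$ at $\theta_0$ alone suffices for the product rule on the tensor product, which holds since it is a finite multilinear expression.
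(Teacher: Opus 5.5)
Your proposal is correct and follows essentially the same route as the paper. Both arguments build the affine tangent paths $J_i+hH_i$ from the sandwich $-a_iJ_i\leq H_i\leq a_iJ_i$, identify the output tangent by multilinearity, apply Theorem~\ref{thm:main} at order two, and extract fixed-pair limits on both sides with Lemma~\ref{lem:rld-local-limit}; the only cosmetic difference is that you prove just the one-sided inclusion $\supp J_{\widetilde\cK_h}\subseteq\supp J_{\widetilde\cK_0}$, whereas the paper obtains equality of supports from a two-sided bound, and inclusion is all the lemma requires.
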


\paragraph{Proof strategy.}
If a local RLD information is infinite, the bound is immediate.
Otherwise each Choi derivative is supported on the corresponding
Choi operator. We replace each family locally by its affine tangent
path, check that these paths consist of channels, and show that
their processed output has fixed support. Theorem~\ref{thm:main}
at order two then supplies an inequality of divergences, and
Lemma~\ref{lem:rld-local-limit} extracts its quadratic coefficient.
The replacement preserves all derivatives that define the RLD
information.

\begin{proof}
Assume that every term on the right of \eqref{eq:rld-process} is
finite. Set
\begin{equation}\label{eq:rld-local-choi-data}
J_i\coloneqq J_{\cN_{i,\theta_0}},\qquad
H_i\coloneqq
\left.\frac{dJ_{\cN_{i,\theta}}}{d\theta}\right|_{\theta=\theta_0},
\qquad
a_i\coloneqq\norm{J_i^{-1/2}H_iJ_i^{-1/2}}.
\end{equation}
Here $J_i^{-1/2}=(J_i^{-1})^{1/2}$.
Equation~\eqref{eq:rld-choi-formula} gives
$\supp H_i\subseteq\supp J_i$.
Moreover, $\Tr_{B_i}H_i=0$.
On the support of $J_i$, conjugating the inequality
$-a_iI\leq J_i^{-1/2}H_iJ_i^{-1/2}\leq a_iI$
by $J_i^{1/2}$ gives
\begin{equation}\label{eq:rld-affine-order}
(1-|h|a_i)J_i\leq J_i+hH_i\leq(1+|h|a_i)J_i.
\end{equation}
Choose $|h|$ small enough that $|h|a_i<1$ for every $i$;
an index with $a_i=0$ imposes no restriction.
The operators
\begin{equation}\label{eq:rld-affine-channels}
J_{\cL_{i,h}}\coloneqq J_i+hH_i
\end{equation}
are positive semidefinite by \eqref{eq:rld-affine-order}, and
their output partial traces are $I_{A_i}$.
They therefore define channels $\cL_{i,h}\colon A_i\to B_i$.
Their Choi operators have the same support as $J_i$.

Let
\begin{equation}\label{eq:rld-affine-output}
Y(h)\coloneqq
\Phi\!\left(\bigotimes_{i=1}^n(J_i+hH_i)\right),\qquad
Y_0\coloneqq\Phi\!\left(\bigotimes_{i=1}^nJ_i\right).
\end{equation}
Determinism makes $Y(h)$ the Choi operator of the output channel
$\cW(\cL_{1,h},\ldots,\cL_{n,h})$ for every such $h$.
Taking tensor products in \eqref{eq:rld-affine-order} and
applying the positive transformation $\Phi$ gives
\begin{equation}\label{eq:rld-output-support}
\left[\prod_{i=1}^n(1-|h|a_i)\right]Y_0
\leq Y(h)\leq
\left[\prod_{i=1}^n(1+|h|a_i)\right]Y_0.
\end{equation}
Both scalar factors are strictly positive. These two inequalities
imply $\supp Y(h)=\supp Y_0$, even if $Y_0$ is singular.
Thus the output family satisfies the support hypothesis of
Lemma~\ref{lem:rld-local-limit}.

Linearity of $\Phi$ and the tensor product rule give
\begin{align}
\left.\frac{dY(h)}{dh}\right|_{h=0}
&=
\left.\frac{dJ_{\cK_\theta}}{d\theta}\right|_{\theta=\theta_0}
\label{eq:rld-output-tangent:line-1}\\
&
=\Phi\!\left(
\sum_{i=1}^n J_1\otimes\cdots\otimes J_{i-1}
  \otimes H_i\otimes J_{i+1}\otimes\cdots\otimes J_n
\right).
\label{eq:rld-output-tangent}
\end{align}
Also $Y_0=J_{\cK_{\theta_0}}$. The actual output family and
the auxiliary output family therefore have identical channel
RLD information at their respective base points, by
\eqref{eq:rld-choi-formula}.

Apply Theorem~\ref{thm:main} to the channel pairs
$(\cL_{i,h},\cL_{i,0})$ at order two:
\begin{align}
\Dh_2\!\left(
 \cW(\cL_{1,h},\ldots,\cL_{n,h})
 \middle\|\cW(\cL_{1,0},\ldots,\cL_{n,0})\right)
\leq\sum_{i=1}^n
       \Dh_2(\cL_{i,h}\|\cL_{i,0}).
\label{eq:rld-affine-divergence-bound}
\end{align}
Multiply both sides by $\ln2/h^2$ and let $h\to0$.
Lemma~\ref{lem:rld-local-limit} applies to every term. On the
left it gives $\IR(\theta_0;\cK_\theta)$ by
\eqref{eq:rld-output-tangent:line-1}--\eqref{eq:rld-output-tangent}; on the right it gives
$\IR(\theta_0;\cN_{i,\theta})$ because
$\cL_{i,h}$ has Choi value $J_i$ and derivative $H_i$ at zero.
The sum has finitely many terms, so the limit yields
\eqref{eq:rld-process}.
\end{proof}

The role of parameter independence is concrete. If $\Phi$ itself
depended on the unknown parameter, differentiating the output
would add a derivative of $\Phi$ to
\eqref{eq:rld-output-tangent:line-1}--\eqref{eq:rld-output-tangent}. That extra source of information
is not counted by the right-hand side of
\eqref{eq:rld-process}. The theorem does allow an experiment
chosen for a specified nominal point $\theta_0$, provided that
the experiment remains fixed when $\theta$ is varied locally.
Adaptive controls based on recorded outcomes are also permitted
when the complete experiment is represented by one such fixed
process.

\begin{figure}[!t]
\centering
\begin{tikzpicture}[font=\small,>=Latex,
 channel/.style={draw=blue!60!black,fill=blue!6,rounded corners=2pt,
   minimum width=1.65cm,minimum height=.62cm,
   inner xsep=8pt,inner ysep=6pt},
 wire/.style={->,line width=.65pt},
 proofbox/.style={draw=blue!55!black,fill=blue!4,rounded corners=3pt,
   text width=4.25cm,minimum height=2.05cm,align=center,
   inner xsep=8pt,inner ysep=10pt}]
\node[align=center,inner sep=0pt] at (0,4.65)
 {$J_{\mathcal L_{i,h}}\coloneqq J_i+hH_i$,\qquad
  $J_i\coloneqq J_{\mathcal N_{i,\theta_0}}$,\qquad
  $H_i\coloneqq\left.\partial_\theta J_{\mathcal N_{i,\theta}}\right|_{\theta=\theta_0}$};
\path[fill=blue!4,even odd rule,rounded corners=4pt]
 (-7.8,-.72) rectangle (-.5,4.03)
 (-5.35,2.23) rectangle (-2.95,3.17)
 (-5.35,1.13) rectangle (-2.95,2.07)
 (-5.35,.03) rectangle (-2.95,.97);
\draw[blue!40!black,line width=.65pt,rounded corners=4pt]
 (-7.8,-.72) rectangle (-.5,4.03);
\foreach \y in {2.7,1.6,.5}{
 \draw[black!50,densely dashed,line width=.55pt,rounded corners=3pt]
  (-5.35,\y-.47) rectangle (-2.95,\y+.47);
}
\node[inner sep=0pt] at (-4.15,3.65)
 {fixed process $\Phi$};
\foreach \i/\y in {1/2.7,2/1.6,3/.5}{
 \node[channel] (n\i) at (-4.15,\y) {$\mathcal L_{\i,h}$};
 \draw[wire] (-7.1,\y)--node[above=3pt,inner sep=2pt]{$A_{\i}$}(n\i.west);
 \draw[wire] (n\i.east)--node[above=3pt,inner sep=2pt]{$B_{\i}$}(-1.2,\y);
}
\node[draw=blue!55!black,fill=blue!4,rounded corners=3pt,
 text width=2.1cm,align=center,minimum height=1.05cm,
 inner xsep=8pt,inner ysep=9pt] (measurement) at (3.25,1.6)
 {fixed POVM\\[.35em]$\{M_x\}$};
\draw[wire] (-.5,1.6)--node[above=5pt,inner sep=3pt]{$\omega_h$}(measurement.west);
\node[inner sep=3pt] (probability) at (7.03,1.6) {$x\sim p_h$};
\draw[wire,double,double distance=1.2pt] (measurement.east)--(probability.west);
\node[font=\footnotesize,align=center,inner sep=2pt] at (-4.15,-.48)
 {slot order unspecified};
\node[font=\footnotesize,align=center,text width=14.7cm,inner sep=4pt] at (0,-1.48)
 {$\displaystyle\omega_h=\Phi\!\left(\bigotimes_{i=1}^{n}(J_i+hH_i)\right)$\\[.4em]
 Its value and first derivative at $h=0$ agree with those of the true output family.};
\node[proofbox] (finite) at (-5.45,-4.52)
 {Geometric divergence\\[.5em]
  process inequality\\[.5em]
  Theorem~\ref{thm:main}};
\node[proofbox] (local) at (0,-4.52)
 {RLD Fisher information\\[.5em]
  additive bound for $\IR$\\[.5em]
  Theorem~\ref{thm:rld-process}};
\node[proofbox] (error) at (5.45,-4.52)
 {Locally unbiased\\[.5em]
  variance lower bound\\[.5em]
  Corollary~\ref{cor:rld-cramer-rao}};
\draw[wire] (finite.east)--(local.west);
\draw[wire] (local.east)--(error.west);
\node[font=\footnotesize,align=center,inner sep=3pt] at (-2.72,-2.86)
 {multiply by $(\ln2)/h^2$\\[.35em]then take $h\to0$};
\node[font=\footnotesize,align=center,inner sep=3pt] at (2.72,-2.86)
 {measure, then apply\\[.35em]Cauchy--Schwarz};
\end{tikzpicture}
\caption{From the process divergence inequality to an estimation bound.
Three slots of a channel-to-state process are shown.
The proof replaces each channel family by the affine Choi path
$J_i+hH_i$, which agrees with its value and first derivative at
$\theta_0$. When the local RLD informations are finite, these are channels
with fixed support for sufficiently small $h$, as proved in
Theorem~\ref{thm:rld-process}.
The parameter $h$ is a displacement, not physical time; the
left-to-right slot wires do not specify an order between slots.
Solid wires carry quantum systems; the doubled wire carries the
classical measurement outcome, whose probability is
$p_h(x)=\Tr[M_x\omega_h]$ for the auxiliary affine family.
The process and measurement are independent of $h$.
Local unbiasedness is defined in
\eqref{eq:rld-local-unbiasedness}.
The lower arrows represent mathematical implications.
Lemma~\ref{lem:rld-local-limit} identifies the divergence limit
with RLD Fisher information. Since the original and affine output
families have the same value and derivative, the resulting Fisher
information and locally unbiased estimation bounds apply to the
original families. No divergence limit for the original families
at a point of changing support is presumed.}
\label{fig:rld-estimation}
\end{figure}

\subsection{A Cram\'er--Rao bound for general estimation processes}
We now connect the RLD inequality to a measurable estimation
error. A complete estimation protocol prepares its input,
uses the parameter-dependent channels, and produces a final
quantum state $\omega_\theta$. Its preparation, retained
reference systems, and parameter-independent controls can all
be included in a channel-to-state process. A final measurement
then produces a classical distribution. We keep all outcomes;
conditioning on a selected successful outcome would not, by
itself, define a deterministic experiment.

Let $\{M_x\}_{x\in\mathcal X}$ be a finite-outcome positive
operator-valued measurement (POVM), meaning $M_x\geq0$ and
$\sum_xM_x=I$. It is independent of $\theta$, and its
probabilities are
\begin{equation}\label{eq:rld-outcome-probabilities}
p_\theta(x)\coloneqq\Tr[M_x\omega_\theta].
\end{equation}
An estimator is a real number $\widehat\theta(x)$ assigned to
each outcome, independently of the unknown parameter. It is
locally unbiased at $\theta_0$ if
\begin{equation}\label{eq:rld-local-unbiasedness}
\sum_xp_{\theta_0}(x)\widehat\theta(x)=\theta_0,
\qquad
\left.\frac{d}{d\theta}
 \sum_xp_\theta(x)\widehat\theta(x)\right|_{\theta=\theta_0}=1.
\end{equation}
Its mean-square error at that point is
\begin{equation}\label{eq:rld-mse}
\operatorname{MSE}_{\theta_0}(\widehat\theta)
\coloneqq
\sum_xp_{\theta_0}(x)(\widehat\theta(x)-\theta_0)^2.
\end{equation}
The first condition in \eqref{eq:rld-local-unbiasedness} makes
this equal to the estimator's variance at $\theta_0$.

\begin{corollary}[Cram\'er--Rao bound without a definite causal order]
\label{cor:rld-cramer-rao}
Let $n\geq1$ be an integer, and let
$\cN_{i,\theta}\colon A_i\to B_i$, $i\in\{1,\ldots,n\}$,
be finite-dimensional channel families differentiable at an
interior point $\theta_0$ of a real open interval. Let $\Theta$
be a parameter-independent deterministic channel-to-state
process satisfying Definition~\ref{def:process}, and set
\begin{equation}\label{eq:rld-estimation-output}
\omega_\theta\coloneqq
\Theta\!\left(\bigotimes_{i=1}^nJ_{\cN_{i,\theta}}\right).
\end{equation}
For every parameter-independent finite-outcome POVM, its
distribution \eqref{eq:rld-outcome-probabilities} obeys
\begin{equation}\label{eq:rld-measurement-bound}
I_{\mathrm{cl}}(\theta_0;p_\theta)
\leq\IR(\theta_0;\omega_\theta)
\leq\sum_{i=1}^n\IR(\theta_0;\cN_{i,\theta}),
\end{equation}
where the quantities are defined in
\eqref{eq:rld-classical}, \eqref{eq:rld-state}, and
\eqref{eq:rld-channel}. If the sum on the right is finite and
strictly positive, every estimator locally unbiased in the
sense of \eqref{eq:rld-local-unbiasedness} satisfies
\begin{equation}\label{eq:rld-cramer-rao}
\operatorname{MSE}_{\theta_0}(\widehat\theta)
\geq\frac{1}{I_{\mathrm{cl}}(\theta_0;p_\theta)}
\geq\frac{1}{\displaystyle
       \sum_{i=1}^n\IR(\theta_0;\cN_{i,\theta})}.
\end{equation}
If $I_{\mathrm{cl}}(\theta_0;p_\theta)=0$, no locally unbiased
estimator exists for that POVM, regardless of the channel-RLD
sum. In particular, a zero channel-RLD sum rules out every
such estimator. If that sum is infinite, its reciprocal is
zero, with the convention $1/(+\infty)=0$. The classical
inequality $\operatorname{MSE}_{\theta_0}(\widehat\theta)
\geq1/I_{\mathrm{cl}}(\theta_0;p_\theta)$ still holds whenever
a locally unbiased estimator exists.
\end{corollary}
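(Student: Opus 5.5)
The plan has three parts: a classical Fisher bound obtained from the measurement, a channel-to-state reading of Theorem~\ref{thm:rld-process}, and the classical Cram\'er--Rao argument by Cauchy--Schwarz.

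\emph{Second inequality in \eqref{eq:rld-measurement-bound}.} I regard $\Theta$ as a process with one-dimensional external input $C$, so $\omega_\theta$ is the Choi operator of a channel from a trivial system. For such a channel, the probe supremum in \eqref{eq:rld-channel} is just the state RLD information of $\omega_\theta$. One way to see this is that the Choi formula \eqref{eq:rld-choi-formula} with $d_A=1$ reduces to $\Tr[\dot\omega\omega^{-1}\dot\omega]$. Theorem~\ref{thm:rld-process} then gives the second inequality directly, with the process independent of $\theta$ as assumed.

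\emph{First inequality in \eqref{eq:rld-measurement-bound}.} The measurement is the parameter-independent channel $X\mapsto\sum_x\Tr[M_xX]\ket{x}\bra{x}$. Data processing \eqref{eq:rld-state-data-processing} bounds the RLD information of the classical output state by $\IR(\theta_0;\omega_\theta)$. That output is diagonal, so its RLD information is the classical expression \eqref{eq:rld-classical}. The support condition is automatic here: an outcome with $p_{\theta_0}(x)=0$ has $\dot p_{\theta_0}(x)=0$, as remarked after \eqref{eq:rld-classical}. Differentiability of $p_\theta$ follows from that of $\omega_\theta$, which in turn follows from multilinearity of $\Phi$ in the differentiable Choi operators. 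If the input RLD information is infinite, the inequality is trivial.

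\emph{Cram\'er--Rao step.} Differentiating the normalization $\sum_xp_\theta(x)=1$ gives $\sum_x\dot p(x)=0$. Combined with the second condition in \eqref{eq:rld-local-unbiasedness}, this yields
\[
1=\sum_{x:p(x)>0}\dot p(x)\bigl(\widehat\theta(x)-\theta_0\bigr).
\]
Outcomes with $p(x)=0$ drop out because their derivatives vanish. Writing each term as $\bigl(\dot p/\sqrt p\bigr)\cdot\sqrt p\,(\widehat\theta-\theta_0)$ and applying Cauchy--Schwarz gives $1\le I_{\mathrm{cl}}\cdot\operatorname{MSE}$.

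From this single inequality three consequences follow. First, $I_{\mathrm{cl}}=0$ is impossible whenever a locally unbiased estimator exists. Second, when $I_{\mathrm{cl}}>0$, dividing gives the first bound in \eqref{eq:rld-cramer-rao}. Third, chaining with \eqref{eq:rld-measurement-bound} gives the second bound, with the convention $1/(+\infty)=0$. A zero channel-RLD sum forces $I_{\mathrm{cl}}=0$, which rules out every locally unbiased estimator.

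The main care point is the identification of the channel RLD information of a trivial-input output with the state RLD information. I expect to handle it cleanly through \eqref{eq:rld-choi-formula}, together with the fact that the only normalized probe for a trivial input is the trivial one.
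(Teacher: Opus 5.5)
Your proof is correct. The second inequality and the Cram\'er--Rao step are essentially the paper's: Theorem~\ref{thm:rld-process} with a one-dimensional external input, then Cauchy--Schwarz applied to the score identity. The difference is the measurement inequality $I_{\mathrm{cl}}(\theta_0;p_\theta)\leq\IR(\theta_0;\omega_\theta)$.

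You obtain it as a special case of the data-processing inequality \eqref{eq:rld-state-data-processing}, applied to the measurement channel $X\mapsto\sum_x\Tr[M_xX]\ket{x}\bra{x}$. You then identify the RLD information of the diagonal output with \eqref{eq:rld-classical}. The paper instead proves it directly. It applies Hilbert--Schmidt Cauchy--Schwarz, outcome by outcome, to $\omega^{1/2}M_x^{1/2}$ and $\omega^{-1/2}\dot\omega M_x^{1/2}$, which gives $\dot p_{\theta_0}(x)^2\leq p_{\theta_0}(x)\Tr[M_x\dot\omega\omega^{-1}\dot\omega]$, and then sums over outcomes.

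Your route is more conceptual: measured Fisher information is just the RLD information of the classical output, so monotonicity does all the work. It does, however, rest on the order-two perspective inequality of Lemma~\ref{lem:perspective}, through the proof of \eqref{eq:rld-state-data-processing}. The paper's route is elementary and self-contained, and it gets the vanishing derivative at zero-probability outcomes from the same inequality. Because it is pointwise in the effect, it also carries over unchanged to the continuous-outcome POVMs with operator density $m(x)$ treated after the proof. There, a finite-dimensional data-processing inequality into a classical register would not apply directly.

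One small imprecision in your last remark: a trivial-input channel does admit nontrivial probes $\rho_R$, with outputs $\rho_R\otimes\omega_\theta$. These have the same RLD information as $\omega_\theta$, since $\Tr[\rho_R\rho_R^{-1}\rho_R]=1$. Your argument via \eqref{eq:rld-choi-formula} with $d_A=1$ already accounts for them, so nothing breaks.
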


\begin{proof}
The second inequality of \eqref{eq:rld-measurement-bound}
follows from Theorem~\ref{thm:rld-process} with a
one-dimensional external input. For completeness, we prove
the measurement inequality and the classical Cram\'er--Rao
step directly.

The measurement inequality is immediate if
$\IR(\theta_0;\omega_\theta)=+\infty$. Otherwise write
$\omega=\omega_{\theta_0}$ and
$\dot\omega=\left.d\omega_\theta/d\theta\right|_{\theta_0}$.
The support condition in \eqref{eq:rld-state} holds. For each
outcome, apply Hilbert--Schmidt Cauchy--Schwarz to
$A_x=\omega^{1/2}M_x^{1/2}$ and
$B_x=\omega^{-1/2}\dot\omega M_x^{1/2}$. Since
$\omega^{1/2}\omega^{-1/2}\dot\omega=\dot\omega$, this gives
\begin{align}
\dot p_{\theta_0}(x)^2
&=\left|\Tr[A_x^\dagger B_x]\right|^2
\label{eq:rld-measurement-cs:line-1}\\
&\leq\Tr[A_x^\dagger A_x]\Tr[B_x^\dagger B_x]
\label{eq:rld-measurement-cs:line-2}\\
&=p_{\theta_0}(x)
  \Tr[M_x\dot\omega\omega^{-1}\dot\omega].
\label{eq:rld-measurement-cs}
\end{align}
An outcome with zero probability therefore has zero derivative.
Dividing by each positive probability, summing, and using
$\sum_xM_x=I$ proves the first inequality in
\eqref{eq:rld-measurement-bound}.

The following classical argument also applies if the output RLD
information is infinite: at an interior point, a zero probability
has zero first derivative by nonnegativity on both sides, as
explained after \eqref{eq:rld-classical}.
Set $g_x=\widehat\theta(x)-\theta_0$ and define the score
$s_x=\dot p_{\theta_0}(x)/p_{\theta_0}(x)$ for positive
probabilities. Normalization implies $\sum_x\dot
p_{\theta_0}(x)=0$. Local unbiasedness gives
\begin{equation}\label{eq:rld-score-identity}
1=\sum_xg_x\dot p_{\theta_0}(x)
 =\sum_{x:p_{\theta_0}(x)>0}
    p_{\theta_0}(x)g_xs_x.
\end{equation}
Cauchy--Schwarz for this weighted sum now yields
\begin{equation}\label{eq:rld-classical-cr-proof}
1\leq
\left(\sum_xp_{\theta_0}(x)g_x^2\right)
\left(\sum_{x:p_{\theta_0}(x)>0}
 p_{\theta_0}(x)s_x^2\right)
=\operatorname{MSE}_{\theta_0}(\widehat\theta)\,
 I_{\mathrm{cl}}(\theta_0;p_\theta).
\end{equation}
Together with \eqref{eq:rld-measurement-bound}, this proves
\eqref{eq:rld-cramer-rao}. If the total local RLD information
is zero, \eqref{eq:rld-measurement-bound} forces every score
to vanish, contradicting \eqref{eq:rld-score-identity}.
A zero measured Fisher information contradicts
\eqref{eq:rld-score-identity} for that POVM. If the channel-RLD
sum is infinite, only its reciprocal becomes trivial; the
classical measurement-dependent lower bound remains valid.
\end{proof}

The same proof covers regular estimators for a POVM $M$ on a
general outcome space. In finite dimension, set
$\mu(E)\coloneqq\Tr M(E)$. This is a finite measure. Positivity
implies that $\mu(E)=0$ forces $M(E)=0$, so every matrix-entry
measure is absolutely continuous with respect to $\mu$.
The scalar Radon--Nikodym theorem, applied to the finitely many
entries, gives an operator-valued density
$M(dx)=m(x)\mu(dx)$. Testing its quadratic forms on a countable
dense set of vectors shows that $m(x)\geq0$ almost everywhere.
The trace and normalization give
\begin{equation}\label{eq:rld-povm-density}
\Tr m(x)=1\quad\mu\text{-almost everywhere},\qquad
\int m(x)\,\mu(dx)=I.
\end{equation}
Consequently $p_\theta(x)=\Tr[m(x)\omega_\theta]$ is a
normalized probability density, and
$\dot p_{\theta_0}(x)=\Tr[m(x)\dot\omega]$ is its derivative. Local
unbiasedness means \eqref{eq:rld-local-unbiasedness} with sums
replaced by integrals against $\mu$. Assume that the estimator
has finite second moment and that the derivative of its mean
can be passed under the integral. Applying
\eqref{eq:rld-measurement-cs:line-1}--\eqref{eq:rld-measurement-cs} pointwise with $m(x)$ in place
of $M_x$, then integrating and using
\eqref{eq:rld-povm-density}, proves the measurement bound.
The integral form of Cauchy--Schwarz gives
\eqref{eq:rld-classical-cr-proof} under the stated assumptions.
Thus the same error bound holds for these continuous-outcome
protocols as well.

For $n$ uses of one channel family,
\eqref{eq:rld-cramer-rao} becomes
\begin{equation}\label{eq:rld-identical-cr}
\operatorname{MSE}_{\theta_0}(\widehat\theta)
\geq\frac{1}{n\IR(\theta_0;\cN_\theta)},
\qquad 0<\IR(\theta_0;\cN_\theta)<+\infty.
\end{equation}
In particular, the root-mean-square error cannot decrease faster
than order $n^{-1/2}$ within this locally unbiased setting when
the single-use RLD information is fixed and finite.
This is a sufficient condition for such a scaling bound, not
a statement that the RLD constant is always attainable.
For a unitary-channel family with a nonzero Choi derivative at
$\theta_0$, the RLD information is infinite. Indeed, the Choi
operator has rank one. A Hermitian derivative supported on it
would be proportional to that Choi operator, and the zero
output partial trace of the derivative forces the proportionality
constant to vanish. Thus a nonzero derivative violates the
finite-RLD support condition. In this case the reciprocal RLD
bound is zero. The finite-RLD hypothesis cannot be removed
from the scaling conclusion.

\subsection{A tight example: estimating an erasure probability}
The erasure family of Section~\ref{sec:examples} provides a
transparent example in which the estimation bound is attained.
Let $A$ have dimension $d$, let $B=A\oplus\operatorname{span}
\{\ket e\}$, and let the unknown parameter be the erasure
probability $p\in(0,1)$. The channel is
\begin{equation}\label{eq:rld-erasure-channel}
\mathcal E_p(X)
\coloneqq(1-p)X\oplus p\Tr(X)\ket e\bra e.
\end{equation}
For an arbitrary, possibly mixed, parameter-independent probe $\psi_{RA}$,
its output and derivative have orthogonal blocks:
\begin{align}
\omega_p
&=(1-p)\psi_{RA}\oplus p\psi_R\otimes\ket e\bra e,
\label{eq:rld-erasure-output}\\
\frac{d\omega_p}{dp}
&=-\psi_{RA}\oplus\psi_R\otimes\ket e\bra e.
\label{eq:rld-erasure-derivative}
\end{align}
Here the first block is embedded into the non-erased output
subspace. Evaluating the inverse separately on the two supports
gives, including for singular probes,
\begin{align}
\IR(p;\omega_p)
&=\frac{\Tr[\psi_{RA}\psi_{RA}^{-1}\psi_{RA}]}{1-p}
  +\frac{\Tr[\psi_R\psi_R^{-1}\psi_R]}{p}
\label{eq:rld-erasure-information:line-1}\\
&=\frac{1}{1-p}+\frac{1}{p}
 =\frac{1}{p(1-p)}.
\label{eq:rld-erasure-information}
\end{align}
Every probe gives the same value, so
$\IR(p;\mathcal E_p)=1/[p(1-p)]$ by
\eqref{eq:rld-channel}. Corollary~\ref{cor:rld-cramer-rao}
therefore implies
\begin{equation}\label{eq:rld-erasure-error}
\operatorname{MSE}_{p}(\widehat p)\geq\frac{p(1-p)}{n}
\end{equation}
for every locally unbiased estimator based on an allowed
$n$-slot process.

To attain this bound, use the channels independently in parallel
and measure whether each output lies in the erasure subspace.
The erasure indicators $X_1,\ldots,X_n$ are independent
Bernoulli random variables satisfying
$\mathbb E_p[X_i]=p$ and
$\operatorname{Var}_p(X_i)=p(1-p)$. The empirical frequency
$\widehat p=n^{-1}\sum_{i=1}^nX_i$ is unbiased for every
$p\in(0,1)$, and independence gives
\begin{equation}\label{eq:rld-erasure-attainment}
\operatorname{MSE}_p(\widehat p)
=\operatorname{Var}_p(\widehat p)
=\frac{1}{n^2}\sum_{i=1}^n\operatorname{Var}_p(X_i)
=\frac{p(1-p)}{n}.
\end{equation}
Thus, for estimating this noise parameter, a parallel protocol
already attains the lower bound valid for every deterministic
process in Definition~\ref{def:process}. The example also
illustrates why singular Choi operators should be included:
the erasure-channel Choi operator is generally singular, yet
its RLD information is finite at every $p\in(0,1)$.

\section{Conclusion}\label{sec:conclusion}

\subsection{Summary}
We proved a geometric R\'enyi divergence inequality for deterministic
completely positive processes with independently replaceable channel
slots. The bound holds for an arbitrary number of slots and for
$\alpha\in(0,1)\cup(1,2]$, without an assumed causal order.
The optimized output geometric divergence is exactly the sum
of the individual channel divergences, with parallel probes
approaching the supremum.

The principal technical ingredient is the normalization bound
for positive local Choi operators. Completion to a channel
controls the largest input-effect eigenvalues. Reciprocal
normalization, the complex affine span of channels, and
Cauchy--Schwarz control the smallest eigenvalues. The latter
step supplies the missing direction needed below order one.
The proof remains within finite-dimensional operator algebra
and handles singular Choi operators through the appropriate
perspective extensions.

Applications include finite-use discrimination bounds, a
strong-converse estimate, symmetric-error bounds, and a
Belavkin--Staszewski channel-radius bound for exclusion.
For communication and resource generation, the same inequality
combines with a comparison test. The resulting capacity bounds
apply to process classes that preserve the specified comparison
outputs. This preservation requirement is stated separately
from deterministic normalization.

We also established an additive RLD Fisher-information inequality
for differentiable channel families under every fixed process in
our class. The proof passes from the geometric inequality at
order two to a local quadratic coefficient. Auxiliary affine
channel paths preserve the relevant derivatives while making
the support conditions explicit, so the theorem includes
singular Choi operators and does not assume constant support
for the original families. The resulting Cram\'er--Rao bound
constrains locally unbiased estimation without a definite causal
order. Counting erasure flags attains the bound for the erasure
probability, while general families may have a gap between
RLD information and achievable measurement information.

\subsection{Future directions}
Theorem~\ref{thm:normalization} may be useful for other channel quantities
expressed through positive operator perspectives. Identifying
which perspectives yield an additive finite-use divergence
bound requires both their behavior under completely positive
transformations and a suitable channel formula. The present
argument identifies these two requirements explicitly.

For estimation, an important question is when the RLD converse
captures the optimal measurement precision and when a different
quantum Fisher information gives a smaller upper bound on
measurement information. The finite-use RLD inequality can be
compared with the process optimizations and bounds in
\cite{LiuHuYuanYang2023,WeiLiPang2026}. Extending the present
scalar-parameter analysis to multiparameter estimation would
also require accounting for the simultaneous attainability of
measurements; a scalar divergence Hessian alone does not
resolve that issue.

The capacity applications motivate a structural question:
which processes without definite order preserve a given
comparison class? The answer depends on the resource and
the physical ownership of the input, output, control, and
memory systems. A characterization for the Rains comparisons
in \eqref{eq:rains-preservation}, or for the no-communication
condition in \eqref{eq:replacer-preservation}, would turn the
conditional operational classes considered here into more
explicit families of protocols.

Quantum reading offers another application to investigate. Here
the stored message selects a sequence of channels from a memory
cell, and the reader's operations are fixed independently of
that message. Singh, Huang, and Wilde derive an adaptive
energy-constrained reading converse and an unconstrained
Belavkin--Staszewski channel-radius bound
\cite[Theorem 1 and Corollary 1]{SinghHuangWilde2026}.
Comparing every stored codeword with one message-independent
tuple of channels suggests a connection with the process
inequality proved here. The useful next question is how energy
constraints and restrictions on the reader's physical operations
should enter this comparison when a causal order is not fixed.

An infinite-dimensional extension would require additional
normalization arguments. For sequential discrimination, Huang,
Lami, Singh, and Wilde proved a chain rule for the
Belavkin--Staszewski divergence under an energy constraint
\cite[Appendix B.2, Lemma 2]{HuangLamiSinghWilde2026}.
This result supplies a concrete comparison for an
energy-constrained extension of our theorem. In the present
normalization proof, however, inverse input effects may become
unbounded, and the finite-dimensional affine-interior argument
does not apply verbatim. A truncation approach would need
approximation errors controlled uniformly over the allowed
processes. The cited sequential chain rule alone does not
provide these additional estimates.

\section*{Acknowledgements}

ZH is supported by an ARC DECRA Fellowship (DE230100144) ``Quantum-enabled super-resolution
imaging'' and an RMIT Vice Chancellor's Senior Research Fellowship.
MMW acknowledges support from the Cornell University School of Electrical and Computer Engineering.

\section*{Statement of AI assistance}
ChatGPT assisted with proof formulation and exploration, exposition, literature checks, figure preparation, and numerical tests during the development of this manuscript. The checks performed during this assistance were internal reviews, not independent external peer review or formal verification. Responsibility for any submitted version and its claims rests with the authors.

\begingroup
\raggedright
\interlinepenalty=10000
\renewcommand{\bibsection}{%
  \section*{\refname}%
  \addcontentsline{toc}{section}{\refname}%
}
\bibliographystyle{unsrtnat}
\bibliography{references}
\endgroup
\appendix
\section{The Choi formula and approximating probes}\label{app:choi}
This appendix explains the established formula
\eqref{eq:channel-choi-formula} in a way that also checks the
optimization convention used in Corollary~\ref{cor:collapse}.
The argument is due to the channel-formula proofs
in~\cite{FangFawzi2021,KatariyaWilde2021}.
Fix a channel $\cN\colon A\to B$, a completely positive
transformation $\cM\colon A\to B$, and
$\alpha\in(0,1)\cup(1,2]$, with finite-dimensional $A,B$.
When $\alpha>1$, first assume $J_{\cN}\ll J_{\cM}$; the
unsupported case is treated after \eqref{eq:probe-approximation}.
We use the pure-probe reduction \eqref{eq:channel-pure-reduction}.

An arbitrary pure probe with reference dimension $d_A$ can be
written
\begin{equation}\label{eq:probe-filter}
\ket{\psi}_{AA'}=(Z_A\otimes I_{A'})\ket{\Gamma}_{AA'},
\qquad \Tr(Z^\dagger Z)=1.
\end{equation}
The two output operators are
$(Z\otimes I_B)J_{\cN}(Z^\dagger\otimes I_B)$ and the
corresponding expression for $\cM$.
If $Z$ is invertible, the perspective is exactly covariant
under this congruence for the pairs permitted in
Lemma~\ref{lem:perspective}, in particular with $X\ll Y$
when $1<\alpha\leq2$:
\begin{equation}\label{eq:invertible-congruence}
P_\alpha(LXL^\dagger,LYL^\dagger)
=LP_\alpha(X,Y)L^\dagger,\qquad L=Z\otimes I_B.
\end{equation}
Indeed, apply Lemma~\ref{lem:perspective} to the completely
positive congruence $\Psi_L(Z)=LZL^\dagger$ for one direction,
and apply it to the inverse congruence for the opposite direction.
The singular mean extension below one is compatible with
this identity.

Let $K=\Tr_BP_\alpha(J_{\cN},J_{\cM})$.
For the probe in \eqref{eq:probe-filter} with invertible $Z$,
\begin{equation}\label{eq:probe-moment}
\Qh_\alpha(\psi)
\coloneqq
\Tr P_\alpha\!\left(
(\id\otimes\cN)(\psi),(\id\otimes\cM)(\psi)\right)
=\Tr(Z^\dagger ZK).
\end{equation}
For a singular $Z$, the same lemma applied to $\Psi_L$ implies
$\Qh_\alpha(\psi)\leq\Tr(Z^\dagger ZK)$ above one and
$\Qh_\alpha(\psi)\geq\Tr(Z^\dagger ZK)$ below one.
Since the logarithmic prefactor is respectively positive
and negative, both statements bound the divergence by the
corresponding line of \eqref{eq:channel-choi-formula}.

To approach the bound, let $\ket{v}$ be a unit eigenvector of $K$,
corresponding to its largest eigenvalue above one and its smallest
eigenvalue below one. Set
\begin{equation}\label{eq:probe-approximation}
\tau_\eta=(1-\eta)\ket{v}\bra{v}+\eta I_A/d_A,
\qquad Z_\eta=\tau_\eta^{1/2},
\qquad 0<\eta<1.
\end{equation}
Then $Z_\eta$ is invertible and
$\Tr(\tau_\eta K)$ converges to the relevant eigenvalue.
Here $\tau_\eta$ is a density operator on the Choi input leg;
the physical channel-input marginal of the probe is
$\tau_\eta^{\transpose}$ in the fixed basis.
Equation~\eqref{eq:probe-moment} proves the required supremum.
If the smallest eigenvalue is zero, these moments approach
zero and the below-one divergence becomes unbounded; a
maximally entangled probe need not itself have infinite
divergence in this case.

For $\alpha>1$, failure of $J_{\cN}\ll J_{\cM}$ is detected
by the normalized maximally entangled probe and gives infinite
divergence.  In every other case, the preceding proof uses positivity
of $J_{\cM}$ and does not require its partial trace to be the
identity.  Thus the Choi formula applies directly to an arbitrary
completely positive comparison $\cM$; trace-nonincreasing
comparisons are included.  The scalar-rescaling identity is
\begin{equation}\label{eq:comparison-scaling}
\Dh_\alpha(\cN\|c\cM)
=\Dh_\alpha(\cN\|\cM)-\log c,
\qquad c>0.
\end{equation}
It follows from
$P_\alpha(X,cY)=c^{1-\alpha}P_\alpha(X,Y)$, which holds
directly for supported pairs.  Below one, factoring out $c$ from
the regularized second argument reduces the input pair to
$(X+\delta I,Y+\delta I/c)$.  Adding
$\min\{1,1/c\}\delta I$ or $\max\{1,1/c\}\delta I$ to both
arguments bounds this pair from below and above; monotonicity
then gives the same limiting perspective.
No optimizer-attainment assumption is needed.

\paragraph{The order-one limit.}
The same Choi representation gives a direct proof of the channel
endpoint \eqref{eq:bs-limit}. This proof keeps the channel pair fixed
and avoids moving an optimization through a limit. Let
$\cN,\cM\colon A\to B$ be channels satisfying
$X\coloneqq J_{\cN}\ll Y\coloneqq J_{\cM}$, and set
$T=Y^{-1/2}XY^{-1/2}$ on $\supp Y$. Define
\begin{equation}\label{eq:bs-choi-derivative}
Q\coloneqq\Tr_B\!\left[Y^{1/2}T\log T\,Y^{1/2}\right],
\qquad 0\log0\coloneqq0.
\end{equation}
This is a Hermitian operator on the Choi input system $A$.
For $t>0$ tending to zero, apply the scalar Taylor expansion to each
nonzero eigenvalue of $T$; its zero eigenvalues contribute zero
throughout. Finite dimensionality then gives, in operator norm,
\begin{equation}\label{eq:bs-choi-expansion}
K(t)\coloneqq\Tr_BP_{1+t}(X,Y)
=I_A+t(\ln2)Q+O(t^2).
\end{equation}
The constant term is $I_A$ because $Y^{1/2}TY^{1/2}=X$ and
$\Tr_BX=I_A$ for the channel $\cN$. The factor $\ln2$ converts the base-two logarithm in
\eqref{eq:bs-choi-derivative} to the derivative of the power.
To justify the largest-eigenvalue step explicitly, set
$E(t)=K(t)-I_A-t(\ln2)Q$. The variational characterization gives
\begin{equation}\label{eq:bs-eigenvalue-remainder}
\left|\lambda_{\max}(K(t))-
\bigl(1+t(\ln2)\lambda_{\max}(Q)\bigr)\right|
\leq\norm{E(t)}=O(t^2).
\end{equation}
Since $K(t)\geq0$, its operator norm equals its largest eigenvalue.
This estimate does not require the largest eigenvalue of $Q$ to
be simple. Consequently, the Choi formula yields
\begin{equation}\label{eq:bs-choi-limit}
\lim_{t\downarrow0}\Dh_{1+t}(\cN\|\cM)
=\lim_{t\downarrow0}\frac1t\log\norm{K(t)}
=\lambda_{\max}(Q).
\end{equation}

It remains to identify this eigenvalue with the channel divergence
in \eqref{eq:bs-channel}. For every fixed probe, its geometric
divergence is bounded by $\Dh_{1+t}(\cN\|\cM)$. Taking the
fixed-state limit \eqref{eq:bs-state-limit} and then using
\eqref{eq:bs-choi-limit} bounds its Belavkin--Staszewski divergence
by $\lambda_{\max}(Q)$. The pure-probe reduction
\eqref{eq:bs-channel-pure-reduction} gives the same upper bound
on the channel supremum. Conversely, for an invertible probe filter
$Z$ and $\tau=Z^\dagger Z$, equations
\eqref{eq:probe-moment} and \eqref{eq:bs-choi-expansion} give the
fixed-probe moment $1+t(\ln2)\Tr(\tau Q)+O(t^2)$.
Its Belavkin--Staszewski divergence is therefore $\Tr(\tau Q)$.
Choose $\tau_\eta$ from \eqref{eq:probe-approximation}, now taking
$\ket v$ to be a unit eigenvector corresponding to the largest
eigenvalue of $Q$.
These full-rank probes approach $\lambda_{\max}(Q)$, proving
$\Dh(\cN\|\cM)=\lambda_{\max}(Q)$ and hence
\eqref{eq:bs-limit}. If Choi support inclusion fails, the normalized
maximally entangled probe gives infinite divergence for every
$\alpha>1$ and for the Belavkin--Staszewski divergence, completing
that case as well.

\end{document}